\documentclass[reprint,jmp,superscriptaddress,showpacs,aip,onecolumn,10pt]{revtex4-1}

\advance\textwidth by -4cm
\advance\hoffset by 1.8cm
\advance\textheight by -0.5cm
\advance\voffset by 1cm

%%%%%%%%%%%%%%%%%%%%%%%%%%%%%%%%%%%%%%%%%%%%%%%

\usepackage{amsmath}
\usepackage{amssymb}
\usepackage{graphicx}
\usepackage{color}

\usepackage{amsthm}

\usepackage{hyperref}

\newtheorem{thm}{Theorem}

\newtheorem{lem}[thm]{Lemma}
\newtheorem{proposition}[thm]{Proposition}
\newtheorem{corollary}[thm]{Corollary}
\newtheorem{example}[thm]{Example}
\newtheorem{remark}[thm]{Remark}

\newcommand{\ip}[2]{\left\langle\,#1\,|\,#2\,\right\rangle}
\newcommand{\kb}[2]{|#1\,\rangle\langle\,#2|}
\newcommand{\tr}[1]{\mathrm{tr}\left[{#1}\right]}

\newcommand{\h}[1]{\mathcal{#1}}

\def\idty{{\leavevmode\rm 1\mkern -5.4mu I}} %  unit operator

\def\Rl{{\mathbb R}}
\def\Ir{{\mathbb Z}}\def\Nl{{\mathbb N}}

\let\eps\varepsilon

\def\norm #1{\Vert #1\Vert}

 % Chris's request

\def\supp{{\mathop{\rm supp}\nolimits}\,}

\def\braket#1#2{\langle #1,#2\rangle}
\def\brAket#1#2{\langle #1\vert#2\rangle}
\def\brAAket#1#2#3{\langle#1\vert#2\vert#3\rangle}

\def\ket #1{\vert#1\rangle}
\def\ketbra #1#2{{\vert#1\rangle\langle#2\vert}}
\def\kettbra#1{\ketbra{#1}{#1}}

\def\tr{\mathop{\rm tr}\nolimits}
\def\abs#1{\vert#1\vert}

\def\Ell{{\mathcal L}}

\let\veps\varepsilon

% Letters
\def\AA{{\mathcal A}}\def\HH{{\mathcal H}}\def\LL{{\mathcal L}}
\def\MM{{\mathcal M}}\def\NN{\mathcal N}
\def\Mav{\overline{M}} % averaged Obs
\def\Airy{{\rm Ai}}

\def\trcl{{\mathfrak T}}
\def\qp{{(q,p)}}
\def\Co{{\mathcal C}_0(\Rl^2)}
\def\Calt{\AA} % alternative algebra giving compactification
\def\sab{_{\alpha\mkern1mu\beta}} %subscript \alpha \beta
\def\cube{[0,1]^I}
\def\cubo{\cube_{\mathrm obs}}

\begin{document}
\title{Measurement uncertainty relations}

\author{Paul Busch}
\email{paul.busch@york.ac.uk}
\affiliation{Department of Mathematics, University of York, York, United Kingdom}

\author{Pekka Lahti}
\email{pekka.lahti@utu.fi}
\affiliation{Turku Centre for Quantum Physics, Department of Physics and Astronomy, University of Turku, FI-20014 Turku, Finland}

\author{Reinhard F. Werner}
\email{reinhard.werner@itp.uni-hannover.de}
\affiliation{Institut f\"ur Theoretische Physik, Leibniz Universit\"at, Hannover, Germany}

\date{29 March 2014 (Revised Version)}
\begin{abstract}\narrower{\small
Measurement uncertainty relations are quantitative bounds on the errors in an approximate joint measurement of two observables. They can be seen as a generalization of the error/disturbance tradeoff first discussed heuristically by Heisenberg. Here we prove such relations for the case of two canonically conjugate observables like position and momentum, and establish a close connection with the more familiar preparation uncertainty relations constraining the sharpness of the distributions of the two observables in the same state. Both sets of relations are generalized to means of order $\alpha$ rather than the usual quadratic means, and we show that the optimal constants are the same for preparation and for measurement uncertainty. The constants are determined numerically and compared with some bounds in the literature. In both cases the near-saturation of the inequalities entails that the state (resp. observable) is uniformly close to a minimizing one.\\

\noindent{\small Published in:  {\em J. Math. Phys.} {\bf 55} (2014) 042111. 
\href{http://dx.doi.org/10.1063/1.4871444}{DOI:10.1063/1.4871444}}

}
\end{abstract}

\pacs{03.65.Ta, % Foundations of quantum mechanics; measurement theory
      03.65.Db, % Functional analytical methods in QM
      03.67.-a 	% Quantum information
}
\maketitle

\section{Introduction}
Following Heisenberg's ground-breaking paper \cite{Heisenberg1927} from 1927 uncertainty relations have become an indispensable tool of quantum mechanics. Often they are used in a qualitative heuristic way rather than in the form of proven inequalities. The paradigm of the latter kind is the Kennard-Robertson-Weyl inequality\cite{Kennard,Robertson,Weyl}, which is proved in virtually every quantum mechanics course. This relation shows that by turning the uncertainty relations into a theorem, one also reaches a higher level of conceptual precision. Heisenberg talks rather vaguely of quantities being ``known'' to some precision. With Kennard, Robertson and Weyl we are given a precise physical setting: a position and a momentum measurement applied to distinct instances of the same preparation (state), with the uncertainties interpreted as the root of the variances of the probability distributions obtained. However, this mathematical elucidation does not cover all quantitative aspects of uncertainty. There have been several papers in recent years formalizing and proving further instances of quantitative uncertainty. This is perhaps part of a trend which has become necessary as more and more experiments approach the uncertainty-limited regime. Moreover, uncertainty relations play an important role in some proofs of quantum cryptographic security.  For a review of uncertainty up to 2006 we recommend \cite{BuHeLa07}.

The Kennard-Robertson-Weyl inequality can be modified in several ways: On the one hand, we can stick to the same physical scenario, and apply different definitions of ``spread'' of a probability distribution. This is one of the routes taken in this paper: We replace the quadratic mean by one based on powers $\alpha$ and $\beta$ for the position and the momentum distributions, respectively. One could go further here and introduce measures of ``peakedness'' for probability distributions such as entropies \cite{Hirschman}. Our main goal, however, is a modification of the basic scenario, much closer to the original discussion of Heisenberg. In his discussion of the $\gamma$-ray microscope the uncertainty relations concern the resolution of the microscope and the momentum kick imparted by the Compton scattering. Due to this momentum kick a momentum measurement after the observation gives different results from a direct momentum measurement. We generalize this scenario by taking the microscope with a subsequent momentum measurement as a ``phase space measurement''. i.e., as an observable which produces a position value and a momentum value in every single shot. Measurement uncertainty relations then constrain the accuracy of the marginal measurements: the resolution of the microscope is a benchmark parameter for the position output of the device, and the deviation from an  ideal position measurement. Similarly, the disturbance is a quantity characterizing the accuracy of the momentum output, once again as compared to an ideal momentum measurement. Then, by definition, a measurement uncertainty relation is an inequality implying that these two error quantities cannot both be small. It has been disputed recently in a series of papers \cite{Ozawa03,Ozawa04,Ozawa05,Erh12,Roz12} that such a relation holds. We believe that this claim is based on badly chosen definitions of uncertainty, and we will give a detailed critique of these claims elsewhere \cite{BLW2013a}.

We offer two ways of quantifying the error quantities in the measurement uncertainty relation: on the one hand by a calibration process, based on worst case deviations of the output distribution, when the device is presented with states of known and sharp position (resp.\ momentum). On the other hand we introduce a distance of observables based on transportation metrics. In both definitions a power $\alpha$ can naturally be used. Thus for preparation uncertainty as well as both definitions of errors for measurement uncertainty we will prove relations of the form
\begin{equation}\label{URab}
  \Delta_\alpha(P)\Delta_\beta(Q)\geq c\sab\, \hbar,
\end{equation}
of course with quite different interpretations of the $\Delta$-quantities. The constants $c\sab$, however, will be the same and best possible in all three cases. The cases of equality can be characterised precisely (they depend on $\alpha$ and $\beta$). Moreover, there is a second constant $c'\sab$ such that an uncertainty product $u$ with $c\sab\hbar\leq u<c'\sab\hbar$ implies that the state (in the case of preparation uncertainty) or the observable (in the case of measurement uncertainty) is uniformly close to one with strictly minimal uncertainty, with an error bound going to zero as \eqref{URab} becomes sharp.

The basic idea of measurement uncertainty and the idea of calibration errors was presented, together with a sketch of the proof, in a recent letter \cite{BLW2013c}. Here we give a full version of the proof. At the same time we lift the restriction $\alpha=\beta=2$, thereby covering also the previously studied case $\alpha=\beta=1$ based only on the metric uncertainty definition \cite{Werner04}.

Our paper is organized as follows: In Section \ref{sec:obs} we review the concept of an observable as an object that assigns
to all states the probability measures for the outcomes of a given measurement. We use this language to describe the idea of an approximate joint measurement of noncommuting observables, such as position and momentum, for which the traditional perspective of an observable as a selfadjoint operator is inadequate. We then recall the definition and some relevant properties of a covariant phase space observable, which constitutes a fundamentally important special case of an approximate joint measurement.  Section \ref{sec:errors} presents measures of measurement errors, understood as the difference between the observable actually
measured in a measurement scheme and the target observable. This difference is quantified in terms of the so-called
Wasserstein distance of order $\alpha$ for probability measures.  Maximising this distance over all pairs of probability distributions
of the estimator and target observables for the same state  yields a metric distance between observables.
If the maximization is performed over calibrating states only, we obtain a measure of calibration error for observables.

In Section \ref{sec:QPUR} we formulate and prove our main result -- the joint measurement and error-disturbance relations
for position and momentum. These inequalities are obtained as consequences of preparation uncertainty relations for these quantities,
where the usual product of standard deviations is replaced with more general choices of $\alpha$-deviations for the position
and momentum, respectively. The tight lower bound, which will be determined explicitly and investigated numerically in
Section \ref{sec:constants}, is given by Planck's constant $\hbar$ multiplied by a constant that depends on the choice of deviation measure.
In Section \ref{sec:extensions} some possible extensions and generalizations are briefly discussed.
Section \ref{sec:conclusion} concludes with a summary and outlook.

\section{Observables}\label{sec:obs}
The setting of this paper is standard quantum mechanics of a single canonical degree of freedom. In this section we fix some notation and terminology.

An {\it observable} is the mathematical object describing a measuring device. This description must provide, for every input state given by a density operator $\rho$, the probability distribution of the measurement outcomes. The set $\Omega$ of possible outcomes is part of the basic description of the device, and in order to talk about probabilities it must come equipped with a $\sigma$-algebra of ``measurable'' subsets of $\Omega$, which we suppress in the notation. The only cases needed in this paper are $\Omega=\Rl$ (position or momentum) and $\Rl^2$ (phase space), or subspaces thereof, each with the Borel $\sigma$-algebra. For the observable $F$ we denote the outcome probability measure on $\Omega$ in the state $\rho$ by $F_\rho$. Since, for every measurable $X\subset\Omega$, $\rho\mapsto F_\rho(X)$ must be a bounded linear functional, there is a positive operator $F(X)$ such that $F_\rho(X)=\tr\rho F(X)$. The measure property of $F_\rho$ then implies an operator version thereof, i.e., for a sequence of disjoint $X_i$ we have $F(\bigcup X_i)=\sum_iF(X_i)$, where the sum converges in the weak, strong and ultraweak operator topologies. We take all observables to be normalized, i.e., $F(\Omega)=\idty$. An observable is thus (given by) a normalized positive operator valued measure on the outcome space of a measurement.

When all the operators $F(X)$ are projection operators we say that $F$ is a {\it sharp} observable. The prime example is the unique spectral measure on $\Rl$ associated with a selfadjoint operator $A$, which we denote by $E^A$, and which is uniquely determined by $A$ through the resolution formula $A=\int x\,dE^A(x)$. Most textbooks use the term ``observable'' synonymously with ``selfadjoint operator'' and go on to explain how to  determine the outcome probabilities by using the spectral measure $E^A$ or, equivalently, how expectation values of functions of the outcomes are to be computed as the expectations of functions of the operator in the functional calculus. For the purposes of this paper (and many other purposes) this view is too narrow, since the phase space observables describing an approximate joint measurement of position and momentum cannot be sharp. However, the ``ideal'' position and momentum measurements will be described by the usual sharp observables $E^Q$ and $E^P$.

The principal object we study are joint measurements of position and momentum. These are observables with two real valued outcomes (i.e., $\Omega=\Rl^2$), where the first outcome is called the position outcome and the second the momentum outcome. These labels have no significance, except that we will compare the first outcomes with those of a standard position measurement and the second to those of a standard momentum observable. More precisely, we denote by $M^Q$ the first ``position'' marginal of the observable (i.e., $M^Q(X)=M(X\times\Rl)$) and ask to what extent we can have $M^Q\approx E^Q$, and at the same time $M^P\approx E^P$ for the second marginal $M^P$. The precise interpretation of the approximation will be discussed in Sect.~\ref{sec:distobs}.  This rather abstract approach covers many concrete implementations, including, of course, the scenario of Heisenberg's microscope, where an approximate position measurement is followed by a standard momentum measurement. The quality of the approximation $M^Q\approx E^Q$ is then quantified by the ``error'' of the measurement or, put positively, by the resolution of the microscope. The approximation $M^P\approx E^P$ compares the momentum after the measurement with the direct momentum measurement, i.e., the measurement without the microscope. The approximation error here quantifies the ``momentum disturbance''. However, we emphasize that the joint measurement need not  be constructed in this simple way. For example, we could make any suitable measurement on the particle after the position measurement, designed to make the approximation  $M^P\approx E^P$ as good as possible. For this we could use the position outputs and everything we know about the construction of the microscope, correcting as much as possible the systematic errors introduced by this device. So ``momentum disturbance'' is not just a question whether the direct momentum measurement after the microscope still gives a good result, but whether there is any way at all of retrieving the momentum from the post-measurement state. Needless to say, the joint measurement perspective also restores the symmetry between position and momentum, i.e., the results apply equally to an approximate momentum measurement disturbing the position information. In any case, the results will be quantitative bounds expressing that $M^Q\approx E^Q$ and $M^P\approx E^P$ cannot both hold with high accuracy.

\subsection{Covariant Phase space observables}\label{sec:covar}
Covariant phase space observables are of fundamental importance for this study. Though extensively studied in the literature  we also recall briefly  their definitions and characteristic properties. For details, see, for instance \cite{Davies,Holevo,QHA,Husimi,Cassinelli2003,KLY2006}.

In his famous paper Heisenberg announced that he would show ``a direct mathematical connection'' between the uncertainty relation  and the commutation relations of position and momentum (and henceforth forgets this announcement). Of course, this is what we will do. Due to von Neumann's uniqueness theorem for the commutation relations we may as well start from the usual form of the operators: $Q$ is the operator of multiplication by $x$ on ${\mathcal L}^2(\Rl,dx)$, and $P$ is the differentiation operator $P\psi=-i\psi'$. Here and in the sequel we will set $\hbar=1$. The joint translation by $q$ in position and by $p$ in momentum are implemented  by the Weyl operators (also known as Glauber translations) $W\qp=\exp(ipQ-iqP)$ acting explicitly as
\begin{equation}\label{Weylop}
 (W(q,p)\psi)(x)=e^{\textstyle -{i}{qp}/2+ipx}\, \psi(x-q).
\end{equation}
Of course, these commute only up to phases, which are again equivalent expression of the commutation relations.

A {\it covariant observable} is defined as an observable $M$ with phase space outcomes ($\Omega=\Rl^2$) such that, for any measurable $Z\subset\Rl^2$, 
\begin{equation}\label{cov}
  M\bigl(Z-\qp\bigr)=W\qp^*M(Z)W\qp.
\end{equation}
This implies \cite{Holevo,QHA,Cassinelli2003,KLY2006} that the measure $M$ has an operator valued density with respect to Lebesgue measure, that the densities at different points are connected by the appropriate phase space translations, and that the density at the origin is actually itself a density operator $\sigma$, i.e., a positive operator with unit trace. Explicitly we get the formula
\begin{equation}\label{povm}
  M(Z)=\int_{\qp\in Z}\mskip-5mu  W(q,p)\Pi\sigma\Pi W(q,p)^* \,\frac{dq\,dp}{2\pi\hbar}\ .
\end{equation}
Here we deviated from the announcement to set $\hbar=1$ to emphasize that the measure for which the density of a normalized observable has trace $1$ is the usual volume normalization in units of Planck's ``unreduced'' constant $h=2\pi\hbar=2\pi$. The operator $\Pi$ is the parity operator $(\Pi\psi)(x)=\psi(-x)$, and merely changes the parametrization of observables in terms of $\sigma$. The reason for this will become clear presently.

We need to compute the expectations $M_\rho^Q$ of the $Q$-marginal of a covariant observable. For the sake of this computation we may set $\rho=\kettbra\psi$ and $\sigma=\kettbra\phi$, and later extend by linearity. Then the probability density of $M_\rho^Q$ at the point $q$ is obtained from \eqref{povm} by taking the expectation with $\rho$ and  leaving out the integral over $q$ while retaining the one over $p$. This gives
\begin{equation}\label{MQrhoDensity}\begin{split}
   \frac1{2\pi}&\int \braket{\psi}{W\qp\Pi\phi}\braket{W\qp\Pi\phi}{\psi}\ dp\\
    &=\frac1{2\pi}\int \overline{\psi(x)}\psi(y)\phi(q-x)\overline{\phi(q-y)}\ e^{i(px-py)}\ dp\,dx\,dy
   = \int \abs{\psi(x)}^2\ \abs{\phi(q-x)}^2\ dx \ .
\end{split}
\end{equation}
Here we used $\int e^{ipx}dp=2\pi\delta(x)$. The result is the convolution of the position distributions of $\rho$ and $\sigma$. Together with the analogous relation for momentum we can write this as
\begin{equation}\label{marginalconcolve}
  M_\rho^Q=E^Q_\rho \ast E_\sigma^Q  \quad\mbox{and } M_\rho^P=E^P_\rho \ast E_\sigma^P\ ,
\end{equation}
where the star denotes the convolution of measures or their density functions. Thus we arrive at the key feature of covariant measurements for our study: {\it The marginal distributions of a covariant measurement are the same as those of the corresponding ideal measurement with some added noise, which is statistically independent of the input state.} The noise distributions are just $E_\sigma^Q$ and  $E_\sigma^P$, so they are constrained exactly by the preparation uncertainty of $\sigma$.

\begin{remark} \rm
There is a converse to Eq.\eqref{marginalconcolve}. Rather than asking how we can approximately measure the standard position and momentum observables together, we can ask under which conditions approximate position and momentum observables can exactly be measured together. Here by an approximate position measurement we mean an observable $F$, which is covariant for position shifts, and commutes with momentum, so that $F(X-q)=W\qp F(X)W^*\qp$. These are necessarily of the form $F_\rho=\mu\ast E^Q_\rho$ for some measure $\mu$ (see, e.g., \cite{CHT2004}). Suppose that we have such an approximate position measurement and, similarly, an approximate momentum measurement given by a noise measure $\nu$. Then, using the averaging technique developed in\ \cite{Werner04} (reviewed below), it can also be shown that these two are jointly measurable, i.e., they are the marginals of some phase space observable $M$, if and only if there is a covariant observable $M$ with these marginal, i.e., if and only if $\mu=E^Q_\sigma$ and $\nu=E^P_\sigma$ for some density operator $\sigma$.\cite{CHT2005}
\end{remark}

\begin{remark} \rm
The parity operator appears either in \eqref{povm} or in \eqref{marginalconcolve}. The convention we chose is in agreement with the extension of the convolution operation from classical measures on phase space to density operators and measures. Indeed, a convolution can be read as the average over the translates of one factor weighted with the other. Therefore the convolution of a density operator $\rho$ and a probability measure $\mu$ on phase space is naturally defined as the density operator
\begin{equation}\label{frhoconvolve}
  \mu\ast\rho=\rho\ast\mu= \int W\qp\rho W^*\qp\ d\mu\qp\ .
\end{equation}
This sort of definition would work for any group representation. A special property of the Weyl operators (``square integrability'') allows us to define \cite{QHA} a convolution also of two density operators giving the probability density
\begin{equation}\label{rhosigmaconvolve}
  \bigl(\rho\ast\sigma\bigr)\qp= \tr\rho\ W\qp\Pi\sigma\Pi W^*\qp\ .
\end{equation}
It turns out that the integrable functions on phase space together with the trace class then form a commutative and associative Banach algebra or, more precisely, a $\Ir_2$-graded Banach algebra, where functions have grade $0$, operators have grade $1$, and the grade of a product is the sum of the grades mod~$2$. Since this algebra is commutative, it can be represented as a function algebra, which is done by the Fourier transform for functions and by the Fourier-Weyl transform $({\mathcal F}\rho)\qp=\tr\rho W\qp$ for operators. The Wigner function of $\rho$ is then the function (not usually integrable) that has the same Fourier transform as $\rho$. This explains why the convolution of two Wigner functions is positive: this is just the convolution of the density operators in the sense of \eqref{rhosigmaconvolve}. A similar argument will be used in the proof of Prop.~\ref{thm:Ncompact}.

With this background the appearance of convolutions in \eqref{marginalconcolve} is easily understood: It is just the equation $M_\rho=\rho\ast\sigma$ for the phase space density, integrated over $p$ and $q$ respectively. Similarly, it becomes clear that any kind of variance of the observed joint distribution $M_\rho$ will be the sum of two terms, one coming from the preparation $\rho$ and one coming from the measurement defined by $\sigma$, but that these two play interchangeable roles.
\end{remark}

\section{Quantifying measurement errors}\label{sec:errors}

In a fundamentally statistical theory like quantum mechanics the results of individual measurements tell us almost nothing: It is always the probability distribution of outcomes for a fixed experimental arrangement which can properly be called the result of an experiment. The fact that even  for a pure state  ($\rho=\kb\psi\psi$) the probabilities $\ip{\psi}{F(X)\psi}$ usually take  values other than 0 or 1 is not a bug but a feature of quantum mechanics. Therefore the variance of position in a particular state has nothing to do with an ``error'' of measurement. There is no ``true value'' around which the outcomes are scattering, and which the measurement is designed to uncover. The variance merely provides some partial information about the probability distribution and a careful experimenter will record as much information about this distribution as can be reliably inferred from his finite sample of individual measurements.

Nevertheless, experimental errors occur in this process. However, they cannot be detected from just one distribution. Instead they are related to a difference between the observable the experimenter tries to measure and the one that she actually measures. When the state is fixed, this amounts to a difference between two probability distributions. In this section we will review some ways of quantifying the distance between probability distributions. For the sake of discussion let us take two probability measures $\mu$ and $\nu$ on some set $\Omega$.

\begin{remark}\rm
For a probability measure $\mu$ on the real line one may determine its moments $\mu[x^n]=\int x^n\,d\mu(x)$, $n=0,1,2,\ldots$. Even if all the moment integrals exist and are finite they do not necessarily determine the probability measure.
Thus, on the statistical level of moments, two probability measures $\mu$ and $\nu$ may be indistinguishable even when $\mu\ne\nu$.
This is not a mathematical artefact but a rather common quantum mechanical situation. Indeed, consider, for instance, the double-slit states defined by the functions
$\psi_\delta=\frac 1{\sqrt{2}}(\psi_1+e^{i\delta}\psi_2)$, $\delta\in\Rl$, where $\psi_1,\psi_2$ are  smooth functions with disjoint compact supports (in the position representation). A direct computation shows that the moments of  the momentum distribution $E^P_{\psi_\delta}$ are independent of $\delta$ although the distribution $p\mapsto |\hat\psi_\delta(p)|^2
=\frac 12[|\hat\psi_1(p)|^2+|\hat\psi_2(p)|^2+2{\rm Re}(\overline{\hat\psi_1(p)}\hat\psi_2(p)e^{i\delta})]$ is $\delta$-dependent. Therefore, the moments do not distinguish between the different  distributions $E^P_{\psi_\delta}$ and $E^P_{\psi_\gamma}$, $\delta\neq\gamma ({\rm mod}\ 2\pi)$. This is to remind us that a discrimination between two probability measures cannot, in general, be obtained from moments alone; in particular,  expectations $\mu[x]$ and variances $\mu[x^2]-\mu[x]^2$ are not enough. If $\mu$ is compactly supported or exponentially bounded, then the moments $(\mu[x^n])_{n\geq 0}$ uniquely determine $\mu$ (see, for instance,\ \cite{Simon1998}).

\end{remark}

\subsection{Variation norm}
The most straightforward distance measure is the {\it variation norm}, which is equal to the ${\mathcal L}^1$ distance of the probability densities when the two measures are represented by densities with respect to a reference measure. Operationally, it is twice the largest difference in probabilities:
\begin{equation}\label{varnorm}
  \norm{\mu-\nu}_1=2\sup_{X\subset\Omega}\bigl|\mu(X)-\nu(X)\bigr|
              =\sup\Bigl\{\bigl|{\textstyle\int f(x)d\mu(x)-\int f(x)d\nu(x)}\bigr|\,\Bigm| -1\leq f(x)\leq1\Bigr\}
\end{equation}
For observables $E$ and $F$ we consider the corresponding norm
\begin{equation}\label{varnormobs}
  \norm{E-F}=\sup_{\rho}\norm{E_\rho-F_\rho}_1
            =\sup_{\rho,f}\tr\rho\Bigl(\int f(x)dE(x)-\int f(x)dF(x)\Bigr),
\end{equation}
where the sup runs over all density operators $\rho$ and all measurable functions $f$ with $-1\leq f(x)\leq1$. Thus the statement ``$\norm{E-F}\leq\varepsilon$'' is equivalent to the rather strong claim that no matter what input state and outcome event we look at, the probability predictions from $E$ and $F$ will never differ by more than $\varepsilon/2$.

However, this measure of distance is not satisfactory for quantifying measurement errors of continuous variables. Indeed there is no reference to the distance of underlying points in $\Omega$. Thus two point measures of nearby points will have distance $2$, no matter how close the points are. Another way of putting this is to say that variation distance is dimensionless like a probability. What we often want, however, is a distance of probability distributions, say, on position space, which is measured in meters. The distance of two point measures would then be the distance of the points, and shifting a probability distribution by $\delta x$ would introduce an ``error'' of no more than $\abs{\delta x}$. It is clear that this requires a metric on the underlying space $\Omega$, so from now on we assume a metric $D:\Omega\times\Omega\to\Rl_+$ to be given. Of course, in the case of $\Rl$ or $\Rl^n$ we just take the Euclidean distance $D(x,y)=\abs{x-y}$.

\subsection{Metric distance from a point measure}

Let us begin with a simple case, which is actually already sufficient for preparation uncertainty and for the calibration definition of measurement uncertainty: We assume that one of the measures is a point measure, say $\nu=\delta_y$. Then, for $1\leq\alpha<\infty$, we define the  {\em deviation of order $\alpha$}, or $\alpha$-{\em deviation}, of $\mu$ from $y$ as
\begin{equation}\label{DelalphaPoint}
  D_\alpha(\mu,\delta_y)=\left(\int D(x,y)^\alpha\,d\mu(x)\right)^{\frac1\alpha}.
\end{equation}
The letter $D$ is intentionally chosen to be the same: This definition will be an instance of the general extension of the underlying metric $D$ from points to probability measures. Note that, in particular, we have $D_\alpha(\delta_x,\delta_y)=D(x,y)$ for all $\alpha,x,y$. We will also consider the limiting case $\alpha=\infty$, for which we have to set
\begin{equation}\label{DelalphaInfty}
  D_\infty(\mu,\delta_y)=\mu-{\rm ess} \sup\{D(x,y)| x\in\Omega\}=\inf\Bigl\{t\geq 0\Bigm| \mu\{(x,y)|D(x,y)>t\}=0\Bigr\}.
\end{equation}
Of course, any one of the expressions \eqref{DelalphaPoint}, \eqref{DelalphaInfty} may turn out to be infinite.

Connected to the $\alpha$-deviations are the {\it $\alpha$-spreads} or minimal deviations of order $\alpha$, namely
\begin{equation}\label{Delalpha}
  \Delta_\alpha(\mu)=\inf_{y\in\Rl}D_\alpha(\mu,\delta_y)=\inf_y\left(\int\abs{x-y}^\alpha\ d\mu(x),\right)^{\frac1\alpha}
\end{equation}
where the second expression (valid for $1\leq\alpha<\infty$) just inserts the definition of $D_\alpha$ for the only metric space ($\Omega=\Rl$, absolute value distance), which we actually need in this paper. When \eqref{DelalphaPoint} is interpreted as distance, \eqref{Delalpha} represents the smallest distance of $\mu$ to the set of point measures. In the case of $\Omega=\Rl$ and $\alpha=2$ we recover the ordinary standard deviation, since the infimum is attained for $y$ equal to the mean. The point $y$ to which a given measure is ``closest'' depends on $\alpha$. For the absolute deviation ($\alpha=1$) this is the median, for $\alpha=2$ it is the mean value, and for $\alpha=\infty$ it is the midpoint of the smallest interval containing the support $\supp(\mu)$ of $\mu$, the smallest closed set of full measure.

The interpretation of \eqref{DelalphaPoint}/\eqref{DelalphaInfty} as ``distance to a point measure'' hinges on the possibility to extend this definition to a metric proper on the set of probability measures. This is done in the following section.

\subsection{Metric distance for probability distributions}

The standard distance function with the properties described above is known as the Monge-Kantorovich-Wasserstein-Rubinstein-Ornstein-Gini-Dall'Aglio-Mallows-Tanaka\linebreak ``earth mover's'' or ``transportation'' distance, or some combination of these names (see \cite{Villani} for background and theory, for $\alpha=\infty$  we refer to \cite{Champion2008,Jylha2013}). For purpose of assessing the accuracy of quantum measurements it was apparently first used by Wiseman \cite{Wiseman1998}.
The natural setting for this definition is an outcome space $\Omega$, which is a complete separable metric space with metric $D:\Omega\times\Omega\to\Rl_+$.
For any two probability measures $\mu,\nu$ on $\Omega$ we define a {\it coupling} to be a probability measure $\gamma$ on $\Omega\times \Omega$  with  $\mu$ and $\nu$ as the marginals.
We denote by $\Gamma(\mu,\nu)$ the set of couplings between $\mu$ and $\nu$. Then, for any $\alpha$, $1\leq \alpha<\infty$ we define the $\alpha$-distance (also Wasserstein $\alpha$-distance \cite{Villani}) of $\mu$ and $\nu$ as
\begin{equation}\label{Wasserstein}
  D_\alpha(\mu,\nu)=  \inf_{\gamma\in\Gamma(\mu,\nu)}  D^\gamma_\alpha(\mu,\nu)=
  \inf_{\gamma\in\Gamma(\mu,\nu)}\left(\int D(x,y)^\alpha\,d\gamma(x,y) \right)^{\frac1\alpha}
\end{equation}
For $\alpha=\infty$, one again defines $D^\gamma_\infty(\mu,\nu)=\gamma-{\rm ess}\ \sup\{D(x,y)\,|\, (x,y)\in\Omega\times\Omega\}$ and thus
\begin{equation}\label{Wassersteininfty}
   D_\infty(\mu,\nu) = \inf_{\gamma\in\Gamma(\mu,\nu)} D^\gamma_\infty(\mu,\nu).
\end{equation}
Actually, $D^\gamma_\infty(\mu,\nu)$ depends only on the support of $\gamma$, i.e., $D^\gamma_\infty(\mu,\nu)=\sup\{D(x,y)\,|\, (x,y)\in\supp(\gamma)\}$.

The existence of an optimal coupling is known, for $1\leq\alpha<\infty$, see \cite[Theorem 4.1]{Villani}, the case $\alpha=\infty$ is shown in \cite[Theorem 2.6]{Jylha2013}, but it does not imply that  $D_\alpha(\mu,\nu)$  is finite.

When $\nu=\delta_y$ is a point measure, there is only one coupling between $\mu$ and $\nu$, namely the product measure $\gamma=\mu\times\delta_y$. Hence \eqref{Wasserstein}/\eqref{Wassersteininfty} reduces to \eqref{DelalphaPoint}/\eqref{DelalphaInfty}. In particular, we indeed get an extension of the given metric for points, interpreted as point measures.
The metric can become infinite, but the triangle inequality still holds \cite{Villani}[after Example~6.3]. The proof relies on Minkowski's inequality and the use of a ``Gluing Lemma'' \cite{Villani}, which builds a coupling from $\mu$ to $\zeta$ out of couplings from $\mu$ to $\nu$ and from $\nu$ to $\zeta$. It also covers the case $\alpha=\infty$, which is not otherwise treated in \cite{Villani}.

Thus the space breaks up into equivalence classes of measures, which have finite distance from each other. In view of the previous section it is natural to consider the class containing all point measures, i.e., the measures of finite spread. However, the restriction to measures of finite spread is not necessary. In fact, for measures on $\Rl$ each equivalence is closed under translations, and the distance of two translates is bounded by the size of the translation.

The metric also has the right {\it scaling}: For $\Omega=\Rl$ let us denote the scaling of measures by $s_\lambda$, so that for $\lambda>0$ and measurable $X\subset\Rl$, $s_\lambda(\mu)(X)=\mu(\lambda^{-1}X)$. Then $D_\alpha(s_\lambda\mu,s_\lambda\nu)=\lambda D_\alpha(\mu,\nu)$, so this metric is compatible with a change of units. Of course, the metric is also unchanged when both measures are shifted by the same translation.

$D_\alpha(\mu,\nu)$ is also known as {\it transport distance}, due to the following interpretation: Suppose that an initial distribution $\mu$ of some ``stuff'' (earth or probability) has to be converted to another distribution $\nu$ by moving the stuff around. The measure $\gamma$ then encodes how much stuff originally at $x$ is moved to $y$. If the transport cost per unit is $D(x,y)^\alpha$, the integral represents the total transport cost. The minimum then is the minimal cost of converting $\mu$ to $\nu$. The root is taken to ensure the right scaling behaviour.

For {\it convexity properties}, note that the function $t\mapsto t^{1/\alpha}$ is concave, so by Jensen's inequality $D_\alpha(\mu,\delta_y)$ is concave in $\mu$, and so is $\Delta_\alpha(\mu)$, as the infimum of concave functions. This is expected, since it is exactly the point measures  that have zero spread, and all other measures are convex combinations of these. The metric is neither concave nor convex in its arguments. However, the function
$\gamma\mapsto D^\gamma_\alpha(\mu,\nu)^\alpha$ is linear, and since $\lambda\gamma_1+(1-\lambda)\gamma_2$ is a coupling between the respective convex combinations of marginals, $(\mu,\nu)\mapsto D_\alpha(\mu,\nu)^\alpha$ is convex. Therefore, the level sets $\{(\mu,\nu)| D_\alpha(\mu,\nu)\leq c\}$ are convex, and $D_\alpha$ is ``pseudoconvex''.

Suppose that $\Omega=\Rl$ and we ``add independent noise'' to a real-valued random variable with distribution $\mu$
by a random translation with distribution $\eta$.  This leads to the convolution   $\eta\ast\mu$ for the new
distribution. The following bounds govern this operation
\begin{lem}Let $\mu,\nu,\eta$ be probability measures on $\Rl$, and $1\leq\alpha\leq\infty$. Then
\begin{eqnarray}\label{convolveBound1}
    \Delta_\alpha(\mu)\leq\Delta_\alpha(\eta\ast\mu)
                             & \leq&\Delta_\alpha(\eta)+\Delta_\alpha(\mu)\\
    D_\alpha(\eta\ast\mu,\eta\ast\nu)&\leq& D_\alpha(\mu,\nu)   \label{convolveBound2}\\
            D_\alpha(\eta\ast\mu,\mu)&\leq& D_\alpha(\eta,\delta_0).\label{convolveBound3}
\end{eqnarray}
\end{lem}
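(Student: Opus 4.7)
The plan is to prove the three inequalities in the order (\ref{convolveBound3}), (\ref{convolveBound2}), (\ref{convolveBound1}), since each uses the previous. All arguments rest on the same unifying idea: the convolution $\eta\ast\mu$ is the distribution of $U+V$ for independent $U\sim\eta$ and $V\sim\mu$, so one builds couplings by adjoining independent randomness to an existing coupling. Throughout, the $\alpha=\infty$ case is handled by the same constructions with integrals replaced by essential suprema on the support of the coupling.

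For (\ref{convolveBound3}), I would write down the explicit coupling $\gamma=(\mathrm{id}\times 0)_\ast(\eta\times\mu)$ in the additive form $(U+V,V)$, whose marginals are $\eta\ast\mu$ and $\mu$. Its $\alpha$-cost is $\int |u|^\alpha\,d\eta(u)\,d\mu(v)=D_\alpha(\eta,\delta_0)^\alpha$, because the only coupling between $\eta$ and the point measure $\delta_0$ is the product. Taking the infimum over couplings on the left gives the bound; the $\alpha=\infty$ version follows because the support of this $\gamma$ is $\{(u+v,v)\}$ and $|u+v-v|=|u|$.

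For (\ref{convolveBound2}), I would start from an arbitrary coupling $\gamma\in\Gamma(\mu,\nu)$ and independent $Z\sim\eta$, and push forward the product measure $\gamma\times\eta$ under the map $(x,y,z)\mapsto(x+z,y+z)$. This yields a coupling of $\eta\ast\mu$ and $\eta\ast\nu$ with cost $\int|x-y|^\alpha\,d\gamma(x,y)\,d\eta(z)=D^\gamma_\alpha(\mu,\nu)^\alpha$, since the noise cancels. Taking the infimum over $\gamma$ gives the claim, and again the $\alpha=\infty$ argument is identical on supports.

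For (\ref{convolveBound1}), the lower bound is a direct Fubini/shift argument: for any $y$,
\begin{equation*}
D_\alpha(\eta\ast\mu,\delta_y)^\alpha=\int\!\!\int|u+v-y|^\alpha\,d\mu(v)\,d\eta(u)\geq\int \Delta_\alpha(\mu)^\alpha\,d\eta(u)=\Delta_\alpha(\mu)^\alpha,
\end{equation*}
because the inner integral is $D_\alpha(\mu,\delta_{y-u})^\alpha\geq\Delta_\alpha(\mu)^\alpha$; taking $\inf_y$ yields $\Delta_\alpha(\mu)\leq\Delta_\alpha(\eta\ast\mu)$, with the obvious essential-supremum modification for $\alpha=\infty$. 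For the upper bound I would use the triangle inequality cited from Villani together with translation invariance: for any $y,z\in\Rl$,
\begin{equation*}
D_\alpha(\eta\ast\mu,\delta_{y+z})\leq D_\alpha(\eta\ast\mu,\eta\ast\delta_y)+D_\alpha(\eta\ast\delta_y,\delta_{y+z})\leq D_\alpha(\mu,\delta_y)+D_\alpha(\eta,\delta_z),
\end{equation*}
where the first summand is bounded by (\ref{convolveBound2}) and the second uses that $\eta\ast\delta_y$ is just $\eta$ shifted by $y$. Taking infima over $y$ and $z$ independently gives $\Delta_\alpha(\eta\ast\mu)\leq\Delta_\alpha(\mu)+\Delta_\alpha(\eta)$.

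The only subtle point — and thus the main place where care is needed — is the $\alpha=\infty$ case, because $D_\infty^\gamma$ is an essential supremum on $\supp(\gamma)$ rather than an $\alpha$-th moment. In each of the three constructions one must verify that pushing the product of supports through the relevant linear map gives the correct support for the new coupling, so the essential-supremum bound carries over unchanged. Everything else is routine.
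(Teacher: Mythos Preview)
Your proof is correct and rests on the same coupling constructions as the paper: for \eqref{convolveBound2} both build the coupling $(X+Z,Y+Z)$ with $Z\sim\eta$ independent of $(X,Y)\sim\gamma$, and your direct argument for \eqref{convolveBound3} is exactly the special case $\nu=\delta_0$ of that construction, which is also how the paper derives it. The one genuine difference is the upper bound in \eqref{convolveBound1}. The paper argues directly via Minkowski's inequality in $\LL^\alpha(\eta\times\mu)$, writing $D_\alpha(\eta\ast\mu,\delta_{x'+y'})=\norm{x+y-x'-y'}_{\eta\times\mu,\alpha}\leq\norm{x-x'}_{\eta,\alpha}+\norm{y-y'}_{\mu,\alpha}$ and then minimising over $x',y'$; you instead route through the Wasserstein triangle inequality combined with \eqref{convolveBound2}. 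Your version is slightly less self-contained (the Wasserstein triangle inequality hides the Gluing Lemma), but it has the virtue that once that triangle inequality is granted, the $\alpha=\infty$ case needs no separate treatment. For the lower bound in \eqref{convolveBound1}, your Fubini computation and the paper's appeal to concavity of $\Delta_\alpha$ are two phrasings of the same estimate.
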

The first inequality says that noise increases spread, but not by more than the spread of the noise. The second says that adding noise washes out the features of two distributions, making them more similar. Finally the third inequality, which will be crucial for us, says that adding a little noise only changes a measure by little. Note that it is not only the spread of the noise which counts here, but also the absolute displacement. That is, a special case of the last relation is that $D_\alpha(\mu^y,\mu)\leq|y|$, where $\mu^y=\delta_y\ast\mu$ is the shift of $\mu$ by $y$. We emphasize that \eqref{convolveBound3} does not require $\Delta_\alpha(\mu)<\infty$.

\begin{proof}
We note that $D_\alpha(\mu,\delta_{y})$ is a standard $p$-norm, or rather $\alpha$-norm $\norm\cdot_{\mu,\alpha}$ in  $\LL^\alpha(\Omega,\mu)$ of the function $x\mapsto(x-y)$. We denote this function as $x-y1$ to indicate that $y$ is considered as a constant. That is
\begin{equation}\label{DeltaLp}
  D_\alpha(\mu,\delta_y)=\norm{x-y1}_{\mu,\alpha}
\end{equation}
This equation is also valid for $\alpha=\infty$, so we need not consider this case separately.

For the first inequality in \eqref{convolveBound1} we use translation invariance and concavity of $\Delta_\alpha$ by considering $\eta\ast\mu$ as a convex combination of translates of $\mu$ with weight $\eta$. For the second inequality in \eqref{convolveBound1} consider the expression
$$\norm{x+y-x'-y'}_{\eta\times\mu,\alpha}=D_\alpha(\mu\ast\nu,\delta_{x'+y'}).$$
This is larger than the infimum over all choices of $(x'+y')$, i.e., $\Delta_\alpha(\mu\ast\nu)$. Using the Minkowski inequality (triangle inequality for the $\alpha$-norm) we conclude
\begin{eqnarray}
  \Delta_\alpha(\mu\ast \nu)
       &\leq&\norm{x-x'+y-y'}_{\eta\times\mu,\alpha}   \nonumber\\
       &\leq&\norm{x-x'}_{\eta\times\mu,\alpha}+\norm{y-y'}_{\eta\times\mu,\alpha}\nonumber\\
       &=&\norm{x-x'}_{\eta,\alpha}+\norm{y-y'}_{\mu,\alpha} \nonumber\\
       &=&D_\alpha(\mu,\delta_{x'})+D_\alpha(\nu,\delta_{y'}) \nonumber
\end{eqnarray}
Here at the last but one equality we used that in the relevant integrals the integrand depends only on one of the two variables $x,y$ and the other is integrated over by a probability measure. The desired inequality \eqref{convolveBound1} now follows by minimizing over $x'$ and $y'$.

Then any coupling $\gamma$ between $\mu$ and $\nu$ provides a coupling $\tilde\gamma(X\times Y)=\int\gamma(X-x,Y-x)\,d\eta(x) $ between
$\eta\ast\mu$ and $\eta\ast\nu$, for which we get $\int|x-y|^\alpha\,d\tilde\gamma(x,y)=\int|x-y|^\alpha\,d\gamma(x,y)$.
Since the infimum may be attained at a coupling different from $\tilde\gamma$ \eqref{convolveBound2} follows.

Finally, to prove \eqref{convolveBound3}, we note that it is a special case of \eqref{convolveBound2} since %we can write
$D_\alpha(\eta\ast\mu,\mu)=D_\alpha(\mu\ast\eta,\mu\ast\delta_0)$.
\end{proof}

A powerful tool for working with the distance functions is a dual expression of the infimum over couplings as a supremum over certain other functions.
A nice interpretation is in terms of transportation prices. We describe it here to motivate the expressions, and refer to the excellent book
\cite{Villani}, from where we took this interpretation, for the mathematical details. In this context we have to exclude the case $\alpha=\infty$.

Suppose the ``stuff'' to be moved is bread going from the bakeries
in town to caf\'es. The transport costs $D(x,y)^\alpha$ per unit, and the bakeries and caf\'es consider hiring a company to take
care of the task. The company will pay a price of $\Psi(x)$ per unit to the bakery at $x$  and charges $\Phi(y)$ from the caf\'e at $y$. Clearly this makes sense if each transport becomes cheaper, i.e.,
\begin{equation}\label{compete}
  \Phi(y)-\Psi(x)\leq D(x,y)^\alpha.
\end{equation}
Pricing schemes satisfying this inequality are called competitive. We are now asking what the maximal gain of a company under a competitive pricing scheme
can be, given the productivity $\mu$ of the bakeries and the demand $\nu$ at the caf\'es. This will be
\begin{equation}\label{gap}
 \int\Phi(y)\,d\nu(y) \ -\  \int\Psi(x)\,d\mu(x) \leq \int D(x,y)^\alpha\,d\gamma(x,y)
\end{equation}
This inequality is trivial from \eqref{compete}, and holds for any pricing scheme $(\Psi,\Phi)$ and any transport plan  $\gamma$. Optimizing the pricing
scheme maximizes the left hand side and optimizing the transport plan minimizes the right hand side. The Kantorovich Duality Theorem asserts that for
these optimal choices the gap closes, and equality holds in \eqref{gap}, i.e.
\begin{equation}\label{DKanto}
  D_\alpha(\mu,\nu)^\alpha=\sup_{\Phi,\Psi} \int\Phi(y)\,d\nu(y) - \int\Psi(x)\,d\mu(x)
\end{equation}
where $\Phi$ and $\Psi$ satisfy \eqref{compete}.

When maximizing the left hand side of \eqref{gap}, one can naturally choose $\Phi$ as large as possible under the constraint \eqref{compete}, i.e.,
$\Phi(y)=\inf_x\{\Psi(x)+D(x,y)^\alpha \}$, and similarly for $\Psi$. Hence one can choose just one variable $\Phi$ or $\Psi$ and determine the other by this
formula. In case $\alpha=1$ the triangle inequality for the metric $D$ entails that one can take $\Phi=\Psi$. In this case \eqref{compete} just asserts that
this function be Lipshitz continuous with respect to the metric $D$, with constant $1$. The left hand side of \eqref{gap} is thus a difference of
expectation values of the given measures $\mu$, $\nu$.

For later purposes we also have to make sure that the duality gap closes if we restrict the set of functions $\Phi,\Psi$. The natural condition is, of course, that $\Psi\in\Ell^1(\mu)$. The statement of Kantorovich Duality in \cite{Villani}[Thm.~5.10] includes that in the supremum \eqref{DKanto} one can restrict to bounded continuous functions. In the same spirit we add

\begin{lem}\label{lem:cptsupp}
In \eqref{DKanto} the supremum can be restricted to positive continuous functions of compact support without changing its value.
\end{lem}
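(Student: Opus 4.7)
The plan is to start from the version of Kantorovich duality cited in the paragraph preceding the lemma from \cite{Villani}[Thm.~5.10], which asserts that the supremum in \eqref{DKanto} is attained up to $\varepsilon$ by a pair $(\Psi,\Phi)$ of bounded continuous functions satisfying \eqref{compete}. Two successive reductions will then turn such a pair into a nonnegative, compactly supported pair with negligible loss in the difference of integrals.

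Since $\mu$ and $\nu$ are probability measures, the simultaneous shift $(\Psi,\Phi) \mapsto (\Psi + c, \Phi + c)$ preserves both the competitive condition \eqref{compete} and the value $\int\Phi\,d\nu - \int\Psi\,d\mu$. Choosing $c$ larger than $\max(-\inf\Psi,\,-\inf\Phi)$ (finite by boundedness) renders both functions nonnegative. I then replace $\Phi$ by its c-transform $\Psi^c(y) := \inf_x\{\Psi(x) + D(x,y)^\alpha\}$, which does not decrease the value since \eqref{compete} forces $\Phi \leq \Psi^c$. The function $\Psi^c$ remains bounded above by $\sup\Psi$, nonnegative, and continuous: continuity holds because for bounded $\Psi\geq 0$ any $\varepsilon$-approximate minimiser $x^*$ in the infimum satisfies $D(x^*,y)^\alpha \leq \sup\Psi + \varepsilon$, so the infimum is effectively taken over a bounded neighbourhood of $y$.

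To impose compact support, fix an exhaustion $K_1 \subset K_2 \subset \cdots$ of $\Omega$ by compact sets and cutoffs $\chi_n \in C_c(\Omega)$ with $0 \leq \chi_n \leq 1$ and $\chi_n = 1$ on $K_n$, and set
\[
  \Psi_n := \Psi\chi_n, \qquad \Phi_n := \Psi_n^c.
\]
The pair $(\Psi_n,\Phi_n)$ is competitive by construction; $\Psi_n$ is manifestly continuous, nonnegative, and of compact support; and $\Phi_n \geq 0$ since $\Psi_n\geq 0$. For $y \notin \supp\chi_n$ the choice $x=y$ in the infimum gives $\Phi_n(y) \leq \Psi_n(y) = 0$, hence $\supp\Phi_n \subseteq \supp\chi_n$ is compact. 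Continuity of $\Phi_n$ follows from the identity $\Phi_n(y) = \min(\inf_{x\in\supp\chi_n}[\Psi_n(x)+D(x,y)^\alpha],\,\mathrm{dist}(y,\Omega\setminus\supp\chi_n)^\alpha)$, a minimum of two continuous functions of $y$.

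It remains to show $\int\Psi_n\,d\mu \to \int\Psi\,d\mu$ and $\int\Phi_n\,d\nu \to \int\Phi\,d\nu$. The first is immediate by dominated convergence since $\Psi_n \to \Psi$ pointwise with $0 \leq \Psi_n \leq \sup\Psi$. For the second, $\Psi_n \leq \Psi$ yields $\Phi_n \leq \Phi$; conversely, splitting the infimum in $\Phi_n(y)$ into $x\in K_n$ (where $\Psi_n=\Psi$, so the contribution is $\geq \Phi(y)$) and $x\notin K_n$ (where $\Psi_n\geq 0$, so the contribution is $\geq \mathrm{dist}(y,K_n^c)^\alpha$) gives
\[
  \Phi_n(y) \geq \min\bigl(\Phi(y),\,\mathrm{dist}(y,K_n^c)^\alpha\bigr).
\]
Since $\Phi \leq \sup\Psi$ and $\mathrm{dist}(y,K_n^c) \to \infty$ for every fixed $y$, one obtains $\Phi_n(y) = \Phi(y)$ for all sufficiently large $n$, and dominated convergence completes the argument. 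The main subtlety is precisely this pointwise stabilisation of $\Phi_n$ to $\Phi$: it hinges on the boundedness of $\Psi$, which forces approximate minimisers of the c-transform to remain within a fixed distance of $y$, so that the cutoff $\chi_n$ becomes inactive at each $y$ for all large $n$.
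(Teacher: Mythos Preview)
Your proof is correct and takes a somewhat different route from the paper's. Both arguments begin by shifting a bounded continuous pair $(\Psi,\Phi)$ to be nonnegative, and both exploit the boundedness of $\Psi$ to control the effective range of the cost $D(x,y)^\alpha$. The paper then truncates $\Phi$ first on a compact set $U$ carrying most of $\int\Phi\,d\nu$, truncates $\Psi$ on a larger compact set $V$ containing a buffer zone of width $\norm{\Phi}_\infty^{1/\alpha}$ around $\widehat U\supset U$, and verifies the competitive condition \eqref{compete} directly by a three-way case analysis on the location of $(x,y)$. Your approach instead truncates only $\Psi$ (multiplicatively by a cutoff $\chi_n$) and \emph{defines} $\Phi_n$ as the $c$-transform $\Psi_n^c$, so that \eqref{compete} holds automatically; compact support of $\Phi_n$ then comes for free from $\Phi_n(y)\leq\Psi_n(y)=0$ outside $\supp\chi_n$. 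The price you pay is that convergence $\int\Phi_n\,d\nu\to\int\Phi\,d\nu$ requires the pointwise stabilisation argument $\Phi_n(y)\geq\min(\Phi(y),\mathrm{dist}(y,K_n^c)^\alpha)$, which is where the boundedness of $\Psi$ re-enters. The paper's argument is more hands-on but yields an explicit $\veps$-estimate at once; yours handles the constraint more elegantly via the $c$-transform but passes through a limit. One minor point worth making explicit: for $\mathrm{dist}(y,K_n^c)\to\infty$ you need that every bounded ball is eventually contained in $K_n$, which is immediate for $K_n=[-n,n]$ in $\Omega=\Rl$ (the case actually used later) but should be stated as part of the choice of exhaustion in a general proper metric space.
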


\begin{proof}
Suppose that some bounded continuous functions $\Psi,\Phi$ are given, which satisfy \eqref{compete}.  Since we can add the same constant to each, we may also assume them to be positive. Our aim is to find compactly supported functions $\Psi_\eps,\Phi_\eps$ such that $0\leq\Psi_\eps\leq\Psi$, $\int(\Psi(x)-\Psi_\eps(x))d\mu(x)\leq\eps$, and similarly for $\Phi_\eps$. The problem is to find such functions so that \eqref{compete} still holds.

Pick a compact region $U$ so that $\int_{y\notin U}\Phi(y)\,d\nu(y)<\eps$, and some continuous function $0\leq\Phi_\eps\leq\Phi$ coinciding with $\Phi$ on $U$ and vanishing outside a compact set $\widehat U\supset U$. Let $V$ be a set on which $\Psi d\mu$ similarly achieves its integral to within $\veps$, and which also contains $\widehat U$ and all points of distance at most $\norm{\Phi}_\infty^{1/\alpha}$ from it. Construct a compactly supported function $\Psi_\veps$ coinciding with $\Psi$ on $V$. Consider now the inequality $$\Phi_\veps(y)-\Psi_\veps(x)\leq D(x,y)^\alpha.$$ Clearly this holds for all $x\in V$, because then $\Psi_\eps(x)=\Psi(x)$ and $\Phi_\eps(y)\leq\Phi(y)$. For $y\in\widehat U$ and $x\notin V$, we have
$$D(x,y)^\alpha-\Phi_\eps(y)\geq D(x,y)^\alpha-\Phi(y)\geq D(x,y)^\alpha-\norm\Phi_\infty\geq0.$$
Hence the inequality follows from $\Psi_\eps\geq0$. Finally for $y\notin\widehat U$ we have $\Phi_\veps(y)=0$, and the inequality is once again trivial.
To summarize, we have shown it on $(\Omega\times V)\cup(\widehat U\times V^c)\cup(\widehat U^c\times\Omega)=\Omega\times\Omega$.

Of the properties of $\Omega=\Rl$ we only used (for the compactness of $V$) that closed balls of the metric are compact.
\end{proof}

\begin{example}\label{ex:Wass2}$2$-distance and affine families.\\ \rm
It is instructive to see just how far one can go by taking $\Phi,\Psi$ in \eqref{gap} to be quadratic expressions in the case $\alpha=2$.
So let $\Psi(x)=(a-1)x^2+2bx$ with $a>0$.  Then
\begin{equation}\label{phiquad}
  \Phi(y)=\inf_x\{\Psi(x)+(x-y)^2 \}=\bigl(1-\frac1a\bigr)y^2+\frac{2by}a-\frac{b^2}a
\end{equation}
Now let $\mu,\nu$ be probability measures with finite second moments, say means $m(\mu)=\mu[x],m(\nu)=\nu[x]$ and variances $s(\mu)^2,s(\nu)^2$. Then the left hand side of \eqref{gap} can be entirely expressed by these moments, and we get a lower bound
\begin{eqnarray}\label{Wass2Moments}
  D_2(\mu,\nu)^2
    &\geq& (s(\mu)-s(\nu))^2+(m(\mu)-m(\nu))^2\\
    &\geq&
     \frac{a-1}a\,(s(\mu)^2+m(\mu))^2+\frac{2b}a\,m(\mu)-\frac{b^2}a -(a-1) (s(\nu)^2+m(\nu)^2)- b\, m(\nu) \nonumber
\end{eqnarray}
Here the second expression is what one gets by just inserting the moments (e.g., $m(\mu)$ and $s(\mu)^2+m(\mu)^2$) into \eqref{gap}, and the first is the result of maximizing over $b$ and $a>0$.
Turning to the upper bound, it is not a surprise that the maximization for getting $\Psi$ leaves a quadratic expression for the difference
\begin{equation}\label{Wass2diff}
  (x-y)^2+\Psi(x)-\Phi(y)=\frac1a(ax+b-y)^2\ .
\end{equation}
Hence for equality in \eqref{gap}, and therefore in both inequalities of \eqref{Wass2Moments} to hold we must have a coupling
$\gamma$ which is concentrated on the line $y=ax+b$, so $\int g(x,y)\gamma(dx\,dy)=\int g(x,ax+b)d\mu(x)$ for any test function $g$. In particular, $\nu$ must arise from $\mu$ by translation and dilatation. But if that is the case the moments just have to come out right, so the converse is also true. Hence we have a very simple formula for $D_2$ on any orbit of the affine group, for example the set of Gaussian probability distributions. The argument also allows us to find the shortest $D_2$-distance from a measure $\mu$ to a set of measures with fixed first and second moments: The closest point will be the appropriate affine argument transformation applied to
$\mu$. This, with much further information about $D_2$ geodesics and the connection with Legendre transforms is to be found in \cite[Thm.~3.1]{Carlen}.

With the above assumption of finite second moments we also get an upper bound, so that
$$
(s(\mu)-s(\nu))^2+(m(\mu)-m(\nu))^2\leq D_2(\mu,\nu)^2\leq(s(\mu)+s(\nu))^2+(m(\mu)-m(\nu))^2,
$$
with the bounds obtained if there is a $\gamma$ giving strong negative, resp. positive correlation for $\mu$ and $\nu$,
making them linearly dependent.
\end{example}

\subsection{Metric distance of observables}\label{sec:distobs}
Given an $\alpha$-deviation for probability distributions we can directly define an $\alpha$-deviation for  observables $E,F$ with the same metric outcome space
$(\Omega,D)$. We set, for $1\le\alpha\le\infty$,
\begin{equation}\label{distobs}
  D_\alpha(F,E):=\sup_\rho D_\alpha(F_\rho,E_\rho).
\end{equation}

Note that we are taking here the {\it worst case} with respect to the input states. Indeed we consider the deviation of an observable $F$ from an
``ideal''  reference observable $E$ as a figure of merit for $F$, which a company might advertise: No matter what the input state, the distribution
obtained by $F$ will be $\veps$-close to what you would get with $E$. When closeness of distributions is measured by $D_\alpha$, then \eqref{distobs} is the best $\veps$ for which this is true. In contrast, the individual deviations $D_\alpha(F_\rho,E_\rho)$ are practically useless as a benchmark. Indeed, a testing lab, which is known to always use the same input state for its tests, is easily fooled. Their benchmark could be met by any fake device, which does not make any measurement, but instead produces random numbers with the expected distribution. Put in colloquial terms: Nobody would buy a meter stick which is advertised as ``very precise, provided the length measured is 50 cm'', or a watch which ``shows the correct time twice a day''.

The additional maximization in \eqref{distobs} leads to some simplifications, and in particular to an explicit expression for the difference between a sharp observable and the same observable with added noise.

\begin{lem}\label{lem:DobsConv}
Let $E$ be a sharp observable on $\Rl$, $\eta$ some probability measure on $\Rl$, and $F=\eta\ast E$, i.e., $F_\rho=\eta\ast E_\rho$ for all $\rho$. Then
\begin{equation}\label{DobsConv}
  D_\alpha(F,E)= D_\alpha(\eta,\delta_0)\ .
\end{equation}
\end{lem}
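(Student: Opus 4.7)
The plan is to prove the two inequalities $D_\alpha(F,E) \leq D_\alpha(\eta, \delta_0)$ and $D_\alpha(F,E) \geq D_\alpha(\eta, \delta_0)$ separately. The upper bound is immediate: since $F_\rho = \eta \ast E_\rho$ for every state $\rho$, \eqref{convolveBound3} applied pointwise yields $D_\alpha(F_\rho, E_\rho) = D_\alpha(\eta \ast E_\rho, E_\rho) \leq D_\alpha(\eta, \delta_0)$ uniformly in $\rho$, so the supremum in \eqref{distobs} is at most $D_\alpha(\eta, \delta_0)$.

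For the matching lower bound I would exploit sharpness to manufacture input states whose $E$-distribution concentrates near a single point. Since $E$ is a projection-valued measure on $\Rl$ with $E(\Rl) = \idty$, its support $S \subset \Rl$ is non-empty. Fix any $y_0 \in S$; then for each $\eps > 0$ the projection $E([y_0 - \eps, y_0 + \eps])$ is non-zero, and choosing a unit vector $\psi_\eps$ in its range and setting $\rho_\eps = \kettbra{\psi_\eps}$ produces a state with $E_{\rho_\eps}$ supported in $[y_0 - \eps, y_0 + \eps]$. Definition \eqref{DelalphaPoint}/\eqref{DelalphaInfty} then gives $D_\alpha(E_{\rho_\eps}, \delta_{y_0}) \leq \eps$ directly. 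Combining the triangle inequality for $D_\alpha$, the translation invariance of the metric, and the contraction estimate \eqref{convolveBound2}, I obtain
\begin{align*}
D_\alpha(\eta, \delta_0) &= D_\alpha(\eta \ast \delta_{y_0}, \delta_{y_0}) \\
&\leq D_\alpha(\eta \ast \delta_{y_0}, \eta \ast E_{\rho_\eps}) + D_\alpha(\eta \ast E_{\rho_\eps}, E_{\rho_\eps}) + D_\alpha(E_{\rho_\eps}, \delta_{y_0}) \\
&\leq D_\alpha(\delta_{y_0}, E_{\rho_\eps}) + D_\alpha(F_{\rho_\eps}, E_{\rho_\eps}) + \eps \leq D_\alpha(F_{\rho_\eps}, E_{\rho_\eps}) + 2\eps .
\end{align*}
Passing to the supremum over $\rho$ on the right and then letting $\eps \to 0$ gives $D_\alpha(F, E) \geq D_\alpha(\eta, \delta_0)$, which combined with the first step yields the claimed equality.

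I do not expect a serious obstacle here; the main point requiring care is that the argument be uniform in $\alpha$, including $\alpha = \infty$ and the case where $D_\alpha(\eta, \delta_0)$ is infinite. All ingredients used --- the triangle inequality for the Wasserstein metric, the contraction estimate \eqref{convolveBound2}, and the reduction of $D_\alpha(\,\cdot\,, \delta_y)$ to an $\alpha$-moment --- are valid for $1 \leq \alpha < \infty$ as well as for $\alpha = \infty$, so a single argument covers all cases; and if $D_\alpha(\eta, \delta_0) = \infty$ the chain of inequalities forces $D_\alpha(F_{\rho_\eps}, E_{\rho_\eps}) = \infty$ for every $\eps$, again yielding equality.
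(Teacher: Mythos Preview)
Your proof is correct and follows essentially the same route as the paper: the upper bound via \eqref{convolveBound3}, and the lower bound by choosing states with $D_\alpha(E_\rho,\delta_{y_0})\leq\veps$ (which sharpness of $E$ permits) and then the identical triangle-inequality chain $D_\alpha(\eta,\delta_0)=D_\alpha(\eta\ast\delta_{y_0},\delta_{y_0})\leq 2\veps + D_\alpha(F_\rho,E_\rho)$. You are merely more explicit than the paper in constructing the calibrating states and in discussing the cases $\alpha=\infty$ and $D_\alpha(\eta,\delta_0)=\infty$.
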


\begin{proof}
By \eqref{convolveBound3} we have  $D_\alpha(F_\rho,E_\rho)\leq D_\alpha(\eta,\delta_0)$. We claim that this upper bound is nearly attained whenever $E_\rho$ is sharply concentrated, say, $D_\alpha(E_\rho,\delta_q)\leq\veps$; this is possible, because $E$ was assumed to be sharp. Indeed we then have
$D_\alpha(\eta,\delta_0)=D_\alpha(\eta\ast\delta_q,\delta_q)
   \leq D_\alpha(\eta\ast\delta_q,\eta\ast E_\rho)+D_\alpha(\eta\ast E_\rho,E_\rho)+D_\alpha(E_\rho,\delta_q)
   \leq 2\veps+ D_\alpha(\eta\ast E_\rho,E_\rho) = 2\veps+D_\alpha(F_\rho,E_\rho)$.
\end{proof}

\begin{example}\label{standardmodel}\rm
The standard model for measuring  (approximately) a sharp observable associated with the selfadjoint operator
$A$ consists in coupling the system with a probe, with the Hilbert space ${\mathcal L}^2(\Rl)$, using the direct interaction $e^{i\lambda A\otimes P_p}$, and monitoring the shifts in the probe's position $Q_p$.
If the probe is initially prepared in a state $\sigma$, then the actually measured observable $F$ is a smearing of $E^A$, with the ($\lambda$-scaled) probability density of the  probe position in state
$\Pi\sigma\Pi$.
Thus we get   $D_\alpha(F,E^A)=D_\alpha(E^{Q_p/\lambda }_{\Pi\sigma\Pi},\delta_0)$.
This shows that the error in measuring $E^A$ with the standard model can be made arbitrarily small with an appropriate choice of the initial probe state $\sigma$ but can never be made equal to 0.
\end{example}

\begin{example}\label{smcontinues}\rm
If the standard measurement of a sharp observable $A$ is followed immediately (in the sense that any free evolution in between can be neglected) by a measurement of another sharp observable $B$, then the resulting (sequential) joint measurement constitutes an approximate joint measurement of $A$ and $B$, with the first marginal observable $M_1$ being a smearing of $A$, as given in Example \ref{standardmodel}, and the second marginal $M_2$ is a distorted version of $B$,
$$
M_2(Y)=\mathcal I(\Rl)^*(E^B(Y)) = \int_{\Rl} K_x^*E^B(Y)K_x\,dx,
$$
where $K_x=\int \sqrt\lambda\phi(-\lambda(y-x))\,dE^A(y) = \sqrt\lambda \phi(-\lambda(A-x))$ for all $x\in\Rl$; for simplicity, we have assumed here that the initial probe state $\sigma$ is a pure state given by a function $\phi\in L^2(\Rl)$ of unit norm.

While $D_\alpha(M_1,E^A)$ is easily computed, the error $D_\beta(M_2,E^B)$ can be determined only if $A$ and  $B$ are explicitly given.

For instance, if $A=Q$ and $B=P$,
then $M_2$ is a smearing of $E^P$, with the convolving probability measure being the ($1/\lambda$-scaled) momentum distribution of the probe in the state $\Pi\sigma\Pi$.
A standard position measurement followed by a momentum measurement turns out to be an implementation of a covariant phase space observable $M^\tau$, $\tau$ depending on $\sigma$.
In this case,
 $D_\alpha(M_1,E^Q)D_\beta(M_2,E^P)=D_\alpha(\mu_\tau,\delta_0) D_\beta(\nu_\tau,\delta_0)$, which reduces to the generalized Kennard-Robertson-Weyl inequality of Proposition \ref{thm:prepURpq}.
For $\alpha=\beta=2$ one thus gets
$D_2(M_1,E^Q)D_2(M_2,E^P)\geq \hbar/2$, where the lower bound is reached
exactly when $\tau$ is a
centered minimal uncertainty state, that is, $\tau$ is a pure state given by a real valued Gaussian wave function $\phi$ whose position and momentum distributions are
centered at the origin.

For $\alpha=\infty$, say, the finiteness of the uncertainty product implies that $\beta<\infty$ since there is no $\tau$ for which both the position and momentum distributions would have  bounded supports.
\end{example}

\begin{remark}\rm
As seen above,
any covariant phase space observable $M^\sigma$ can be realized, for instance, as a standard (approximate) position measurement followed by a momentum measurement, the generating operator $\sigma$ depending on the initial probe state. A more realistic implementation of an $M^\sigma$ can be obtained as the high amplitude limit of the signal observable measured by an eight-port homodyne detector; for details, see \cite{KL2008b}.
\end{remark}

\subsection{Calibration error}

The idea of looking especially at states for which $E_\rho$ is sharply concentrated can be used also in a more general setting, and even gives a
possible alternative definition of the error quantities. The idea is that the supremum \eqref{distobs} over all states is not easily accessible to
experimental implementation. It seems more reasonable to just calibrate the performance of a measurement   $F$ as an approximate measurement of $E$ by looking
at the distributions $F_\rho$ for preparations for which $E_\rho$ is nearly a point measure, i.e., those for which $E$ ``has a sharp value''. The idea of calibration error was formalized in \cite{BuPe07} as a measure of {\em error bar width} which was shown to
obey a measurement uncertainty relation using the method developed in \cite{Werner04} and applied here.

For $\veps>0$, we define the {\it $\veps$-calibration error}, resp. the {\it calibration error} of $F$ with respect to $E$ as
\begin{eqnarray}\label{calbratEps}
  \Delta_\alpha^\veps(F,E)&=&\sup_{\rho,x}\Bigl\{D_\alpha(F_\rho,\delta_x)\Bigm| \ D_\alpha(E_\rho,\delta_x)\leq\veps\Bigr\}\\
  \Delta_\alpha^c(F,E)    &=&\lim_{\veps\to0}\Delta_\alpha^\veps(F,E)  \label{calbrate}
\end{eqnarray}
Here the limit in \eqref{calbrate} exists because \eqref{calbratEps} is a monotonely decreasing function.
By the triangle inequality, we have  $D_\alpha(F_\rho,\delta_x)\leq D_\alpha(F_\rho,E_\rho)+D(E_\rho,\delta_x)$ and, taking the supremum over $\rho$ and $x$ as in \eqref{calbratEps}
\begin{equation}\label{cal<Del}
  \Delta_\alpha^\veps(F,E)\leq D_\alpha(F,E)+\veps \quad\mbox{and }\ \Delta_\alpha^c(F,E)\leq D_\alpha(F,E).
\end{equation}
When $F$ just adds independent noise, there is also the complementary inequality, the direct analog of Lemma~\ref{lem:DobsConv}.

\begin{lem}\label{lem:DcobsConv}
Let $E$ be a sharp observable on $\Rl$, $\eta$ some probability measure on $\Rl$, and $F=\eta\ast E$. Then
\begin{equation}\label{DcobsConv}
      D_\alpha(\eta,\delta_0)-\veps \leq  \Delta^\veps_\alpha(F,E)\leq D_\alpha(\eta,\delta_0)+\veps\ ,
\end{equation}
so that letting $\veps\to0$ yields $\Delta^c_\alpha(F,E)=D_\alpha(\eta,\delta_0)$.
\end{lem}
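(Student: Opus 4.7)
The plan is to obtain the upper bound as essentially a reformulation of Lemma \ref{lem:DobsConv} together with the triangle inequality, and the lower bound by exploiting the fact that $E$ is sharp to produce near-calibrating states $\rho$ and then transporting the estimate to $F_\rho = \eta\ast E_\rho$ via translation invariance of the Wasserstein distance.

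For the upper bound, fix a state $\rho$ and point $x$ with $D_\alpha(E_\rho,\delta_x)\leq\veps$. Then by the triangle inequality
\begin{equation*}
  D_\alpha(F_\rho,\delta_x)\leq D_\alpha(F_\rho,E_\rho)+D_\alpha(E_\rho,\delta_x),
\end{equation*}
and the first term is $D_\alpha(\eta\ast E_\rho,E_\rho)\leq D_\alpha(\eta,\delta_0)$ by \eqref{convolveBound3}. Taking the supremum over admissible $(\rho,x)$ gives $\Delta_\alpha^\veps(F,E)\leq D_\alpha(\eta,\delta_0)+\veps$. (Alternatively, this is just \eqref{cal<Del} combined with Lemma \ref{lem:DobsConv}.)

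For the lower bound, I would use the same idea in reverse. Since $E$ is sharp, for every $\veps>0$ there exist $\rho$ and $x$ with $D_\alpha(E_\rho,\delta_x)\leq\veps$. For such a pair, observe that $\eta\ast\delta_x$ is just the shift of $\eta$ by $x$, so $D_\alpha(\eta\ast\delta_x,\delta_x)=D_\alpha(\eta,\delta_0)$ by translation invariance. Then
\begin{equation*}
  D_\alpha(\eta,\delta_0)=D_\alpha(\eta\ast\delta_x,\delta_x)\leq D_\alpha(\eta\ast\delta_x,\eta\ast E_\rho)+D_\alpha(\eta\ast E_\rho,\delta_x),
\end{equation*}
and \eqref{convolveBound2} bounds the first term by $D_\alpha(\delta_x,E_\rho)\leq\veps$, while the second is exactly $D_\alpha(F_\rho,\delta_x)$. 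Hence $D_\alpha(F_\rho,\delta_x)\geq D_\alpha(\eta,\delta_0)-\veps$, and taking the supremum yields $\Delta_\alpha^\veps(F,E)\geq D_\alpha(\eta,\delta_0)-\veps$.

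There is no real obstacle: both bounds are one-line triangle-inequality estimates, the only non-trivial ingredient being that sharpness of $E$ guarantees the existence of states $\rho$ for which $E_\rho$ is arbitrarily close to a point measure (this uses that spectral projections of the real line contain projections onto arbitrarily narrow intervals, so that picking a unit vector in the range of such a projection produces a state whose distribution under $E$ has arbitrarily small support). Letting $\veps\to 0$ in the two-sided bound then yields $\Delta^c_\alpha(F,E)=D_\alpha(\eta,\delta_0)$.
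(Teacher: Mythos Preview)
Your proof is correct and essentially identical to the paper's own argument: both the upper and lower bounds are obtained via the same triangle-inequality estimates, invoking \eqref{convolveBound3} for the upper bound and translation invariance together with \eqref{convolveBound2} for the lower bound. Your explicit remark that sharpness of $E$ guarantees the existence of calibrating states is a welcome clarification that the paper leaves implicit.
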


\begin{proof}
For any calibration state $\rho$, i.e., $D_\alpha(E_\rho,\delta_x)\leq\veps$, we have the upper bound
$D_\alpha(F_\rho,\delta_x)=D_\alpha(\eta\ast E_\rho,\delta_x)
   \leq D_\alpha(\eta\ast E_\rho,E_\rho)+D_\alpha(E_\rho,\delta_x)\leq D_\alpha(\eta,\delta_0)+\veps$.
For the complementary bound we use
$D_\alpha(\eta,\delta_0)=D_\alpha(\eta\ast\delta_x,\delta_x)
  \leq D_\alpha(\eta\ast\delta_x,\eta\ast E_\rho)+D_\alpha(\eta\ast E_\rho,\delta_x)\le \veps+D_\alpha(\eta\ast E_\rho,\delta_x)$.
Hence
\begin{eqnarray}
       D_\alpha(\eta,\delta_0)- \veps&\leq D_\alpha(F_\rho,\delta_x)\leq& D_\alpha(\eta,\delta_0)+\veps  \nonumber\\
       D_\alpha(\eta,\delta_0) -\veps                 &\leq \Delta^\veps_\alpha(F,E)\leq& D_\alpha(\eta,\delta_0)+\veps
\label{calibuplow}
\end{eqnarray}
where the second row is the supremum of the first over all $x$ and all calibrating states.
\end{proof}

Hence, in the case of convolution observables $F=\eta\ast E$ we have $D_\alpha(F,E)=\Delta^c_\alpha(F,E)$. In general, however, the inequality \eqref{cal<Del} is strict, as is readily seen by choosing a discrete metric on two points ($\Omega=\{0,1\}$). Then $D_\alpha(F,E)=\sup_\rho\,|\tr{\rho(F(\{1\})-E(\{1\}))}|/2$, but $\Delta_\alpha^c(F,E)$ is a similar expression with $\rho$ constrained to diagonal pure states.

\section{Error-disturbance relations: position and momentum}\label{sec:QPUR}
\subsection{Preparation uncertainty}
Before we can formulate the main result of our paper we state a generalization of the Kennard-Robertson-Weyl inequality for the $\alpha$-spreads introduced in \eqref{Delalpha}. This result improves the inequality derived by Hirschman \cite{Hirschman},
see also \cite{Cowling,Folland}, in that we now have an optimal lower bound. The details of the constants $c\sab$ as a function of $\alpha$ and $\beta$ will be studied numerically in Sect.~\ref{sec:constants}, with an overview given in Fig.~\ref{fig:ccprime}.

\begin{proposition}[Preparation Uncertainty]\label{thm:prepURpq}
Let $E^Q$ and $E^P$ be canonically conjugate position and momentum observables, and $\rho$ a density operator. Then, for any $1\leq \alpha,\beta<\infty$,
\begin{equation}\label{prepURpq}
  \Delta_\alpha(E_\rho^Q)\Delta_\beta(E_\rho^P)\geq c\sab\hbar,
\end{equation}
The constant $c\sab$ is connected to the ground state energy $g\sab$ of the Hamiltonian $H\sab=|Q|^\alpha+|P|^\beta$ by the equation
\begin{equation}\label{gcpq}
    c\sab = \alpha^{\frac 1\beta}\beta^{\frac 1\alpha}\left(\frac{g\sab}{\alpha+\beta}\right)^{\frac 1\alpha+\frac 1\beta}.
\end{equation}
The lower bound is attained exactly when $\rho$ arises from the ground state of the operator $H\sab$ by phase space translation and dilatation.
For $\alpha=\beta=2$, $H\sab$ is twice the harmonic oscillator Hamiltonian with ground state energy $g_{22}=1$, and  $c_{22}=1/2$.
\end{proposition}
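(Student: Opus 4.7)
My plan is to derive the product bound from the variational principle
\[
   g\sab = \inf_\rho \tr\rho H\sab, \qquad H\sab = |Q|^\alpha+|P|^\beta,
\]
by exploiting two symmetries of the setup: phase-space translation (to realize the infima in the definitions of $\Delta_\alpha$ and $\Delta_\beta$) and dilation (to convert the resulting sum bound into a product bound).

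First I would remove the infima in \eqref{Delalpha}. Since the Weyl operators implement joint phase-space translations, shifting $E^Q_\rho$ by $q$ and $E^P_\rho$ by $p$, one may replace $\rho$ by $W(q_0,p_0)\rho W(q_0,p_0)^*$ with $(q_0,p_0)$ chosen to realize (or, by a limiting argument, approximate) the two infima at the origin. Both spreads are invariant under this substitution, so we may assume
\[
   \Delta_\alpha(E^Q_\rho)^\alpha = A := \tr\rho|Q|^\alpha, \qquad \Delta_\beta(E^P_\rho)^\beta = B := \tr\rho|P|^\beta;
\]
if $A$ or $B$ is infinite the claim is trivial (note $\Delta_\beta>0$ since $E^P$ has no point part).

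Next I would apply the dilation unitary $(U_\lambda\psi)(x)=\sqrt\lambda\,\psi(\lambda x)$, for which $U_\lambda^* Q U_\lambda=\lambda^{-1}Q$ and $U_\lambda^* P U_\lambda=\lambda P$. Writing $\rho_\lambda=U_\lambda^*\rho U_\lambda$, the variational principle gives
\[
   \lambda^\alpha A + \lambda^{-\beta} B = \tr\rho_\lambda H\sab \geq g\sab \qquad (\lambda>0).
\]
The optimal $\lambda$ solves $\alpha\lambda^{\alpha-1}A=\beta\lambda^{-\beta-1}B$, giving $\lambda_*^{\alpha+\beta}=\beta B/(\alpha A)$. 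Substituting and using the identity
\[
   (\beta/\alpha)^{\alpha/(\alpha+\beta)}+(\alpha/\beta)^{\beta/(\alpha+\beta)} = \frac{\alpha+\beta}{\alpha^{\alpha/(\alpha+\beta)}\beta^{\beta/(\alpha+\beta)}},
\]
(immediate from $u+v=1$ with $u=\alpha/(\alpha+\beta)$, $v=\beta/(\alpha+\beta)$) yields
\[
   A^{\beta/(\alpha+\beta)} B^{\alpha/(\alpha+\beta)} \geq \frac{g\sab\,\alpha^{\alpha/(\alpha+\beta)}\beta^{\beta/(\alpha+\beta)}}{\alpha+\beta}.
\]
Raising both sides to the power $(\alpha+\beta)/(\alpha\beta)=1/\alpha+1/\beta$ converts the left-hand side into $A^{1/\alpha}B^{1/\beta}=\Delta_\alpha(E^Q_\rho)\Delta_\beta(E^P_\rho)$ and the right-hand side into exactly $c\sab$ as displayed in \eqref{gcpq}. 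The factor $\hbar$ is reinstated by dimensional analysis: $\hbar$ enters only through $[Q,P]=i\hbar$ and rescales all bounds linearly.

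The equality case follows by running the argument backwards: saturation forces $\tr\rho_{\lambda_*}H\sab=g\sab$, so $\rho_{\lambda_*}$ must be a ground state of $H\sab$, and hence $\rho$ itself is obtained from the ground state by a phase-space translation followed by a dilation. The main technical point I expect to be the obstacle is the ground-state theory for $H\sab$: one needs that $H\sab$ (suitably defined, e.g.\ as the closure of a quadratic form on a dense domain) has compact resolvent, so that the infimum is attained, and that the minimizer is essentially unique. Both follow from standard Schr\"odinger-operator arguments because $|x|^\alpha+|p|^\beta\to\infty$ in phase space, but uniqueness modulo the affine (translation $+$ dilation) action requires care, especially for $\alpha,\beta\neq 2$ where no explicit Gaussian ansatz is available. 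As a sanity check, $\alpha=\beta=2$ gives $H_{22}=Q^2+P^2$ with ground-state energy $g_{22}=1$, and the formula produces $c_{22}=2^{1/2}\cdot 2^{1/2}\cdot(1/4)=1/2$, recovering the Kennard-Robertson-Weyl bound $\hbar/2$.
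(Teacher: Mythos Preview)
Your proof is correct and follows essentially the same route as the paper: both exploit the phase-space translation symmetry to reduce the $\Delta$-infima to moments at the origin, then use the dilation symmetry to convert the ground-state lower bound $\lambda^\alpha A+\lambda^{-\beta}B\geq g\sab$ into a product inequality by optimizing over $\lambda$. The only cosmetic difference is that the paper conjugates the Hamiltonian (writing $H\sab(p,q,\lambda)=\lambda^\alpha|Q-q|^\alpha+\lambda^{-\beta}|P-p|^\beta$) while you conjugate the state, which is of course equivalent.
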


The Hamiltonian $H\sab$ appears here mainly through the quadratic from $\brAAket\psi{H\sab}\psi$ where $\psi$ runs over, say, the unit vectors in the Schwartz space of tempered functions. The inequality \eqref{prepURpq} depends only on the lower bound $g\sab$ of this form.

This makes sense also for $\alpha=\infty$, when $\brAAket\psi{\abs q^\infty}\psi$ is interpreted by the limit $\alpha\to\infty$, i.e., as $\infty$ unless
$\psi$ vanishes almost everywhere outside $[-1,1]$, in which case the expectation is zero. The effect of this singular ``potential'' is to confine the particle to the box $[-1,1]$. Note that in this case \eqref{gcpq} simplifies to $c_{\infty\beta}=g_{\infty\beta}^{1/\beta}$. Of course, since $\psi$ cannot be compactly supported in both position and momentum we have $c_{\infty\infty}=\infty$.

\begin{proof}
Consider the family of  operators
\begin{equation}\label{Hscaled}
  H\sab(p,q,\lambda)=\lambda^\alpha \abs{Q-q\idty}^\alpha + \lambda^{-\beta}\abs{P-p\idty}^\beta\
       \geq g\sab\,\idty,
\end{equation}
obtained from $H\sab$ by shifting in phase space by $(q,p)$, and by a dilatation $(Q,P)\mapsto(\lambda Q,\lambda^{-1}P)$. Since these operations are unitarily implemented, the lower bound $g\sab$ for all these operators is  independent of $p,q,\lambda$. Now, for a given $\rho$, we may assume that $\Delta_\alpha(E_\rho^Q)$ and $\Delta_\beta(E_\rho^P)$ are both finite, since these uncertainties do not vanish for any density operator, and one infinite factor hence renders the inequality trivial. Let $q$ be the point for which $D_\alpha(E_\rho^Q,\delta_q)$ attains its minimum $\Delta_\alpha(E_\rho^Q)$, and choose $p$ similarly for $P$. Then by taking the expectation of \eqref{Hscaled} with $\rho$ we obtain the inequality
\begin{equation}\label{rhoHscaled}
  \lambda^\alpha\Delta_\alpha(E_\rho^Q)^\alpha+\lambda^{-\beta}\Delta_\beta(E_\rho^P)^\beta\geq g\sab.
\end{equation}
The minimum of the left hand side with respect to $\lambda$ is attained at
\begin{equation}\label{lambdaMin}
  \lambda=\left(\frac{\beta\, \Delta_\beta(E_\rho^P)^\beta}{\alpha\,\Delta_\alpha(E_\rho^Q)^\alpha }\right)^{1/({\alpha+\beta})}.
\end{equation}
Inserting this into \eqref{rhoHscaled} gives an expression that depends only on the uncertainty product $u=\Delta_\alpha(E_\rho^Q)\Delta_\beta(E_\rho^P)$, namely
\begin{equation}\label{uproductbound}
  u^{\alpha\beta/(\alpha+\beta)}\,\alpha^{-\alpha/(\alpha+\beta)}\,\beta^{-\beta/(\alpha+\beta)}\,(\alpha+\beta)\geq g\sab.
\end{equation}
Now solving for $u$ gives the uncertainty inequality. Moreover, since the left hand side is still nothing but the expectation of $H\sab(p,q,\lambda)$ with a suitable choice of parameters, equality holds exactly if $\rho$ is the ground state density operator of $H\sab(p,q,\lambda)$. But since this operator arises by dilatation and shifts from $H\sab$, its ground state must arise by the same operations from the ground state of $H\sab$.
\end{proof}

For the statements about equality and near equality, which are the subject of the following theorem, we need more information about the operator $H\sab$, particularly its low-lying eigenvalues. Thus we have to turn the quadratic form into a bona fide selfadjoint operator by the Friedrichs extension.
This approach also regulates how to understand the case $\alpha=\infty$; the resulting operator lives on $\LL^2([-1,1],dq)$, with the domain chosen so that the extension of the function to the whole line is in the domain of $\abs P^\alpha$. This requires the function and some derivatives to vanish at the boundary. Are there eigenvalues at the bottom of the spectrum? Intuitively, $H\sab$ is the quantization of a phase space function diverging at infinity, so should have a purely discrete spectrum with eigenvalues going to infinity. This can be verified by  the Golden-Thompson inequality according to which, for any $\lambda\geq0$,
\begin{equation}
\tr{e^{-\lambda H\sab}}
    \leq \tr{e^{-\lambda|Q|^\alpha}e^{-\lambda|P|^\beta}}
    =\int e^{-\lambda|q|^\alpha}\,dq\int e^{-\lambda|p|^\beta}\,dp <\infty,
\end{equation}
see, for instance, \cite[p. 94]{Simon1979}. Thus, the positive operator on the left is trace class (and thus compact), and,
therefore, the spectrum of the generator  of the semigroup $e^{-\lambda H\sab}$,
$\lambda>0$, consists of a countable discrete set of eigenvalues each of finite multiplicity \cite[Theorem 2.20]{Davies1980}.
Since each of the terms in $H\sab$ already has strictly positive expectation in any state, the lowest eigenvalue $g\sab$ is strictly positive.
Although this might be an interesting task, we do not prove more fine points about the spectrum of $H\sab$ in this paper. Supported by the numerical calculations (on which we anyhow have to rely for the concrete values) and some solvable cases, we take for granted that the ground state is non-degenerate and lies in the symmetric subspace, and the first excited state has strictly higher energy and lies in the odd subspace.
The gap $g'\sab-g\sab$ plays a crucial role in showing the stability of the minimizing states: A state with near-minimal uncertainty product must be close to a state with exactly minimal uncertainty product. The precise statement is as follows:

\begin{proposition}[Near-minimal Preparation Uncertainty]\label{thm:NprepURpq}
Under the conditions of Proposition~\ref{thm:prepURpq} consider the case of equality  in \eqref{prepURpq}.
Suppose that this is only nearly the case, i.e., the uncertainty product is
\begin{equation}\label{NprepURpq}
 c\sab\hbar\le u=\Delta_\alpha(E_\rho^Q)\Delta_\beta(E_\rho^P)<  c'\sab\hbar,
\end{equation}
where $c'\sab$ is related to the energy $g'\sab$ of the first excited state of $H\sab$ by \eqref{gcpq}. Then there is a state $\rho'$ minimizing \eqref{prepURpq} exactly, such that with $\gamma=\alpha\beta/(\alpha+\beta)$,
\begin{equation}\label{nearPURbound}
   \norm{\rho-\rho'}_1\leq 2\sqrt{\frac{u^\gamma-c\sab^\gamma}{c\sab^{\prime\,\gamma}-c\sab^\gamma}}\,.
\end{equation}
\end{proposition}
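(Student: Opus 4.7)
The strategy is to use the spectral gap of the Hamiltonian $H\sab(p,q,\lambda)$ introduced in the proof of Proposition~\ref{thm:prepURpq}. Given $\rho$ with finite uncertainty product, pick $q,p$ to be the minimizing points for the $\alpha$- and $\beta$-deviations and $\lambda$ as in \eqref{lambdaMin}. With these (state-dependent) choices, the computation already done in the proof of Proposition~\ref{thm:prepURpq} shows
\begin{equation}\label{Hexpectu}
  \tr{\rho\, H\sab(p,q,\lambda)}= C\sab\, u^{\gamma},
  \qquad C\sab=(\alpha+\beta)\,\alpha^{-\alpha/(\alpha+\beta)}\beta^{-\beta/(\alpha+\beta)},
\end{equation}
where $\gamma=\alpha\beta/(\alpha+\beta)$. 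Equation \eqref{gcpq} may be rewritten as $g\sab=C\sab\,c\sab^{\gamma}$, and analogously $g'\sab=C\sab\,c'\sab{}^{\gamma}$.

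Next I would exploit the non-degeneracy of the ground state together with the spectral gap. Denote by $P_0$ the orthogonal projection onto the ground state of $H\sab(p,q,\lambda)$; since phase space translations and the dilatation are unitary, this is the unitary image of the ground state projection of $H\sab$ itself, and so $\rho':=P_0$ is a state for which equality holds in \eqref{prepURpq}. By the spectral decomposition of $H\sab(p,q,\lambda)$ and the assumption that the second eigenvalue is $g'\sab$,
\begin{equation*}
   H\sab(p,q,\lambda)\ \ge\ g\sab\,P_0+g'\sab(\idty-P_0).
\end{equation*}
Taking the expectation in $\rho$ and combining with \eqref{Hexpectu} yields
\begin{equation*}
    C\sab\,u^{\gamma}\ \ge\ g\sab\,\tr{\rho P_0}+g'\sab\bigl(1-\tr{\rho P_0}\bigr),
\end{equation*}
which, after substituting $g\sab=C\sab c\sab^{\gamma}$, $g'\sab=C\sab c'\sab{}^{\gamma}$ and rearranging, gives the fidelity bound
\begin{equation*}
   1-\tr{\rho\rho'}\ \le\ \frac{u^{\gamma}-c\sab^{\gamma}}{c'\sab{}^{\gamma}-c\sab^{\gamma}}.
\end{equation*}
Under the hypothesis $u<c'\sab\hbar$ this is strictly less than $1$, so the overlap with the minimizer is genuinely non-trivial.

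Finally, I would convert the fidelity estimate into a trace-norm bound using the standard Fuchs--van de Graaf inequality: for any density operator $\rho$ and any pure state $\rho'=\kettbra{\psi_0}$,
\begin{equation*}
   \norm{\rho-\rho'}_1\ \le\ 2\sqrt{1-\braAket{\psi_0}{\rho}{\psi_0}}\ =\ 2\sqrt{1-\tr{\rho\rho'}},
\end{equation*}
and insert the previous estimate to obtain \eqref{nearPURbound}. The main obstacle is really the input from the spectral theory of $H\sab$: the argument requires both non-degeneracy of the ground state and a genuine gap to $g'\sab$. These properties are asserted (on the basis of numerical evidence and solvable cases) in the discussion preceding the proposition, so once they are taken for granted the rest of the argument is a short computation. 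A small additional point is that the parameters $(p,q,\lambda)$ and hence the minimizer $\rho'$ depend on $\rho$; this is not a problem because the statement only asserts existence of such a $\rho'$ for each $\rho$.
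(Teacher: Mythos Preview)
Your argument is correct and essentially identical to the paper's own proof. The paper isolates the spectral-gap step as a separate lemma (the operator inequality $H\ge E_0\kettbra{\psi_0}+E_1(\idty-\kettbra{\psi_0})$ yielding the fidelity bound and then the trace-norm bound $\norm{\rho-\kettbra{\psi_0}}_1\le 2\sqrt{1-f}$), whereas you carry out the same computation inline and name the last step Fuchs--van de Graaf; the choice of $(p,q,\lambda)$, the identification of $\tr\rho H\sab(p,q,\lambda)=C\sab u^\gamma$, and the substitution $g\sab=C\sab c\sab^\gamma$, $g'\sab=C\sab c'\sab{}^{\gamma}$ all match the paper exactly.
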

The bound in  \eqref{nearPURbound} approaches zero as $u\to c\sab$, and becomes vacuous for $u>c'\sab$. The constants $c\sab,c'\sab$ are shown in Fig.~\ref{fig:ccprime}, and indications how to compute them will be given in Sect.~\ref{sec:constants}.

The assumption \eqref{NprepURpq} entails that  the left hand side of \eqref{uproductbound} is below the next eigenvalue $g'\sab$. In this case we get information about how close $\rho$ must be to the ground state $\rho'$. This is obtained via a simple and well-known Lemma, whose straightforward application to \eqref{uproductbound} then gives the inequality \eqref{nearPURbound}.

\begin{lem}Let $H$ be a selfadjoint operator with non-degenerate ground state $\psi_0$ with energy $E_0$ such that the rest of the spectrum lies above
$E_1>E_0$. Then for any density operator $\rho$ we have $\braket{\psi_0}{\rho\psi_0}\geq (E_1-\tr\rho H)/(E_1-E_0)$ and the trace norm bound
\begin{equation}\label{gaplemma}
  \Bigl\Vert \rho-\kettbra{\psi_0}\Bigr\Vert_1 \leq 2\sqrt{\frac{\tr\rho H-E_0}{E_1-E_0}}.
\end{equation}
\end{lem}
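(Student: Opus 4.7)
The plan is to reduce the statement to a rank-one lower bound on the Hamiltonian and then transfer that bound to a trace-norm estimate via a standard pure-state inequality.

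First, I would establish the operator inequality
\[
H \ge E_0 \kettbra{\psi_0} + E_1\bigl(\idty-\kettbra{\psi_0}\bigr).
\]
Writing $P=\kettbra{\psi_0}$ and $P_\perp=\idty-P$, the eigenvalue equation $H\psi_0=E_0\psi_0$ kills the cross terms $PHP_\perp$ and $P_\perp HP$, while the spectral hypothesis---that the rest of the spectrum of $H$ lies at or above $E_1$---applied on the invariant subspace $P_\perp\hi$ gives $P_\perp HP_\perp\ge E_1 P_\perp$. Taking the trace with $\rho$ and setting $p:=\braket{\psi_0}{\rho\psi_0}$ then yields $\tr\rho H\ge E_0 p+E_1(1-p)=E_1-(E_1-E_0)p$. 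Solving for $p$ produces the first assertion of the lemma and, equivalently, $1-p\le(\tr\rho H-E_0)/(E_1-E_0)$.

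To pass from this bound on the ground-state overlap to the trace-norm estimate \eqref{gaplemma}, I would invoke the Fuchs--van de Graaf inequality, which for a pure reference state $\sigma=\kettbra{\psi_0}$ collapses (since $\sqrt\sigma=\sigma$ and $\sqrt\sigma\rho\sqrt\sigma=p\sigma$, so the fidelity equals $\sqrt p$) to
\[
\norm{\rho-\kettbra{\psi_0}}_1\le 2\sqrt{1-p}.
\]
Combining with the preceding bound on $1-p$ gives \eqref{gaplemma}.

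The only step that is not entirely mechanical is the last one, where the mild obstacle is whether to cite Fuchs--van de Graaf or write it out. A short self-contained alternative is to decompose $\rho-P$ in $2\times 2$ block form relative to $\Cx\psi_0\oplus\psi_0^\perp$, use that $\rho-P$ is self-adjoint and traceless (so $\norm{\rho-P}_1=2\tr(\rho-P)_+$), and estimate the off-diagonal block $P\rho P_\perp$ by a Cauchy--Schwarz argument in Hilbert--Schmidt norm; this reproduces $\norm{\rho-P}_1\le 2\sqrt{1-p}$ without appeal to any external result.
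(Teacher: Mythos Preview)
Your proposal is correct and follows essentially the same line as the paper's proof: the same operator inequality $H\ge E_0\kettbra{\psi_0}+E_1(\idty-\kettbra{\psi_0})$, the same trace against $\rho$ to bound the overlap $p$, and the same final step $\norm{\rho-\kettbra{\psi_0}}_1\le2\sqrt{1-p}$. The only cosmetic difference is that the paper justifies this last inequality by proving it for pure $\rho$ and extending to mixed states via Jensen's inequality for the concave square root, whereas you invoke Fuchs--van de Graaf (or your block-decomposition alternative); these are equivalent standard arguments.
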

\begin{proof}
The statements about the spectrum of $H$ are equivalent to the operator inequality
\begin{equation}\label{opgap}
  H\geq E_0 \kettbra{\psi_0}+ E_1\bigl(\idty-\kettbra{\psi_0}\bigr).
\end{equation}
Taking the trace with $\rho$ gives the bound on the fidelity $f=\braket{\psi_0}{\rho\psi_0}$. The bound $\norm{\rho-\kettbra{\psi_0}}_1\leq2\sqrt{1-f}$ holds in general, and is proved easily for pure states $\rho$ and extended to mixed ones by Jensen's inequality for the concave square root function.
\end{proof}

\begin{figure}[ht]
\centering
  \includegraphics[width=12cm]{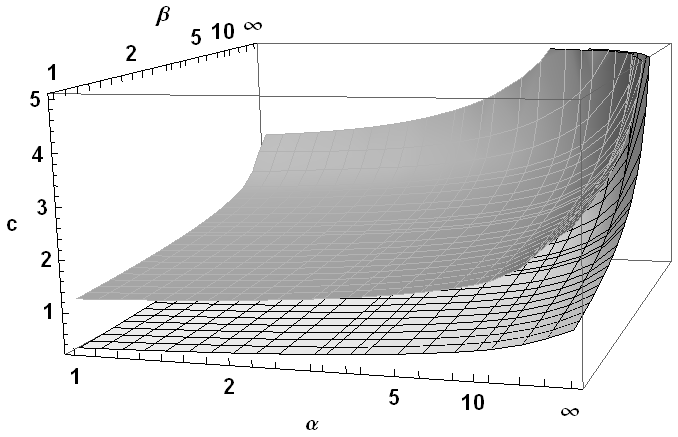}
	\caption{The constants $c\sab$ and $c'\sab$ appearing in Propositions~\ref{thm:NprepURpq} and \ref{thm:prepURpq}, as determined numerically in
             Sect.~\ref{sec:constants}. The $\alpha$ and $\beta$ axes have been scaled according to $\alpha\mapsto(\alpha-1)/(\alpha+1)$ to include the whole range $1\leq\alpha\leq\infty$.}
	\label{fig:ccprime}
\end{figure}

\subsection{The covariant case}
Here we will study the simple case of covariant observables described in Sect.~\ref{sec:covar}. By virtue of \eqref{povm} these are parameterized by a density operator $\sigma$, and we will see that both measurement uncertainty quantities (by distance of observables and by calibration) for the marginals of the covariant observable are simply equal to the corresponding preparation uncertainties for the density operator $\sigma$, considered as a state. This explains why the constants for our measurement and preparation uncertainties are also just the same. The formal statement is as follows:

\begin{proposition}\label{thm:MURcov}
Let $M$ be a covariant phase space observable, generated by a density operator $\sigma$, with position and momentum marginals $M^Q$ and $M^P$. Then for all $\alpha,\beta\in[1,\infty]$:
\begin{eqnarray}
  D_\alpha(M^Q,E^Q)=\Delta^c_\alpha(M^Q,E^Q)&=&D_\alpha(E^Q_\sigma,\delta_0)\, , \\
  D_\beta(M^P,E^P)=\Delta^c_\beta(M^P,E^P)&=&D_\beta(E^P_\sigma,\delta_0)\ .
\end{eqnarray}
Suppose that the product $u$ of these uncertainties is close to its minimum $c\sab$. Then there is another covariant observable $M'$ with exactly minimal  uncertainty product such that
\begin{equation}\label{nearMURbound}
  \norm{M-M'}\leq2\sqrt{\frac{u^\gamma-c\sab^\gamma}{c\sab^{\prime\,\gamma}-c\sab^\gamma}}
\end{equation}
\end{proposition}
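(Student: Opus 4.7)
The first two pairs of equalities are immediate applications of Lemmas~\ref{lem:DobsConv} and \ref{lem:DcobsConv} via the marginal formula \eqref{marginalconcolve}: the $Q$-marginal is a convolution observable $M^Q = \eta\ast E^Q$ with $\eta = E^Q_\sigma$ and sharp target $E^Q$, so Lemma~\ref{lem:DobsConv} gives $D_\alpha(M^Q,E^Q) = D_\alpha(E^Q_\sigma,\delta_0)$ and Lemma~\ref{lem:DcobsConv} gives $\Delta_\alpha^c(M^Q,E^Q) = D_\alpha(E^Q_\sigma,\delta_0)$; the momentum case is identical. No further work is needed here.

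For the bound \eqref{nearMURbound}, the strategy combines a contraction estimate $\norm{M-M'}\leq\norm{\sigma-\sigma'}_1$ for any two covariant observables with generating operators $\sigma,\sigma'$, together with the gap-lemma argument of Proposition~\ref{thm:NprepURpq} applied to $\sigma$ itself but using the \emph{unshifted} Hamiltonian $H\sab(0,0,\lambda)$. For the contraction, take $-1\leq f\leq 1$ and a density operator $\rho$, use \eqref{povm}, and move the Weyl operators to $\rho$ by cyclicity of the trace:
\[
\tr\rho\int f\,d(M-M') \;=\; \tr\bigl[\Pi(\sigma-\sigma')\Pi\,A_f\bigr], \qquad A_f := \int f\qp\,W\qp^*\rho W\qp\,\frac{dq\,dp}{2\pi\hbar}.
\]
The resolution of identity $\int W\qp^*\rho W\qp\,dq\,dp/(2\pi\hbar)=\idty$---the very property that normalizes any covariant phase-space observable---forces $-\idty\leq A_f\leq\idty$, hence $\opNorm{A_f}\leq 1$. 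Trace duality $|\tr XY|\leq\norm{X}_1\opNorm{Y}$ together with unitary invariance of the trace norm under conjugation by $\Pi$ then yields $\norm{M-M'}\leq\norm{\sigma-\sigma'}_1$ on taking suprema over $\rho$ and $f$ as in \eqref{varnormobs}.

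The expectation of $H\sab(0,0,\lambda) = \lambda^\alpha\abs{Q}^\alpha + \lambda^{-\beta}\abs{P}^\beta$ in the state $\sigma$ equals $\lambda^\alpha D_\alpha(E^Q_\sigma,\delta_0)^\alpha + \lambda^{-\beta} D_\beta(E^P_\sigma,\delta_0)^\beta$, and minimizing in $\lambda$ exactly as in \eqref{lambdaMin}--\eqref{uproductbound} yields $\tr\sigma H\sab(0,0,\lambda_{\rm opt}) = (u/c\sab)^\gamma g\sab$, where I have used \eqref{gcpq} in the form $c\sab^\gamma = K^\gamma g\sab$ and analogously $c\sab^{\prime\,\gamma} = K^\gamma g'\sab$ for the first excited eigenvalue, with the same $\alpha,\beta$-dependent constant $K$. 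The gap lemma from the proof of Proposition~\ref{thm:NprepURpq} then delivers a ground state $\sigma'$ of $H\sab(0,0,\lambda_{\rm opt})$ satisfying
\[
\norm{\sigma-\sigma'}_1 \;\leq\; 2\sqrt{\frac{u^\gamma - c\sab^\gamma}{c\sab^{\prime\,\gamma} - c\sab^\gamma}}\,.
\]
Since $\sigma'$ is a dilation of the parity-symmetric $H\sab$ ground state without any phase-space translation, $E^Q_{\sigma'}$ and $E^P_{\sigma'}$ are centered at the origin; hence the covariant observable $M'$ generated by $\sigma'$ via \eqref{povm} has marginal errors $D_\alpha(E^Q_{\sigma'},\delta_0)=\Delta_\alpha(E^Q_{\sigma'})$ and $D_\beta(E^P_{\sigma'},\delta_0)=\Delta_\beta(E^P_{\sigma'})$, so it achieves exactly the minimal uncertainty product $c\sab\hbar$. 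Combined with the contraction estimate this yields \eqref{nearMURbound}.

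The main technical hurdle is the contraction $\norm{M-M'}\leq\norm{\sigma-\sigma'}_1$: while short, it rests squarely on the resolution-of-identity property of the Weyl system applied to the positive kernel $W\qp^*\rho W\qp$, and a bit of care is needed to interpret $A_f$ as a weak integral bounded in the ultraweak sense by $\idty$. The remaining algebra parallels Propositions~\ref{thm:prepURpq}--\ref{thm:NprepURpq}; the one genuine novelty is that the origin-centered Hamiltonian $H\sab(0,0,\lambda)$ must be used in place of the spread-adjusted $H\sab(p,q,\lambda)$, because any bias of $\sigma$ inflates $D_\alpha(E^Q_\sigma,\delta_0)$ directly (as opposed to the infimum $\Delta_\alpha(E^Q_\sigma)$), and the unshifted Hamiltonian simultaneously sees spread and bias.
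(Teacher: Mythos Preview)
Your proof is correct and follows the paper's structure: Lemmas~\ref{lem:DobsConv} and~\ref{lem:DcobsConv} for the equalities, then the contraction $\norm{M-M'}\leq\norm{\sigma-\sigma'}_1$ combined with the gap-lemma bound on $\norm{\sigma-\sigma'}_1$.

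Two minor differences are worth flagging. First, your contraction argument via the dual operator $A_f$ and the resolution of identity is a valid alternative to the paper's one-liner $\norm{M_\rho-M'_\rho}_1=\norm{\rho\ast(\sigma-\sigma')}_1\leq\norm{\sigma-\sigma'}_1$, which simply invokes the Banach-algebra contractivity of the convolution from Remark~2; both routes are short, and yours is arguably more self-contained. Second, your explicit use of the \emph{unshifted} Hamiltonian $H\sab(0,0,\lambda)$ is in fact more careful than the paper's bare citation of Proposition~\ref{thm:NprepURpq}: taken literally, that proposition would produce a $\sigma'$ translated to the centers $(q,p)$ of $E^Q_\sigma$ and $E^P_\sigma$, which need not lie at the origin, so the resulting $M'$ would minimize the \emph{preparation} spreads $\Delta_\alpha,\Delta_\beta$ but not necessarily the \emph{measurement} errors $D_\alpha(E^Q_{\sigma'},\delta_0),D_\beta(E^P_{\sigma'},\delta_0)$. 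Your choice of the origin-centered Hamiltonian guarantees a centered $\sigma'$, so $M'$ genuinely achieves the minimal product $c\sab\hbar$; this closes a small gap the paper leaves implicit.
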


The finiteness of the uncertainty product implies
 that $E^Q_\sigma,E^P_\sigma$ have finite moments of degree $\alpha,\beta<\infty$.
If $\alpha=\infty$  then $\supp(E^Q_\sigma)$ is bounded, $\beta<\infty$ and $\int |p|^\beta\,dE^P_\sigma(p)<\infty$.

\begin{proof}
The equalities are direct applications of Lemmas \ref{lem:DobsConv} and \ref{lem:DcobsConv}.  Therefore, for a near-minimal uncertainty product, we can apply Prop.~\ref{thm:NprepURpq} to conclude that there is a density operator $\sigma'$ with exactly minimal uncertainty product, which is norm close to $\sigma$. Then the corresponding covariant observable is also close to $M$. It remains to show the norm estimate $\norm{M-M'}\leq\norm{\sigma-\sigma'}_1$. This follows because, for any input state $\rho$, we have $\norm{M_\rho-M'_\rho}_1=\norm{\rho\ast(\sigma-\sigma')}_1\leq\norm{\sigma-\sigma'}_1$.
\end{proof}

\subsection{The general case}
The main result of this paper is the following measurement uncertainty relation.

\begin{thm}\label{thm:main}
Let $M$ be a  phase space observable and $1\leq\alpha,\beta\leq\infty$. Then
\begin{eqnarray}
  D_\alpha(M^Q,E^Q)\, D_\beta(M^P,E^P)\               &\geq& c\sab\hbar\quad \mbox{and}\\
  \Delta^c_\alpha(M^Q,E^Q)\, \Delta^c_\beta(M^P,E^P)\ &\geq& c\sab\hbar\, ,
\end{eqnarray}
provided that in each inequality the quantities on the left hand side are finite. The constants $c\sab$ are the same as in Proposition~\ref{thm:prepURpq}.
\end{thm}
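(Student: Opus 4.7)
The plan is to reduce the general case to the covariant one (Proposition~\ref{thm:MURcov}) via the averaging technique of \cite{Werner04}, already foreshadowed in the first remark of Section~\ref{sec:covar}. If either factor on the left is infinite there is nothing to prove, so I assume from now on that all the relevant marginal uncertainties are finite.

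For each $(q,p)\in\Rl^2$ introduce the phase-space translate of $M$,
\begin{equation*}
M_{q,p}(Z) = W(q,p)^*\, M\bigl(Z+(q,p)\bigr)\, W(q,p).
\end{equation*}
A short bookkeeping using \eqref{Weylop} shows that $(M_{q,p})^Q_\rho$ is the translate by $-q$ of $M^Q_{\rho'}$, while $E^Q_\rho$ is the translate by the same $-q$ of $E^Q_{\rho'}$, where $\rho'=W(q,p)\rho W(q,p)^*$; an analogous identity holds for momentum with the roles of $q$ and $p$ interchanged. Because $D_\alpha$ is invariant under a common translation of both arguments, and the shift $(\rho,x)\mapsto(W(q,p)\rho W(q,p)^*,\,x+q)$ bijects the set of calibration pairs for $E^Q$ with itself, both $D_\alpha(\,\cdot\,,E^Q)$ and $\Delta^c_\alpha(\,\cdot\,,E^Q)$ are invariant under $M\mapsto M_{q,p}$, and likewise for the momentum marginal.

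Now assemble the covariant averaged observable
\begin{equation*}
\bar M(Z) = \mathrm{mean}_{(q,p)\in\Rl^2}\, M_{q,p}(Z),
\end{equation*}
where the mean is realized via an invariant mean on the amenable group $\Rl^2$, or equivalently as a weak*-cluster point of Ces\`aro averages of $M_{q,p}$ over boxes $[-R,R]^2$ as $R\to\infty$. By construction $\bar M$ satisfies the covariance relation \eqref{cov}, so by \eqref{povm} it is parameterized by some density operator $\sigma$. Averaging cannot increase the uncertainty quantities: convexity of $\mu\mapsto D_\alpha(\mu, E^Q_\rho)^\alpha$ (from the pseudoconvexity of $D_\alpha^\alpha$) together with the translation invariance just established yields
\begin{equation*}
D_\alpha(\bar M^Q_\rho, E^Q_\rho)^\alpha \leq \mathrm{mean}_{(q,p)}\, D_\alpha\bigl((M_{q,p})^Q_\rho, E^Q_\rho\bigr)^\alpha \leq D_\alpha(M^Q, E^Q)^\alpha,
\end{equation*}
and supping over $\rho$ gives $D_\alpha(\bar M^Q, E^Q)\leq D_\alpha(M^Q, E^Q)$. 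The corresponding bound $\Delta^c_\alpha(\bar M^Q, E^Q)\leq\Delta^c_\alpha(M^Q, E^Q)$ follows from a parallel term-by-term argument applied to the defining integral $\int D(y,x)^\alpha\,d\bar M^Q_\rho(y)$ in \eqref{calbratEps}, using that for any calibration pair $(\rho,x)$ the shifted pair $(W(q,p)\rho W(q,p)^*, x+q)$ is again a calibration pair. The momentum marginal is handled identically. Proposition~\ref{thm:MURcov} applied to $\bar M$ then delivers the two inequalities of the theorem for $\bar M$, and the monotonicity bounds transfer them to $M$.

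The main obstacle is making the averaging rigorous over the noncompact group $\Rl^2$: there is no translation-invariant probability measure to integrate against, so the construction of $\bar M$ requires either an invariant mean on $L^\infty(\Rl^2)$ (guaranteed by amenability of $\Rl^2$) or a careful weak*-limit argument, after which one must verify that the limiting object is a genuine $\sigma$-additive POVM rather than a merely finitely additive set function, that positivity and normalization survive the limit, and that the monotonicity inequalities pass through lower semicontinuity. The finiteness of the uncertainty product is precisely what is needed to keep the integrands controlled so that these limits behave as expected.
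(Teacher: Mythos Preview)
Your strategy is the same as the paper's: shift $M$ in phase space, average, and reduce to the covariant case handled in Proposition~\ref{thm:MURcov}. Your observation that the uncertainty quantities are invariant under $M\mapsto M_{q,p}$, and the convexity argument that averaging can only decrease them, are both correct and appear in the paper in the form of showing that the constraint sets $\MM,\MM_c$ are convex and invariant under these translations.

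The gap is exactly where you locate it in your final paragraph, but you do not close it. An invariant mean on $\Ell^\infty(\Rl^2)$ is only finitely additive, so the object $\bar M$ you write down is a priori only a finitely additive positive operator-valued set function, and there is no reason for it to be normalized: mass can leak to infinity under the limit. Your closing sentence (``the finiteness of the uncertainty product is precisely what is needed\ldots'') is the right intuition but is not a proof. The paper's proof is precisely the work of turning this intuition into an argument: it replaces the invariant-mean construction by the Markov--Kakutani fixed point theorem, applied to the weak topology on observables, and the bulk of the effort (Propositions~\ref{thm:nonnormalized} and~\ref{thm:Ncompact}, Lemma~\ref{lem:sneak} and its corollary) goes into showing that the sets $\NN$, $\NN_c$ of \emph{normalized} observables with bounded uncertainties are weakly compact. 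The key step is Lemma~\ref{lem:sneak}: the uniform bound on $\int(|q|^\alpha+|p|^\beta)\,dM_\rho$ coming from the finite uncertainties prevents probability from escaping to infinity along any weakly convergent net. In the calibration case this requires an additional trick (convolving position- and momentum-calibrating Gaussians to build a single faithful $\rho$ for which the bound holds).

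Two smaller points. First, you do not treat the case where one exponent is infinite; the paper handles this by continuity in $\alpha,\beta$, since the Kantorovich duality used in the compactness proof is unavailable for $\alpha=\infty$. Second, passing a Jensen-type inequality through an invariant mean (rather than a genuine probability measure) is itself delicate; the Markov--Kakutani route sidesteps this by working directly with the compact convex constraint set rather than averaging and then bounding.
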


Note that the proviso rules out the indefinite product $0\cdot\infty$, along with the utterly uninteresting case $\infty\cdot\infty$. Examples for the indefinite case can be given quite easily. It suffices to combine an ideal position measurement with a random momentum output. Although the statement given here seemingly excludes the indefinite case, it is actually the best one can say about it: If the uncertainty relation is to express quantitatively that not both  $D_\alpha(M^Q,E^Q)$ and $D_\beta(M^P,E^P)$ can become small, then we should also have that if one is zero, the other must be infinite. But this statement is implied by the Theorem, which shows that the case $0\cdot$finite does not occur. Of course, we can also conclude that if in some process one uncertainty tends to zero the other has to diverge in keeping with the Theorem. That is, the indefinite case as approached from less idealized situations is also covered and interpreted as ``$0\cdot\infty\geq c\sab$''.

The reason this indefinite case does not occur for preparation uncertainty is that we have restricted ourselves to states given by density operators, for which $\Delta_\alpha(E_\rho^{P,Q})$ cannot vanish. Among the so-called singular states (positive normalized expectation value functional on the bounded operators which are not given by density operators) one also finds examples of the indefinite case. Singular states with sharp position assign zero probability to every finite interval of momenta. The momentum is thus literally infinite with probability one, not just a distribution on $\Rl$ with diverging moments. An example from the literature is the (mathematically cleaned up version of) the state used in the paper of Einstein, Podolsky and Rosen, where the difference of the two position observables is supposed to be sharp, and accordingly the conjugate difference of momenta is infinite. This also implies that all measurement outcomes seen by Alice or Bob for $Q_1,Q_2,P_1,P_2$ are infinite with probability one. A detailed study is given in \cite{KSW03}. It is all not as strange as it may seem, as one can see if one replaces the EPR state by a highly (but not infinitely) squeezed  two-mode Gaussian state. It is then clear that all individual measurements $Q_1,Q_2,P_1,P_2$ have very broad distributions, and in the limit the probability for any finite interval goes to zero.

The proof will use the Kantorovich dual characterization \eqref{DKanto} of Wasserstein metrics, and thereby excludes the case of one infinite exponent. However, both sides of the inequality are continuous at $\alpha\to\infty$, $\beta$ fixed, so it actually suffices to consider $\alpha,\beta<\infty$, which we will do from now on. The proof of the Theorem is by reduction to the covariant case, i.e., Proposition~\ref{thm:MURcov} combined with Proposition~\ref{thm:prepURpq}. The following Proposition summarizes what we need.

\begin{proposition}\label{crucial-lemma}
Let $M$ be a  phase space observable and $1\leq\alpha,\beta<\infty$. Suppose that $D_\alpha(M^Q,E^Q)$ and  $D_\beta(M^P,E^P)$ are both finite. Then there is a covariant observable $\Mav$ such that
$$D_\alpha(\Mav^Q,E^Q)\leq D_\alpha(M^Q,E^Q)\quad\mbox{and }\quad D_\beta(\Mav^P,E^P)\leq D_\beta(M^P,E^P).$$
The analogous statement holds for calibration measures $\Delta_\alpha^c$ instead of metric distances $D_\alpha$.
\end{proposition}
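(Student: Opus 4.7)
The plan is to construct $\bar M$ from $M$ by averaging over the phase space translation group, exploiting the amenability of $\mathbb{R}^2$ in the spirit of the averaging technique cited after Eq.~\eqref{marginalconcolve}. The argument pivots on the observation that both error quantities $D_\alpha(M^Q,E^Q)$ and $\Delta^c_\alpha(M^Q,E^Q)$ are invariant under the natural combined action of phase space translations on the observable and on states, so that averaging can only preserve or decrease them.

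First I would reduce to the case $1\leq\alpha,\beta<\infty$, since both error quantities are lower semicontinuous in $\alpha,\beta$ and the Kantorovich dual representation \eqref{DKanto} with Lemma~\ref{lem:cptsupp} is available only for finite exponents. Next, for each $(q_0,p_0)\in\mathbb{R}^2$ I would introduce the translated observable
$$M_{(q_0,p_0)}(Z)=W(q_0,p_0)^*\, M(Z+(q_0,p_0))\, W(q_0,p_0).$$
A direct computation, using covariance of $E^Q$ under $W(q_0,p_0)$ and translation invariance of the Wasserstein distance, gives $D_\alpha(M_{(q_0,p_0)}^Q,E^Q)=D_\alpha(M^Q,E^Q)$ (and similarly for the $P$-marginal and for $\Delta^c_\alpha$): the outcome translation of $M$ by $(q_0,p_0)$ is exactly compensated by the unitary shift $\rho\mapsto W(q_0,p_0)\rho W(q_0,p_0)^*$ applied to the state.

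To average, I would take an invariant mean $\mathrm{LIM}$ on bounded continuous functions on $\mathbb{R}^2$, for instance a weak-$*$ cluster point of the Ces\`aro averages over boxes $K_N=[-N,N]^2$. For each positive compactly supported continuous $f$ on $\mathbb{R}^2$ and each density operator $\rho$, the map $(q_0,p_0)\mapsto\mathrm{tr}\bigl(\rho\, M_{(q_0,p_0)}(f)\bigr)$ is bounded, and applying $\mathrm{LIM}$ to it yields a positive linear functional of $\rho$, hence a positive operator $\bar M(f)$. By Riesz--Markov these functionals define a positive operator-valued set function $\bar M$ on the Borel sets of $\mathbb{R}^2$, and translation invariance of $\mathrm{LIM}$ forces the covariance relation \eqref{cov} for $\bar M$. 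The main analytic obstacle is verifying $\bar M(\mathbb{R}^2)=\idty$, i.e.\ that no mass escapes to infinity under the mean; this is where the hypothesis of finite errors becomes essential. Indeed, the uniform bound $D_\alpha(M_{(q_0,p_0)}^Q,E^Q)=D_\alpha(M^Q,E^Q)<\infty$ together with a triangle inequality against a well-localized calibrating state $\rho$ and Markov's inequality yields uniform tightness of the family $\{M_{(q_0,p_0)}^Q\}$, which is enough to propagate $\sigma$-additivity through the mean and to identify the marginals of $\bar M$ with the corresponding $\mathrm{LIM}$-averages of the marginals of $M_{(q_0,p_0)}$.

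Finally, to prove the error inequalities, I would use Kantorovich duality \eqref{DKanto} together with Lemma~\ref{lem:cptsupp}: for any state $\rho$,
$$D_\alpha(\bar M^Q_\rho,E^Q_\rho)^\alpha=\sup_{\Phi,\Psi}\Bigl[\int\Phi\,dE^Q_\rho-\int\Psi\,d\bar M^Q_\rho\Bigr],$$
with the supremum over positive compactly supported continuous $(\Phi,\Psi)$ satisfying \eqref{compete}. Because $\Psi$ is compactly supported, the quantity $\mathrm{tr}(\rho\, M_{(q_0,p_0)}^Q(\Psi))$ is uniformly bounded in $(q_0,p_0)$, and I can commute $\mathrm{LIM}$ with the integral against $\bar M^Q_\rho$; the bracketed expression then appears as the $\mathrm{LIM}$-average of bracketed expressions for the shifted observables $M_{(q_0,p_0)}$, each bounded (after the unitary substitution on $\rho$ from Step~2) by $D_\alpha(M^Q,E^Q)^\alpha$. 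Taking the supremum over $\Phi,\Psi$ and then over $\rho$ completes the proof for $D_\alpha$; the analogous argument, restricted to calibrating states $\rho$ with $D_\alpha(E^Q_\rho,\delta_x)\leq\varepsilon$ and then letting $\varepsilon\to 0$, handles $\Delta^c_\alpha$. The hard part, as flagged above, is reconciling the averaging with $\sigma$-additivity and compatibility of marginals for $\bar M$, which is precisely where the finiteness of the error quantities is genuinely used.
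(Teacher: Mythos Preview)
Your approach is correct and runs closely parallel to the paper's own argument, which also reduces to averaging over phase space translations and uses the finiteness of the errors to control normalization. The packaging differs: the paper invokes the Markov--Kakutani fixed point theorem applied to the commuting family of translations $M\mapsto M^{(q,p)}$ acting on the weakly compact convex set $\mathcal N$ (resp.\ $\mathcal N_c$) of normalized observables obeying the given error bounds, whereas you construct $\bar M$ directly via an invariant mean. The paper's compactness machinery (Propositions~\ref{thm:nonnormalized} and~\ref{thm:Ncompact}, with Lemma~\ref{lem:sneak} and Corollary~\ref{lem:sneakOp}) is precisely your tightness step in disguise: the Kantorovich dual constraints with compactly supported test functions (Lemma~\ref{lem:cptsupp}) cut out a weakly closed set inside a Tychonov cube, and the uniform moment bound $\int(|q|^\alpha+|p|^\beta)\,dM_\rho\leq C$, obtained from the finite errors exactly as you indicate, prevents loss of normalization along weak limits. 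Your explicit invariant-mean construction is a bit more concrete and makes the preservation of the error bounds visible through the duality formula; the paper's fixed-point formulation cleanly decouples the compactness question from the averaging and lets translation invariance of the constraint set do the work in one stroke. Either way, the crux---the translation invariance of the error functionals and the use of finiteness of both errors to rule out escape of mass---is identical.
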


The basic technique for the proof is averaging over larger and larger sets in phase space, and a compactness argument, that asserts that such an averaging process will have a limit. The most convenient general result based on just this idea is the Markov-Kakutani Fixed Point Theorem \cite[Thm.~V.10.6]{Dunford}. It says that a family of commuting continuous affine isomorphisms of a compact convex set must have a common fixed point. In our case the set in question will be the set of observables with given finite measurement uncertainties, and the transformations are the phase space translations $M\mapsto M^\qp$ defined as
\begin{equation}\label{Mshift}
  M^\qp(Z)=W\qp^* M\bigl(Z+\qp\bigr)W\qp \,
\end{equation}
for any measurable set $Z$ in phase space. Note that this combines a Weyl translation with a translation of the argument in such a way that the common fixed points of these translations are precisely the covariant observables.

In order to satisfy the premises of the Markov-Kakutani Theorem we have to define a topology for which the phase space translations are continuous, and  for which the sets of observables with fixed finite uncertainties are compact. As often in compactness arguments this is the only subtle point, and we will be didactically explicit about it. The topology will be the ``weak'' topology, i.e., the  ``initial topology'' \cite[Sect.~I.\S2.3]{BourbakiTop} which makes the functionals
\begin{equation}\label{initialtopology}
  M\mapsto u_M(\rho,f)=\int f\qp\ dM_\rho\qp    \quad\mbox{for}\ \rho\in\trcl(\HH),\ f\in\Co,
\end{equation}
continuous. That is, the neighbourhoods are specified by requiring a finite number of these functionals to lie in an open set. Let $I$ denote the set of pairs
$(\rho,f)$ of a density operator $\rho$ on $\HH$ and a function $f\in\Co$ with $0\leq f\leq1$. Then for each such pair and every observable $M$ we have $u_M(\rho,f)\in[0,1]$, which we consider as the $(\rho,f)$-coordinate of a point $M^\Box$ in the cube $\cube$. It is clear that $M^\Box$ determines $M$ uniquely: the functional $\rho\mapsto u_M(\rho,f)$ is affine (i.e., respects convex combinations), so there is a unique operator $M'(f)$, with $u_M(\rho,f)=\tr\rho M'(f)$. Since $f\mapsto M'(f)$ is also affine, we can reconstruct the measure $M$ from it so that $M'(f)=\int\!f(x)\,dM(x)$. We can therefore look at the observables as a subset of $\cube$. By definition, the weak topology on the set of observables is the one inherited from the product topology on the cube. By Tychonov's Theorem this is a compact set. Hence this theorem, which embodies the Axiom of Choice, will be the source for all compactness statements about observables in the sequel.

From the proof that $M\mapsto M^\Box$ is injective it is clear that most points in $\cube$ do not correspond to observables. This suggests to single out the subset $\cubo\subset\cube$ of points which are affine in both $\rho$ and $f$. Note that an affinity condition like $\lambda u_M(\rho_1,f)+(1-\lambda)u_M(\rho_2,f)-u_M(\lambda\rho_1+(1-\lambda)\rho_2,f)=0$ involves only three coordinates at a time. Therefore, the left hand side of this equation is continuous, and the subset on which it is true is closed as the inverse image of $\{0\}$ under a continuous function. Since the arbitrary intersection of closed sets is closed, we conclude that $\cubo$ is compact, as a closed subset of a compact set. A similar argument shows that the Weyl translations are continuous on $\cube$. Indeed, to make any finite number of coordinates of a Weyl-translate $M^\qp$ lie in a specified open set, we only need to shift every $\rho$ and $f$ in this neighborhood description to find an appropriate condition on $M$.

However, $\cubo$ is not exactly the set of observables, because it also contains the zero element. What we get from an arbitrary point $M^\Box\in\cubo$ is an operator valued measure $M$, which however need not be normalized. The subset of normalized observables, i.e., those which formally satisfy $u_M(\rho,1)=1$ is {\it not} closed, simply because  $1\notin\Co$. One can define the normalization operator for every $M^\Box\in\cubo$ as
\begin{equation}\label{normM}
  M(1)=\sup_{f\leq1}\int\!f\qp\ dM\qp
\end{equation}
since the net of functions $f\in\Co$ is directed, and so in the weak operator topology the limit of any increasing net in $\Co$, which pointwise goes to $1$ is the same. However, this limit is not a continuous function in the weak topology on observables, and may well be strictly smaller than $\idty$. In fact, it is easy to construct sequences of normalized observables which converge to zero: it is enough to shift any observable to infinity, i.e., to take $N_\qp(Z)=M(Z+\qp)$ without the Weyl operators used in \eqref{Mshift}. The region where the probability measure $\tr\rho M(\cdot)$ is mostly concentrated will thus be shifted away from the region where a function $f\in\Co$ is appreciably different from zero, with the consequence that $u_{N_\qp}(\rho,f)\to0$ for all $\rho$ and $f$.

This normalization problem can be shifted, but not resolved, by allowing instead of $\Co$ a larger algebra $\Calt$ of continuous functions, such as those  going to a constant at infinity (thus $1\in\Calt$), or even all bounded continuous functions. The problem is then that an observable defined in terms of bilinear functionals $u_M$ with $f\in\Calt$ define measures not on the phase space $\Rl^2$, but on a compactification of $\Rl^2$, which can be understood as the set of pure states of $\Calt$. In the examples mentioned these are the one point compactification and the Stone-\v Cech-compactification, respectively. So we have the choice of (a) using $\Co$, for which the set of normalized observables is not compact or (b) using some algebra $\Calt\supset\Co$, for which we may get measures with a positive weight at infinity. The connection of these two points of view is clarified by considering a sequence of observables which converges to zero in the sense of the previous paragraph. The missing normalization of the limit then simply shows up as a positive contribution from the compactification points. Thus we get a normalized observable, but the probability to find a result on ordinary (uncompactified) phase space is zero.  The key point of our proof will thus be to show that this phenomenon cannot happen, provided that both uncertainties are fixed to be finite.

The principles used for calibration and metric uncertainty are rather similar, so we largely treat these cases in parallel. Throughout, we keep the exponents
$1\leq\alpha,\beta<\infty$ fixed. Moreover, we fix some uncertainty levels $\Delta_Q$ and $\Delta_P$, and in the calibration case some parameters $\veps_Q,\veps_P>0$. We then consider the sets $\MM$ and $\MM_c$ of (not necessarily normalized) positive operator valued measures on $\Rl^2$ defined by the membership conditions
\begin{eqnarray}\label{MM}
  M\in\MM&\Leftrightarrow&\quad D_\alpha(M^Q,E^Q)\leq\Delta_Q\quad\mbox{and\ } D_\beta(M^P,E^P)\leq\Delta_P\\
  M\in\MM_c&\Leftrightarrow&\quad \Delta^{\veps_Q}_\alpha(M^Q,E^Q)\leq\Delta_Q\quad\mbox{and\ }
                                   \Delta^{\veps_P}_\beta(M^P,E^P)\leq\Delta_P.  \label{MMc}
\end{eqnarray}
Moreover, we denote by $\NN\subset\MM$ and $\NN_c\subset\MM_c$ the respective subsets of normalized observables. Our aim is to show that these are weakly compact, by first showing that $\MM$ and $\MM_c$ are compact and then that the normalized subsets are closed under weak limits.

\begin{proposition}\label{thm:nonnormalized}
The sets $\MM$ and $\MM_c$ are weakly compact and convex.
\end{proposition}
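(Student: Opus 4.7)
The plan is to realize both $\MM$ and $\MM_c$ as closed convex subsets of the compact cube $\cubo$ constructed above, so that compactness is supplied by Tychonoff and convexity follows from the defining inequalities.

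Convexity is immediate. For each fixed $\rho$ the marginal maps $M\mapsto M^Q_\rho$ and $M\mapsto M^P_\rho$ are affine in $M$. For $\MM_c$ the calibration condition $\Delta^{\veps_Q}_\alpha(M^Q,E^Q)\leq\Delta_Q$ unpacks into a family of linear inequalities $\int |q-x|^\alpha\,dM^Q_\rho(q)\leq\Delta_Q^\alpha$ indexed by calibrating pairs $(\rho,x)$ with $D_\alpha(E^Q_\rho,\delta_x)\leq\veps_Q$, together with the analogous $P$-inequalities; their intersection is manifestly convex. For $\MM$ we invoke the joint convexity of $(\mu,\nu)\mapsto D_\alpha(\mu,\nu)^\alpha$ noted earlier in the excerpt, which makes $\{M:D_\alpha(M^Q_\rho,E^Q_\rho)\leq\Delta_Q\}$ convex for each $\rho$; intersecting over all $\rho$ (and the $P$ analogues) keeps convexity.

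The substance is weak closedness inside $\cubo$. The calibration case is the easier one: each defining integrand $h_x(q,p):=|q-x|^\alpha$ is non-negative and continuous, and approximating it monotonely from below by cut-offs $h_x\cdot\chi_n\in C_c(\Rl^2)$ with $\chi_n\uparrow 1$ presents $M\mapsto\int h_x\,dM_\rho$ as a supremum of weakly continuous functionals, hence lower semicontinuous, so each sublevel set $\{\int h_x\,dM_\rho\leq\Delta_Q^\alpha\}$ is weakly closed; intersecting over all calibrating pairs gives closedness of $\MM_c$. For $\MM$ the Kantorovich duality of Lemma~\ref{lem:cptsupp} recasts $D_\alpha(M^Q_\rho,E^Q_\rho)\leq\Delta_Q$ as the family of inequalities $\int\tilde\Psi\,dM_\rho\geq\int\Phi\,dE^Q_\rho-\Delta_Q^\alpha$, where $\tilde\Psi(q,p):=\Psi(q)$ lies in $C_b(\Rl^2)\setminus\Co(\Rl^2)$. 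The same approximation $\tilde\Psi=\sup_n \Psi(q)g_n(p)$ shows only that $M\mapsto\int\tilde\Psi\,dM_\rho$ is lower semicontinuous, which is the \emph{wrong} direction for closing lower level sets of the form $\{f\geq c\}$, so an extra input is required.

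I would supply that input by a tightness argument, which I expect to be the main obstacle. The triangle inequality together with $D_\alpha(M^Q_\rho,E^Q_\rho)\leq\Delta_Q$ produces a $\rho$-dependent moment bound $\int |q|^\alpha\,dM^Q_\rho\leq C_Q(\rho)$ valid for every $M\in\MM$, and analogously $\int |p|^\beta\,dM^P_\rho\leq C_P(\rho)$; Markov's inequality converts these into uniform tightness of the family $\{M_\rho:M\in\MM\}$ on $\Rl^2$ at each fixed $\rho$. For a weakly convergent net $M_i\to M^*$ in $\cubo$, uniform tightness prevents mass escape to infinity, so $M^*_\rho$ retains the common mass of the $M_{i,\rho}$ and the convergence tested by $\Co(\Rl^2)$ upgrades to convergence against all of $C_b(\Rl^2)$; applying this upgrade to $\tilde\Psi$ pushes the Kantorovich lower bounds through to the limit and yields $M^*\in\MM$. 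The same tightness argument, fed with the moment bounds implicit in the calibration constraints, also covers $\MM_c$. This step is the precise mechanism by which the ``mass loss to infinity'' phenomenon flagged in the paragraph just preceding the Proposition is excluded; it is the only place where the two uncertainty constraints are essentially used in concert, and everything else is a formal passage to the limit.
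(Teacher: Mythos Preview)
Your treatment of $\MM_c$ is essentially the paper's: approximate the unbounded integrand $|q'-q|^\alpha$ from below by functions in $\Co$, turning the defining condition into an intersection of weakly closed half-spaces.

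For $\MM$ there is a real gap, and it comes from having chosen the wrong orientation in Kantorovich duality. Because $D_\alpha$ is symmetric, one may equally write $D_\alpha(M^Q_\rho,E^Q_\rho)\leq\Delta_Q$ as
\[
\int\Phi\,dM^Q_\rho\ \leq\ \Delta_Q^\alpha+\int\Psi\,dE^Q_\rho
\qquad\text{for all admissible }(\Phi,\Psi),
\]
i.e., with the $M$-integral bounded from \emph{above}. Lemma~\ref{lem:cptsupp} supplies $\Phi\geq0$ with compact support in $q$; multiplying by any $0\leq\chi\leq1$, $\chi\in{\mathcal C}_0(\Rl)$, gives $\Phi(q)\chi(p)\in\Co$ and the equivalent family $\int\Phi(q)\chi(p)\,dM_\rho\leq\Delta_Q^\alpha+\int\Psi\,dE^Q_\rho$. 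Each of these is a weakly \emph{continuous} linear constraint on $M$, so $\MM$ is closed in $\cubo$ directly---no tightness is needed. This is exactly the route the paper takes.

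Your orientation produces \emph{lower} bounds on $\int\Psi\,dM^Q_\rho$, and you correctly note that lower semicontinuity points the wrong way. But the tightness repair you propose invokes the triangle inequality for $D_\alpha$ to extract a moment bound, and that inequality is only meaningful when $M^Q_\rho$ is a probability measure. The set $\MM$ consists by design of \emph{not necessarily normalized} operator measures (the zero measure satisfies the paper's upper-bound constraints and lies in $\MM$), so for a generic net in $\MM$ your moment bound is simply unavailable. What your tightness argument actually establishes is that $\NN$---the \emph{normalized} observables in $\MM$---is weakly closed, which is precisely the content of the \emph{next} Proposition~\ref{thm:Ncompact}. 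The paper deliberately separates the two steps: first close the linear constraints inside $\cubo$ (this Proposition; no mass-loss issue arises because the correctly oriented constraints do not forbid mass loss), and only then show that normalization survives limits within $\MM$ (Proposition~\ref{thm:Ncompact}, where moment bounds and tightness legitimately enter). Your remark that ``the two uncertainty constraints are essentially used in concert'' is the tell: that concert is needed for~\ref{thm:Ncompact}, not here.
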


\begin{proof} The techniques for the two cases are similar, and are based on a description of the respective sets as the sets of (not necessarily normalized) observables satisfying some set of weakly continuous linear constraints derived from \eqref{MM} resp.\ \eqref{MMc}.
We begin with $\MM$, using the Kantorovich dual description of $D_\alpha$ from the equality in \eqref{DKanto}. Including the supremum \eqref{distobs} over states, and using Lemma~\ref{lem:cptsupp} the inequality $D_\alpha(M^Q_\rho,E^Q_\rho)\leq\Delta_Q$ becomes equivalent to the condition that for all $\rho$ and all
$\Psi,\Phi\in {\mathcal C}_0(\Rl)$
satisfying $\Phi(y)-\Psi(x)\leq D(x,y)^\alpha$ we have
\begin{equation}
  \int\Phi(y)\ dM^Q_\rho(y)-\int\Psi(x)\ dE^Q_\rho(x)\leq {\Delta_Q}^\alpha\ .
\end{equation}
Indeed the supremum over $\Phi,\Psi,\rho$ of the left hand side is just $D_\alpha(M^Q,E^Q)^\alpha$. We can further rewrite this as
\begin{equation}
  \int\Phi(y)\chi(p)\ dM_\rho(x,p)\leq {\Delta_Q}^\alpha\ +\int\Psi(x)\ dE^Q_\rho(x),
\end{equation}
where $\chi\in {\mathcal C}_0(\Rl)$ and $0\leq\chi\leq1$. The validity of this inequality for all $\chi,\Phi,\Psi$ with the specified conditions is still equivalent to \eqref{MM}. Moreover, $\Phi(y)\chi(p)\in\Co$, so that left hand side depends on $M$ continuously with respect to the weak topology.
Of course, the momentum part is treated in the same way, together proving compactness of $\MM$. Convexity is obvious, because each of these constraints is linear.

For the calibration case let us write out  the definition of $\MM_c$. The conditions are \begin{eqnarray}\label{cali1}
  \tr\Bigr(\rho (Q-q)^{\alpha}\Bigl)&\leq&\eps_Q^{\alpha}  \quad\Rightarrow
           \int\abs{q'-q}^{\alpha}\,dM_\rho(q',p')\leq\Delta_Q^{\alpha} \\
  \tr\Bigr(\rho (P-p)^{\beta}\Bigl)&\leq&\eps_P^{\beta}  \quad\Rightarrow
            \int \abs{p'-p}^{\beta}\, dM_\rho(q',p')\leq\Delta_P^{\beta} \ .\label{cali2}
\end{eqnarray}
Here $p,q,\rho$ are arbitrary, and the left hand side of these implications (which do not contain $M$) only serve to select a subset of parameters for which the right hand side is to hold. Now the integrals on the right hand side do involve unbounded functions not in $\Co$, so are not directly linear conditions on functionals of the form \eqref{initialtopology}. However, the condition in \eqref{cali1} can equivalently be described as
\begin{equation}\label{continreplace}
  \int f(q',p')\,dM_\rho(q',p')\leq\Delta_Q^{\alpha}\quad\mbox{for all $f\in\Co$ with }
  f(q',p')\leq\abs{q'-q}^{\alpha}.
\end{equation}
With a similar rewriting of the momentum conditions we get inequalities $u_M(\rho,f)\leq\Delta_Q^{\alpha}$ on weakly continuous functionals, so that the set $\MM_c$  is indeed weakly compact.
\end{proof}

We now have to show that the respective normalized sets are closed. The basic idea is to use the fact that there is some unbounded function, which has a uniform finite upper bound on the set of measures under consideration. Therefore, probability cannot ``sneak off to infinity''. This idea (in the case of scalar measures) is made precise in the following Lemma.

\begin{lem}\label{lem:sneak}
Let $(\mu_i)_{i\in I}$ denote a weakly convergent net of probability measures on $\Rl^2$, characterized as positive functionals on $\Co$, which are normalized in the sense that
$\sup_{f\leq1}\int f(x)\,d\mu_i(x)=1$ for all $i$.
Let $h:\Rl^2\to\Rl_+$ be a continuous function diverging at infinity, so that $(1+h)^{-1}\in\Co$, and assume that the expectations of $h$ are uniformly bounded in the precise sense that there is a constant $C$, independent of $i$, such that
$$f\in\Co \ \&\ f\leq h \ \Rightarrow\forall_{i\in I}\  \int f(x)d\mu_i(x) \leq C.$$
Then the weak limit $\mu=\lim_i\mu_i$ is also normalized.
\end{lem}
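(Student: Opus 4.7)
The plan is to show that for every $\eps>0$ there exists $\psi\in\Co$ with $0\leq\psi\leq1$ and $\int\psi\,d\mu\geq 1-\eps$. Since each $\mu_i$ satisfies $\int f\,d\mu_i\leq1$ whenever $0\leq f\leq1$, this bound persists under weak limits, so the functional $\mu$ is dominated by $1$ on $\{f\in\Co:0\leq f\leq1\}$. The inequality above would therefore force the supremum $\sup_{f\leq1}\int f\,d\mu$ to equal $1$, which is exactly normalization of $\mu$.

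The key construction uses $h$ to produce a good cutoff. For $R>0$ the sublevel set $K_R=\{x\in\Rl^2:h(x)\leq R\}$ is compact, since $h$ is continuous and diverges at infinity (the latter being equivalent to $(1+h)^{-1}\in\Co$). Pick a continuous $\psi_R\in\Co$ with compact support, $0\leq\psi_R\leq 1$, and $\psi_R\equiv1$ on $K_R$. Then on $K_R$ one has $1-\psi_R=0\leq h/R$, while on $K_R^c$ one has $1-\psi_R\leq1<h/R$; in either case $1-\psi_R\leq h/R$. Consequently, for any $f\in\Co$ with $0\leq f\leq 1-\psi_R$, the rescaled function $Rf$ lies in $\Co$ and satisfies $Rf\leq h$, so the uniform hypothesis on $h$ gives $\int Rf\,d\mu_i\leq C$, i.e.\ $\int f\,d\mu_i\leq C/R$ independently of $i$.

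Viewing each $\mu_i$ as a Radon probability measure via Riesz, write $1=\int\psi_R\,d\mu_i+\int(1-\psi_R)\,d\mu_i$. The function $1-\psi_R$ is not in $\Co$, which is the one subtle step: we handle it by choosing an increasing sequence $f_n\in\Co$ with $0\leq f_n\uparrow(1-\psi_R)$ pointwise (easy since $1-\psi_R$ is bounded continuous and vanishes on the compact set $K_R$, so can be approximated from below by products with suitable cutoffs). By monotone convergence and the uniform bound from the previous paragraph, $\int(1-\psi_R)\,d\mu_i=\sup_n\int f_n\,d\mu_i\leq C/R$, hence $\int\psi_R\,d\mu_i\geq 1-C/R$ for every $i$. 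Passing to the weak limit, which is legitimate since $\psi_R\in\Co$, we obtain $\int\psi_R\,d\mu\geq 1-C/R$; taking $R\to\infty$ completes the proof. The main obstacle is precisely this inability to test $\mu_i$ directly against the non-$\Co$ function $1-\psi_R$; the hypothesis on $h$ is exactly what supplies, after rescaling, a continuous dominating function in $\Co$ controlling the tail mass uniformly in $i$.
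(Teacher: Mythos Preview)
Your proof is correct and rests on the same idea as the paper's: use $h$ to show uniform tightness of the $\mu_i$, so that no mass can escape to infinity in the weak limit. The implementations differ slightly. The paper takes $(1+\lambda h)^{-1}$ itself as the $\Co$ test function and shows $1-\int(1+\lambda h)^{-1}\,d\mu\leq\lambda C$ directly, via the algebraic identity $f-\frac{1}{1+\lambda h}=\frac{\lambda fh}{1+\lambda h}+\frac{f-1}{1+\lambda h}$ and the bound $\frac{fh}{1+\lambda h}\leq h$; this avoids Urysohn and monotone convergence entirely. Your route constructs cutoffs $\psi_R$ on the sublevel sets $\{h\leq R\}$ and needs the extra step of approximating $1-\psi_R$ from below by $\Co$ functions. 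Both give the same tail estimate (your $C/R$ is the paper's $\lambda C$ with $\lambda=1/R$), and both arguments are short; the paper's choice is a touch slicker because the hypothesis $(1+h)^{-1}\in\Co$ is used verbatim, while yours is perhaps more transparent as a standard tightness argument.
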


\begin{proof}
Clearly, the functions $(1+\lambda h)^{-1}$ go to $1$ as $\lambda\to0$. Moreover,
\begin{eqnarray}
  1-\int\frac{d\mu(x)}{1+\lambda h(x)}
      &=&1-\lim_i\int\frac{d\mu_i(x)}{1+\lambda h(x)}
       = \lim_i\sup_{f\leq1}\int\Bigl(f(x)-\frac1{1+\lambda h(x)}\Bigr)d\mu_i(x) \nonumber\\
      &=& \lim_i\sup_{f\leq1}\left\{\lambda \int\frac{f(x)h(x)}{1+\lambda h(x)}\,d\mu_i(x)+ \int\frac{f(x)-1}{1+\lambda h(x)}d\mu_i(x)\right\}\nonumber\\
      &\leq& \lambda C +0
\end{eqnarray}
Here the supremum is over continuous functions $f$ with compact support, so that $fh\in\Co$. Hence, as $\lambda\to0$,  we get $\int(1+\lambda h)^{-1}d\mu\to1$, and $\mu$ is normalized.
\end{proof}

The operator valued version follows in a straightforward way. The normalization operator of a general positive operator valued measure on $\Rl^2$ is, by definition, the operator $M(\Rl^2)$ such that, for all density operators $\rho$
\begin{equation}\label{MX}
  \tr\rho M(\Rl^2)=\sup_{f\leq1}\int f\qp\ dM_\rho(q,p)\ ,
\end{equation}
where the limit is over the increasing net of functions $f\in\Co$ with $f\leq1$. Observables are the normalized operator measures, i.e. those  with $M(\Rl^2)=\idty$. Then we can state the following operator valued version of the previous Lemma:

\begin{corollary}\label{lem:sneakOp}
Let $(M_i)_{i\in I}$ denote a weakly convergent net of observables on $\Rl^2$, and assume that there is a density operator $\rho$ without eigenvalue zero, and a continuous function $h:\Rl^2\to\Rl_+$ diverging at infinity, such that $\int h\qp\ dM_{i,\rho}(q,p)\leq C<\infty$ for a constant independent of $i$. Then the weak limit of the sequence is also normalized.
\end{corollary}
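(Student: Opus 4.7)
The plan is to deduce the operator-valued statement from the scalar Lemma~\ref{lem:sneak} applied to the single distinguished state $\rho$, and then use the faithfulness of $\rho$ (the no-zero-eigenvalue hypothesis) to promote the scalar normalization to the operator identity. Write $M$ for the weak limit of the net $(M_i)$; this exists as a point of $\cubo$ by the construction of the weak topology on observables, and its normalization operator satisfies $\tr\sigma M(\Rl^2)=\sup_{f\in\Co,\ f\leq1}u_M(\sigma,f)$ for every density operator $\sigma$.

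First I would observe that, for the fixed $\rho$, the scalar probability measures $\mu_i=M_{i,\rho}$ converge weakly against $\Co$ to $\mu=M_\rho$, since by definition of the weak topology on observables $u_{M_i}(\rho,f)\to u_M(\rho,f)$ for every $f\in\Co$. Each $\mu_i$ is normalized in the sense of Lemma~\ref{lem:sneak}, because $M_i$ is a normalized observable and hence $\sup_{f\leq1}\int f\,d\mu_i=\tr\rho M_i(\Rl^2)=1$. Moreover, the uniform bound hypothesis of the corollary, $\int h\,d\mu_i\leq C$, immediately implies the Lemma's hypothesis $\int f\,d\mu_i\leq C$ for all $f\in\Co$ with $f\leq h$, and $(1+h)^{-1}\in\Co$ holds because $h$ is continuous and diverges at infinity. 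Thus Lemma~\ref{lem:sneak} applies, giving that $\mu=M_\rho$ is a probability measure, equivalently $\tr\rho M(\Rl^2)=1$.

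The second step upgrades this scalar equality to $M(\Rl^2)=\idty$. Because each $M_i(\Rl^2)=\idty$ and the weak topology on $\cubo$ only weakens the inequality $M_i(\Rl^2)\leq\idty$, the positive operator $A=\idty-M(\Rl^2)$ satisfies $A\geq0$. The previous paragraph gives $\tr\rho A=0$. Diagonalizing $\rho=\sum_n\lambda_n\kettbra{\psi_n}$ with all $\lambda_n>0$ (this is precisely the hypothesis that $\rho$ has no zero eigenvalue, i.e.\ is faithful on the trace class), one has $\sum_n\lambda_n\braket{\psi_n}{A\psi_n}=0$ with all summands non-negative, so $\braket{\psi_n}{A\psi_n}=0$ for each $n$. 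Positivity of $A$ then forces $A^{1/2}\psi_n=0$, and since $\{\psi_n\}$ is an orthonormal basis of $\h H$, $A=0$, i.e.\ $M(\Rl^2)=\idty$.

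The only step requiring real attention is verifying that the normalization operator $M(\Rl^2)$ in the sense of \eqref{MX} is indeed dominated by $\idty$ and that $\tr\rho M(\Rl^2)$ really coincides with the quantity computed above; both follow from monotone convergence along the directed net of $f\in\Co$ with $f\leq1$, combined with the fact that the $M_i$ are observables, so this is routine rather than an actual obstacle.
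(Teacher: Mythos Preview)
Your proof is correct and follows essentially the same route as the paper: apply the scalar Lemma~\ref{lem:sneak} to the measures $M_{i,\rho}$ to obtain $\tr\rho(\idty-M(\Rl^2))=0$, and then use the faithfulness of $\rho$ (no zero eigenvalue, equivalently dense range) to conclude $M(\Rl^2)=\idty$. Your version is simply more explicit about verifying the hypotheses of the Lemma and about the positivity argument in the second step, whereas the paper compresses both into a single sentence.
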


Indeed, we can just apply the previous Lemma to conclude that $\tr\rho(\idty- M(\Rl^2))=0$, which implies $M(\Rl^2)=\idty$ because
$\rho$ has dense range. We note that the same argument holds if the condition is met not for a single $\rho$ but for a family of
states $\rho_k$ with bounds $C_k$ possibly depending on $k$, provided that the union of the ranges of the $\rho_k$ is dense.

\begin{proposition}\label{thm:Ncompact} The sets $\NN$ and $\NN_c$ of observables, defined after Eqs.~\eqref{MM}, \eqref{MMc} are weakly compact.
\end{proposition}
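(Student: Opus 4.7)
By Proposition~\ref{thm:nonnormalized} the enveloping sets $\MM$ and $\MM_c$ are already weakly compact, so to obtain compactness of the normalized subsets $\NN$ and $\NN_c$ it suffices to show that each is weakly closed. For any net $M_i\to M$ with $M_i\in\NN$ (resp.\ $\NN_c$), the limit $M$ automatically inherits the defining uncertainty inequalities, so the only potential failure is leakage of mass to phase-space infinity, yielding $M(\Rl^2)<\idty$. This is exactly the scenario controlled by Corollary~\ref{lem:sneakOp}, whose hypothesis I must verify in each case: produce a density operator $\rho$ (or a family of states whose ranges span densely) together with a continuous $h\colon\Rl^2\to\Rl_+$ diverging at infinity for which $\int h\,dM_{i,\rho}$ is bounded uniformly in $i$.

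For $\NN$ the argument is short. Fix a single faithful density operator $\rho$ whose position and momentum distributions have finite $\alpha$-th and $\beta$-th absolute moments---a Gibbs state of the harmonic oscillator, say. Since $D_\alpha(M^Q,E^Q)\le\Delta_Q$ for every $M\in\MM$, the triangle inequality for the Wasserstein distance gives
\begin{equation*}
  D_\alpha(M^Q_\rho,\delta_0)\le D_\alpha(M^Q_\rho,E^Q_\rho)+D_\alpha(E^Q_\rho,\delta_0)\le\Delta_Q+D_\alpha(E^Q_\rho,\delta_0),
\end{equation*}
and analogously in the momentum variable. Hence the function $h(q,p)=\abs q^\alpha+\abs p^\beta$, which diverges at infinity on $\Rl^2$ and satisfies $(1+h)^{-1}\in\Co$, has $M_\rho$-integral bounded by a constant depending only on $\rho,\Delta_Q,\Delta_P$. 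Corollary~\ref{lem:sneakOp} then forces $M(\Rl^2)=\idty$.

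For $\NN_c$ the single-faithful-state trick breaks down, since the calibration inequalities only control $M_\rho$ when $\rho$ is itself sharply localized. I would therefore pass to the family version of the Corollary. Choose a Gaussian $\sigma$ squeezed so that both $D_\alpha(E^Q_\sigma,\delta_0)\le\veps_Q$ and $D_\beta(E^P_\sigma,\delta_0)\le\veps_P$---possible by Proposition~\ref{thm:prepURpq} whenever $\veps_Q\veps_P\ge c\sab\hbar$---and set $\sigma_k=W(q_k,p_k)\sigma W(q_k,p_k)^*$ as $(q_k,p_k)$ runs through a countable dense subset of $\Rl^2$. Each $\sigma_k$ is simultaneously a position calibration state at $q_k$ and a momentum calibration state at $p_k$, so
\begin{equation*}
  \int\abs{q-q_k}^\alpha\,dM_{\sigma_k}\le\Delta_Q^\alpha,\qquad \int\abs{p-p_k}^\beta\,dM_{\sigma_k}\le\Delta_P^\beta,
\end{equation*}
uniformly over $M\in\NN_c$. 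The shifted functions $h_k(q,p)=\abs{q-q_k}^\alpha+\abs{p-p_k}^\beta$ diverge at infinity, and the Weyl translates of a single Gaussian vector along a dense set of phase-space points are total in $\hi$, so the extended version of Corollary~\ref{lem:sneakOp} again yields $M(\Rl^2)=\idty$.

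The main obstacle is locating the doubly-calibrating Gaussian $\sigma$: by preparation uncertainty this is only possible when $\veps_Q\veps_P\ge c\sab\hbar$, which is the regime relevant for the application in Proposition~\ref{crucial-lemma}. Outside this regime no single state is both position- and momentum-calibrating at the fixed tolerances, and one has to combine two separate families of one-sided calibrators---an argument that is more delicate because observable-weak convergence does not transfer automatically to the $Q$- and $P$-marginals of $M_{i,\rho}$.
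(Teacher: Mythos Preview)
Your treatment of $\NN$ is correct and is exactly the paper's argument: a faithful oscillator Gibbs state, the coercive function $h(q,p)=\abs q^\alpha+\abs p^\beta$, the triangle inequality for $D_\alpha$, and Corollary~\ref{lem:sneakOp}.

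For $\NN_c$, however, there is a genuine gap, and you have already put your finger on it. The proposition is asserted for \emph{arbitrary} calibration tolerances $\veps_Q,\veps_P>0$; in particular for tolerances with product below $c\sab\hbar$, where no quantum state---Gaussian or otherwise---can be simultaneously $\veps_Q$-sharp in position and $\veps_P$-sharp in momentum. Your proposed family $\{\sigma_k\}$ of doubly-calibrating states simply does not exist in that regime, so the argument, as written, does not prove the proposition. (Even in the regime where a doubly-calibrating state exists, note that it will only be Gaussian when $\alpha=\beta=2$; for other exponents the minimizer of Prop.~\ref{thm:prepURpq} is not Gaussian, so the threshold you quote is slightly off.)

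The paper circumvents this obstruction by a convolution trick. One chooses \emph{separate} Gaussian calibrators: a $\rho_Q$ squeezed in position so that $D_\alpha(E^Q_{\rho_Q},\delta_0)\le\veps_Q$, and a $\rho_P$ squeezed in momentum so that $D_\beta(E^P_{\rho_P},\delta_0)\le\veps_P$. Each phase-space translate $\rho_Q(q,p)$ is still position-calibrating at $q$, so \eqref{MMc} gives $D_\alpha(M^Q_{\rho_Q(q,p)},\delta_q)\le\Delta_Q$ uniformly in $M\in\NN_c$. Now average over translates with a phase-space density $f$: by joint convexity of $D_\alpha(\cdot,\cdot)^\alpha$, the mixed state $f\ast\rho_Q$ satisfies $D_\alpha(M^Q_{f\ast\rho_Q},f^Q)\le\Delta_Q$, hence $D_\alpha(M^Q_{f\ast\rho_Q},\delta_0)\le\Delta_Q+D_\alpha(f^Q,\delta_0)$. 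The same works for $g\ast\rho_P$ on the momentum side. The decisive observation is that one can arrange $f\ast\rho_Q=g\ast\rho_P=:\rho$ by taking $f$ to be the Wigner function of $\rho_P$ and $g$ that of $\rho_Q$: at the level of Gaussian covariance matrices, both choices yield the \emph{sum} of the covariances of $\rho_Q$ and $\rho_P$. This $\rho$ is a genuine thermal oscillator state (strictly mixed, hence faithful), and for it one obtains a uniform bound of the form \eqref{CNN}, so Corollary~\ref{lem:sneakOp} applies directly with a single faithful $\rho$. This works for \emph{all} $\veps_Q,\veps_P>0$, with no lower-bound constraint on their product.
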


\begin{proof}
In both cases we will apply the Corollary with the same function $h\qp=\abs q^\alpha+\abs p^\beta$.
Now consider $\rho$ to be a Gibbs state of the harmonic oscillator. Its preparation uncertainties $D_\alpha(E^Q_\rho,\delta_0)$ and $D_\beta(E^P_\rho,\delta_0)$ are finite for all $\alpha,\beta<\infty$. Then for any measure in $\NN$ the triangle inequality for the metric implies $D_\alpha(M^Q_\rho,\delta_0)\leq D_\alpha(M^Q_\rho,E^Q_\rho)+D_\alpha(E^Q_\rho,\delta_0)\leq\Delta_Q+D_\alpha(E^Q_\rho,\delta_0)$. Therefore, on $\NN$ we have the uniform bound
\begin{equation}\label{CNN}
  \int h\qp\ dM_\rho(q,p)\leq C=(\Delta_Q+D_\alpha(E^Q_\rho,\delta_0))^\alpha+(\Delta_P+D_\beta(E^P_\rho,\delta_0))^\beta\ ,
\end{equation}
showing that the limit of any weakly convergent sequence from $\NN$ will be normalized according to the corollary. It is also in $\MM$ due to Prop.~\ref{thm:nonnormalized}, hence in $\NN$. It follows that $\NN$ is a closed subset of the compact set $\MM$ and hence compact.

In the calibration case we have to do some additional work, since we have assumptions only about either position calibrating states which are $\veps_Q$-concentrated, or momentum calibrating states which are sharp in momentum. However, from such knowledge we can also infer something about averages of the state over some translations. So let $\rho_Q$ be a Gaussian position calibrating state, say, with $D_\alpha(E_{\rho_Q},\delta_0)\leq \veps_Q$, so that we can conclude $D_\alpha(M^{Q}_{\rho_Q},\delta_0)\leq \Delta_Q$. Consider the phase space translates $\rho_Q\qp$ of these states, which satisfy the calibration condition at the point $q$, and hence
\begin{equation}\label{shiftedcalib}
  D_\alpha(M^Q_{\rho_Q\qp},\delta_q)^\alpha\leq\Delta_Q^\alpha
\end{equation}
Now consider some probability density $f$ on phase space and the state $\rho=f\ast\rho_Q=\int f\qp\rho_Q\qp dqdp$. Then by joint convexity of $D_\alpha$ in its arguments we have $D_\alpha(M^Q_{\rho},f^Q)^\alpha\leq\Delta_Q^\alpha$, where $f^Q$ is the position marginal of $f$. Hence the calibration condition forces
$D_\alpha(M^Q_{\rho},\delta_0)^\alpha\leq[\Delta^Q+D_\alpha(f^Q,\delta_0)]^\alpha$ uniformly with respect to $M\in\NN_c$. A similar relation follows for the averages $g\ast\rho_P$ of momentum calibrating state. What we therefore need to draw the desired conclusion are the following:
A position  calibrating state $\rho_Q$ as described, and a momentum calibrating state $\rho_P$, together with some densities $f,g$ in phase space such that
$\rho=f\ast\rho_Q=g\ast\rho_P$, and this state has no zero eigenvalues. If we take all these objects Gaussian, they are described completely by their covariance matrices and the ``$\ast$'' operation corresponds to addition of covariance matrices. Therefore we can just choose appropriate covariance matrices for $\rho_Q$ and $\rho_P$, and choose $f$ as the Wigner function of $\rho_P$ and $g$ as the Wigner function of $\rho_Q$. The covariance matrix of $\rho$ is then the sum of those for $\rho_Q$ and $\rho_P$, and clearly does not belong to a pure state. Consequently, $\rho$ corresponds to an oscillator state with strictly positive temperature, and hence has no zero eigenvalues. From the estimates given, it is clear that for this $\rho$ a bound of the form \eqref{CNN} holds, so $\NN_c$ is also weakly compact.
\end{proof}

\noindent{\it Summary of proof:\/}\ Applying the Markov-Kakutani fixed point theorem to the transformations $M\mapsto M^{(q,p)}$ acting on the convex compact sets $\NN$ and $\NN_c$, respectively, proves Lemma \ref{crucial-lemma}. Combining this with the results on the covariant case (Prop.~\ref{thm:MURcov}) gives the  Theorem.

\section{Discussion of the constants}\label{sec:constants}
\subsection{Overview}
In this section we give a brief discussion of the constants $c_{\alpha\beta}=c_{\beta\alpha}$ which appear in both the preparation and the measurement uncertainty relations.
Fig.~\ref{fig:ccprime} gives the basic behaviour. The methods for arriving at these plots will be described below.

Since for a probability measure the $\alpha$-norms increase monotonically, we have that $D_\alpha(\mu,\delta_y)$ is increasing in $\alpha$. Hence the constants $c_{\alpha\beta}$ are increasing in $\alpha$ and $\beta$. For every pair of finite values we can use a Gaussian trial state, for which all moments are finite. Therefore, $c_{\alpha\beta}<\infty$. It is interesting to discuss also the limit in which one of the exponents diverges. For $\beta\to\infty$ the $\beta$-norm goes to the $\infty$-norm, i.e., the supremum norm. This is only finite (say $L$) for probability distributions with bounded support, namely the interval $[-L,L]$. The limit $c_{\alpha,\infty}=\lim_{\beta\to\infty}c_{\alpha\beta}$ thus makes a statement how small the $\alpha$-norm of a quantum position distribution can be when the momentum is confined to the interval $[-1,1]$.
As a family of trial states with finite $\alpha$-norm we can take the smooth functions on the interval which vanish with all their derivatives at the boundary. Hence $c_{\alpha\infty}<\infty$ for all $\alpha<\infty$. This case is of interest when particles are prepared by passing through a slit: This will strictly bound the initial position distribution, and hence implies a lower bound on the spread of the momentum distribution, which for free particles is essentially the same as the ballistically scaled position distribution at large times as detected by a far away screen. If the profile of the beam  in the slit is uniform, like a piece of a plane wave, the Fourier transform will be of the form $\sin(x)/x$, for which even the first moment diverges. Hence the uncertainty relations in this case describe how small the lateral divergence of a beam can be made and how to choose the optimal beam profile for that.  The free parameter $\alpha$ allows the optimization to concentrate either on the center or on the tails of the distribution.
Of course, taking both exponents to be infinite is asking too much: There are no Hilbert space vectors which have strictly bounded support in both position and momentum. Hence $\lim_{\alpha\to\infty}c_{\alpha\infty}=\infty$.

\subsection{Hirschman's lower bound}
A good lower bound on $c_{\alpha\beta}$ comes from the work of Hirschman \cite{Hirschman}:  He derived uncertainty relations with general exponents from the entropic uncertainty relations  $H(E^Q_\rho)+H(E^P_\rho)\geq \log(e\pi)$, where $H(\rho)=-\int \rho(x)\log\rho(x)\, dx$ denotes the Shannon entropy of a probability distribution $\rho$ with respect to Lebesgue measure. At the time of Hirschman's work, the best constant in this in equality was only conjectured, and proven only later in \cite{Beckner}. We recapitulate Hirschman's nice argument, if only to free it of the unnecessary restriction $\alpha=\beta$ which he made. This was lifted in \cite{Cowling}, although these authors were more interested in further generalizations than in finding tight constants. The basic idea of Hirschman is to use variational calculus to maximize the entropy among all probability distributions with given $\alpha^{\rm th}$ moment $M_\alpha=\int \abs x^\alpha\rho(x)\,dx$. This gives a probability density proportional to $\exp(-\zeta\abs x^\alpha)$, where the Lagrange multiplier $\zeta$ can be determined explicitly from $M_\alpha$. It turns out that for the maximizing state the entropy is of the form $(1/\alpha)\log M_\alpha=\log D_\alpha(\rho,\delta_0)$ plus an additive constant $A(\alpha)$ independent of $M_\alpha$. Since for an arbitrary distribution with this moment the entropy must be lower, we get the inequality
\begin{eqnarray}\label{Shannon}
  H(\rho)&\leq& \log D_\alpha(\rho,\delta_0)-A(\alpha) \\
  A(\alpha)&=& \frac{\log(e\alpha)}\alpha+\log\bigl(2\Gamma(1-\frac1\alpha)\bigr)\ .
\end{eqnarray}
Since the entropy does not change under translation, we may also replace the point zero in \eqref{Shannon} by any other one, like the one minimizing \eqref{Delalpha}. That is, we may replace $D_\alpha$ by the $\alpha$-spread. Combining the entropic uncertainty relation with \eqref{Shannon} then gives
\begin{eqnarray}\label{Hirsch1}
  \log(\Delta_\alpha(E^Q_\rho) \Delta_\beta(E^P_\rho))&\geq& H(E^Q_\rho)+H(E^P_\rho)+A(\alpha)+A(\beta)\nonumber\\
         &\geq&\log(e\pi)+A(\alpha)+A(\beta)=\log c^H\sab
\end{eqnarray}
with the constants
\begin{equation}\label{Hirsch2}
  c^H\sab=\frac{\pi\exp(1-\frac1\alpha-\frac1\beta)\,\alpha^{1/\alpha}\beta^{1/\beta}}
                     {4\Gamma(1+\frac1\alpha)\Gamma(1+\frac1\beta)}\ .
\end{equation}
As Fig.~\ref{fig:Hirsch} shows, the Hirschman bound is quite good and exact only at $\alpha=\beta=2$. Indeed, for the bound to be tight the probability densities would have to be $\rho_Q(x)\propto\exp(-\zeta\abs x^\alpha)$ and $\rho_P(p)\propto\exp(-\zeta\abs p^\beta)$. But this is not compatible with a Fourier pair, except when both exponents are 2. Another feature which is missed by Hirschman's bound is the divergence when both exponents become infinite. Indeed, from the point of view of entropic uncertainty, there is no obstruction against both momentum and position being compactly supported.

\begin{figure}[ht]
\centering
  \includegraphics[width=6cm]{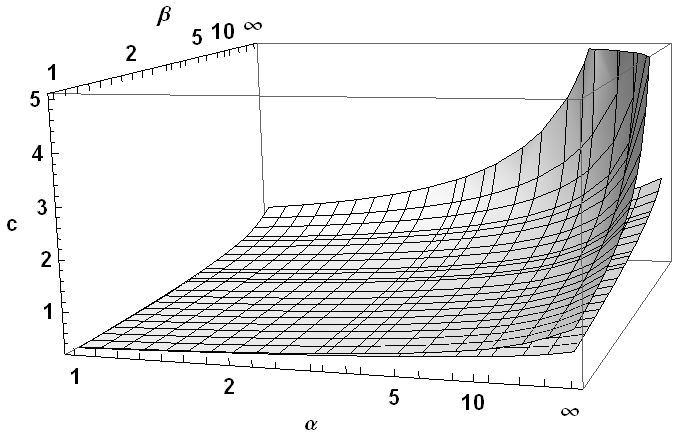}
  \includegraphics[width=6cm]{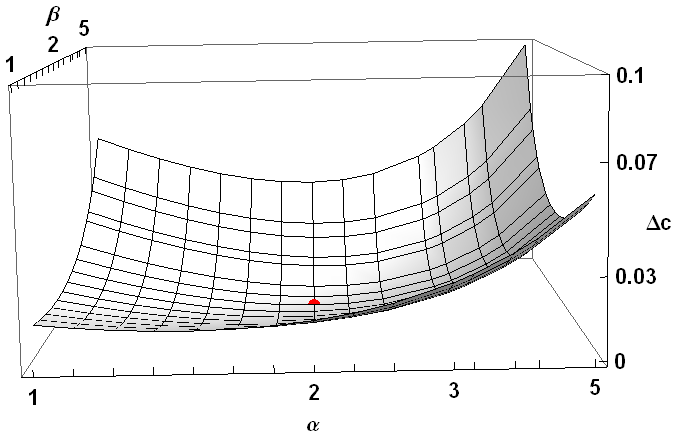}
	\caption{Left panel: Hirschman's lower bound in comparison with the exact bound. Scaling of axes as in Fig.~\ref{fig:ccprime}.
             Right panel: difference of the two functions around $\alpha=\beta=2$ (marked), where the bound is exact.  }
	\label{fig:Hirsch}
\end{figure}

The Hirschman techniques gives us no handle on estimating the first excited state $g'\sab$.

\subsection{Exactly solvable cases}
Let us now turn again to the optimal bounds. The ground state problem for $H(\alpha,\beta)$ can be solved in closed form (up to the solution of an explicitly given transcendental equation) only for a few special values. When $\beta=2$ it is a standard Schr\"odinger operator for some anharmonic oscillator, which is harmonic for $\alpha=2$, leading to the well-known value $c_{22}=\frac12$. For $\alpha=\infty$ we get a particle in the box $[-1,1]$, for which the ground state is $\psi(x)=\cos(\pi x/2)$. This leads to $c_{2,\infty}=\pi/2$. At the other end we have the potential $\abs Q$, which we can consider as a linear potential on the half line with Neumann boundary conditions ($\psi'(0)=0$). This is solved by $\psi(x)={\Airy}(x-\lambda)$, where $\Airy$ is the Airy function, and $-\lambda\approx1.0188$ is the first zero of the derivative $\Airy'$, which is also the eigenvalue. Hence $c_{12}\approx0.3958$. Finally, in the case $\alpha=2m$, $m\in\Nl$, $\beta=\infty$, $H$ is a differential operator on the interval $[-1,1]$ with constant coefficients. Then $\braket\psi{H\psi}$ is finite if $\psi$ is in the domain of $P^m$ considered as an operator on the whole line. This entails that the derivatives up to order $m-1$ are continuous, and hence vanish at the boundary. Unfortunately, the equations characterizing the linear combinations of exponential functions satisfying the boundary conditions are transcendental with complexity increasing rapidly with $m$. For $\alpha=4$ we have to solve the equation $\tan\gamma=-\tanh\gamma$ for the eigenvalue $\gamma^4$, and $c_{4\infty}=\gamma\approx2.365$. For $\alpha=6$ the relevant solution of the fairly complicated equation is exactly $\pi$, and $c_{6\infty}=\pi$. Similarly, and with reasonable effort, one can get $c_{8\infty}$ and $c_{10,\infty}$.

In all cases described here, the excited states, and in particular the first, $g'\sab$ can be obtained in the same way.
Table~\ref{table:exact} summarizes the results.

\begin{table}%\label{table:exact}
\let\flush\hfill
\begin{tabular}{|c|c||c|c|}\hline
$\alpha$&$\beta$&$c\sab$&$c'\sab$\\ \hline
$1$&$2$&\flush$            0.396$&\flush$1.376$\\
$2$&$2$&\flush$        1/2=0.500$&\flush$3/2=1.500$\\
$2$&$\infty$&\flush$ \pi/2=1.571$&\flush$\pi=3.142$\\
$4$&$\infty$&\flush$       2.365$&\flush$3.927$\\
$6$&$\infty$&\flush$   \pi=3.142$&\flush$4.714$\\
$8$&$\infty$&\flush$       3.909$&\flush$5.498$\\
$10$&$\infty$&\flush$      4.672$&\flush$6.279$\\
$\infty$&$\infty$&$ \infty$&$\infty$\\\hline
\end{tabular}
\caption{Exactly solvable values of the optimal lower bound.}\label{table:exact}
\end{table}

\subsection{Expansion in oscillator basis}
For general exponents a numerical approach which works well for small $\alpha,\beta$ and for $\alpha\approx\beta$ is to compute the matrix elements of $H(\alpha,\beta)$ in the harmonic oscillator basis, truncated at some level $n$, and to compute the ground state of the resulting matrix. Since already the coherent bound (corresponding to $n=0$) is fairly good, even small $n$ gives a fairly good approximation. The computation of the matrix elements can be done exactly (in terms of $\Gamma$-functions), so the numerical error is practically only on the truncation. The results are shown in Fig.~\ref{fig:coherent}.
\begin{figure}[ht]
\centering
  \includegraphics[width=12cm]{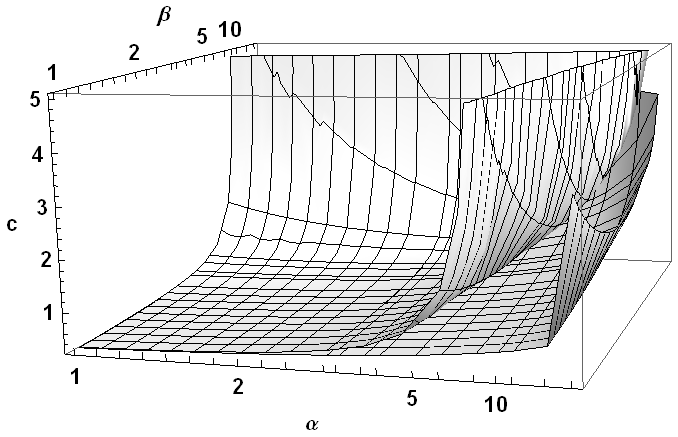}
	\caption{Comparison of the coherent state uncertainty product (top) and the minimized product using oscillator eigenfunctions up to $n=200$ (bottom).
     Axes scaling as in Fig.~\ref{fig:ccprime}.}
     \label{fig:coherent}
\end{figure}

\subsection{One index infinite}
It is apparent from Fig.~\ref{fig:coherent} that for high exponents the approximation in terms of oscillator eigenfunctions becomes unreliable. The case of one infinite exponent is again easier to handle, because instead of a high exponent one just has to implement a support condition. It turns out that for $\beta=\infty$ and all $\alpha$ a good first approximation is the wave function
\begin{equation}\label{PsiInfTest}
  \Hat\Psi(p)\propto(1-p^2)^\alpha_+
\end{equation}
where $x_+$ denotes the positive part of $x\in\Rl$ (i.e., $x_+=x$ for $x\geq0$ and $x_+=0$ for $x\leq0$). The $\alpha^{\rm th}$ moment of the position distribution can be evaluated explicitly giving the bound
\begin{equation}\label{cInfTest}
  c_{\alpha\infty}\leq \left(\frac{\Gamma \left(\frac{\alpha }{2}+1\right) \Gamma \left(\alpha
   +\frac{3}{2}\right)}{\Gamma \left(\frac{\alpha
   +3}{2}\right)}\right)^{\frac{1}{\alpha }}
   =\frac\alpha e+ \frac{\ln(4\pi\alpha)}{2e}+ {\mathbf o}(1), \quad \mbox{as }\ \alpha\to\infty,
\end{equation}
where the second expression is the Stirling approximation.
\begin{figure}[ht]
\centering
  \includegraphics[width=7cm]{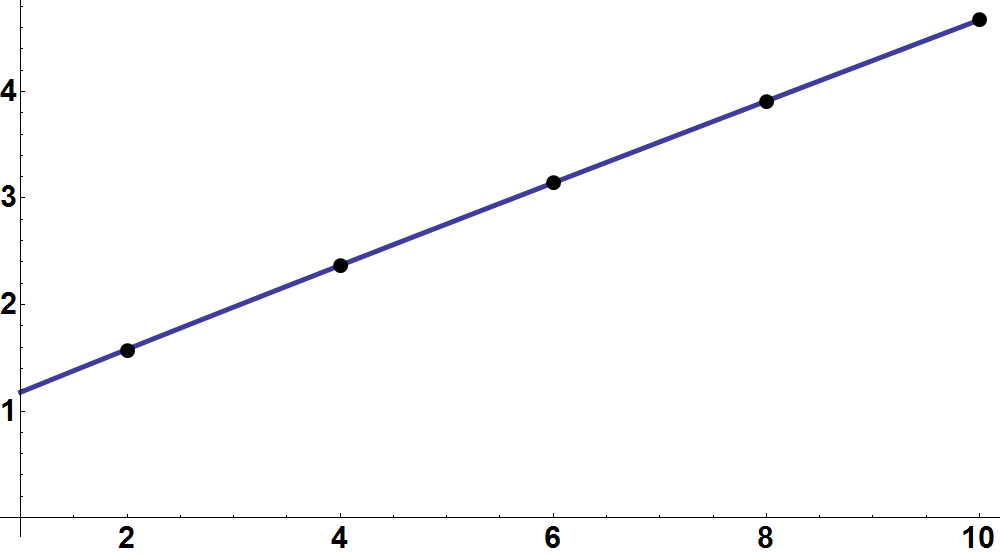}
	\caption{The upper bound \eqref{cInfTest}. The dots represent the known exact values of $c_{2n,\infty}$ cited above.  }
	\label{fig:inftrial}
\end{figure}

One can improve this in a similar way as for the Gaussian trial function, by multiplying \eqref{PsiInfTest} with polynomials in $p$, i.e., by expressing the Hamiltonian in terms of associated Legendre functions. This confirms the close approximation shown in Fig.~\ref{fig:inftrial} also for the non-integer values, with an error decreasing exponentially.

\section{Extensions and generalizations}\label{sec:extensions}
\subsection{Other Observables}
It is clear that the basic definitions of errors can be applied to arbitrary observables. An appropriate metric has to be chosen on the outcome space of each observable. Then, whenever two observables $A,B$ are not jointly measurable, there will be a measurement uncertainty relation, which expresses quantitatively that $D_\alpha(M^A,E^A)$ and $D_\beta(M^B,E^B)$ cannot both be small. For the analogous statement of calibration errors it is needed that $A$ and $B$ are projection valued. As in the case of preparation uncertainty relations there may be many ways of expressing mathematically that $\Delta A$ and $\Delta B$ ``cannot both be small''.

The product form $\Delta A\ \Delta B\geq c$ is a rather untypical expression of this sort, which is specific to canonical pairs $(Q,P)$ and their dilation symmetry
$Q\mapsto\lambda Q$; $P\mapsto P/\lambda$. One could also say: The product form is fixed by dimensional analysis. For general $A$,$B$ one should think of uncertainty trade-offs in terms of an  ``uncertainty diagram'' describing the set of pairs $(\Delta A,\Delta B)$ realizable by appropriate choice of preparation or approximate joint measurement (see Fig.~\ref{fig:finUR} %{spin1}
for an illustrative example). An ``uncertainty relation'' would be any inequality that excludes the origin $\Delta A=\Delta B=0$ and some region around it. Of course, contrary to the entire textbook literature the Robertson form $\Delta A \Delta B\geq\frac 12\abs{\langle[A,B]\rangle}$ (like Schr\"odinger's improvement \cite{Schro1930}) is {\it not} an uncertainty relation in this sense, due to the state dependence of the right hand side. In fact the {\it only} cases in which the best constant in an ``uncertainty relation'' of product form is positive are canonical pairs. In contrast, a relation of the form   $(\Delta A)^2+(\Delta B)^2\geq c^2$ can always be used to make a non-trivial statement, even if this does not capture the full story contained in the uncertainty diagram.

For any pair of observables $A,B$, and choices of metrics and exponents, we now have two uncertainty diagrams: one for preparation uncertainty and one for measurement uncertainty. It is a very special feature of the canonical case studied in this paper, perhaps due to the very high symmetry, that the two diagrams coincide. In general they will be different. Indeed (for sharp observables) the origin is included in the preparation uncertainty diagram if and only if $A$ and $B$ have at least {\it one} common eigenvector, whereas the measurement uncertainty diagram contains the origin if and only if the observables commute, and hence have {\it a basis} of common eigenvectors. An example in the opposite direction is given by a pair of jointly measurable unsharp observables for which no output distribution is ever concentrated on a point.
 % which are, however, jointly measurable.
 This would leave the logical possibility that for sharp observables the preparation uncertainty diagram is always included in the measurement diagram, but much too little is known about either kind of uncertainty to even offer this as a conjecture.

\subsection{General phase spaces, including finite ones }
The methods in this paper do extend to discrete canonical pairs, i.e., to pairs of unitaries $U,V$ which commute up to a root of unity ($UV=\exp(2\pi i k/d)VU$). The observables in question are then the spectral measures of $U$ and $V$. The irreducible representations of this relation (the analogue of the Schr\"odinger representation studied in this paper) are $d$ dimensional, with ``position'' $U$ represented as a multiplication operator and ``momentum'' $V$ the cyclic shift by $k$ steps. Further generalizations allow any locally compact abelian group $X$ to replace the cyclic group $X$ in this example, with the position observable on $\LL^2(X)$ and momentum generated by the shifts, corresponding by a Fourier transform to an observable on the dual group $\widehat X$. A joint measurement of these thus has the outcome space $X\times\widehat X$, also called phase space. The case $X\cong\widehat X\cong\Rl$ leads back to standard phase space, $X\cong\widehat X\cong\Ir_d$ is the cyclic case. However, this class also contains the ``Fourier series'' case $X=\{e^{it}|t\in(-\pi,\pi]\}$, $\widehat X=\Ir$, and arbitrary products of all these examples, like the phase spaces for quantum systems with many canonical degrees of freedom.

Then the methods of this paper apply, with the following modifications:
\begin{itemize}
\item One has to choose a translation invariant metric on each of the spaces $X$ and $\widehat X$. On non-compact groups $X,\widehat X$ it should have compact level sets, and hence diverge at infinity.
\item The harmonic analysis \cite{QHA} sketched in Remark 2 carries over \cite{Schultz2013}. In particular, all covariant phase space measurements are parameterized by density operators, and their marginals are formed by convolution as in Eq.~\eqref{marginalconcolve}.
\item The properties of Wasserstein metrics under convolution were already considered at the required level of generality in Sect.~\ref{sec:errors}.
\item The averaging argument carries over, with the only modification that the compactness discussion becomes superfluous in the finite case.
\item Hence our main result holds in the form that for any $X$, any choice of metrics, and any $\alpha,\beta$ the uncertainty diagrams
   for (a) preparation uncertainty, (b) measurement uncertainty according to calibration criteria and (c) measurement uncertainty according to Wasserstein metrics coincide.
\end{itemize}
To be precise we have shown only that the monotone closures of these diagrams coincide, i.e., the diagrams in which we only care how small uncertainties can be, so that with every point also the positive quadrant above it is included, and white spaces as those near the axes in Fig.\ref{fig:finUR} are filled in. If we restrict to covariant measurements the diagrams would coincide even without the monotone closure, since the measurement uncertainties are just equal to suitable preparation uncertainties. However, the averaging argument gives only that for every pair of measurement uncertainties there is a pair of, in general, smaller preparation uncertainties, so the monotone closure is needed.

\begin{figure}[ht]
\centering
  \includegraphics[width=6cm]{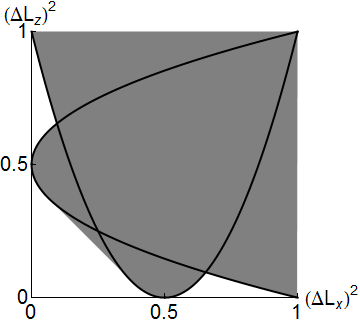}\hskip 1 cm
  \includegraphics[width=5.8cm]{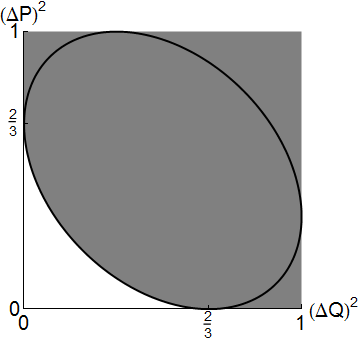}
	\caption{Left panel: Preparation uncertainty diagram for two angular momentum components of a Spin-1 system. Boundary lines are parabolas indicated and, partly, their convex hull. Results from \cite{Sachse}.\\
      Right panel: Uncertainty diagram for discrete canonical variables and discrete metric. It simultaneously represents the uncertainties for preparation, as well as measurement using either the metric criterion or the calibration criterion. The boundary line is part of the ellipse indicated. The diagram is drawn for dimension $d=3$. }
	\label{fig:finUR}
\end{figure}

Rather than displaying a zoo of uncertainty diagrams we consider here just the case of a finite cyclic group $X=\Ir_d$. For finite outcome sets it is often natural to choose a metric which makes the uncertainty criteria independent of a relabelling of the outcomes, like entropic uncertainty relations. This forces the discrete metric $D(x,y)=(1-\delta_{xy})$. Since then $D(x,y)^\alpha$ is independent of $\alpha$ the distance functions $D_\alpha(\mu,\nu)$ all express the same quantity, which turns out to be essentially the variation norm:
\begin{eqnarray}\label{discretenorm}
  D_\alpha(\mu,\nu)&=&\left(\frac12\norm{\mu-\nu}_1\right)^{1/\alpha}  \\
  \Delta_\alpha(\mu)&=&1-\max_x \mu(\{x\})  \label{discreteVar}
\end{eqnarray}
The proof of \eqref{discretenorm} is easiest by using the dual characterization of $D_1$ as a supremum over functions with Lipshitz constant $1$, and noting that this is equal to the supremum over all functions $f:X\to[0,1]$. Consider now a density operator $\rho$ on $\ell^2(X)$. Its position distribution is given by the expectations of the projections $\kettbra x$, and by translation invariance it suffices to consider one of them, say $\psi^Q=\ket0$. Similarly, the momentum probabilities are given by the expectation of $\kettbra{\psi^P}$ with $\brAket{\psi^P}x=1/\sqrt d$  and its momentum translates. Therefore the uncertainty diagram is
\begin{equation}\label{discretedia}
  \bigl\{(\Delta_1(E^Q_\rho),\Delta_1(E^P_\rho)\bigr\}
    =\bigl\{(1-\brAAket{\psi^Q}\rho{\psi^Q},\  1-\brAAket{\psi^P}\rho{\psi^P})\bigr\}
\end{equation}
where $\rho$ runs over all density operators. Clearly this depends only on the restriction of $\rho$ to the two dimensional subspace generated by $\psi^Q$ and $\psi^P$. When $d=2$ this is the whole space and the diagram is a section of the Bloch sphere in some slanted coordinates, i.e., an ellipse. For higher dimensions we get, in addition the point $(1,1)$ and all segments connecting the ellipse to this point. The monotone closure is always the same. It is easy to see that the ellipse is centered at the point $(\frac12,\frac12)$, and touches the axes at $1-1/d$. This completely fixes the diagram (right panel in Fig.~\ref{fig:finUR}).

We have covered here only the uncertainty relations which come out of our analysis practically without additional work. Of course, there are many other pairs of observables one would be interested in. For some studies in this direction we recommend \cite{Carmeli_etal2011,Buscemi,Coles2013}.

Finally, we note that the special case of the qubit observables has extensively been studied in a separate paper \cite{BLW2013qubit}, where additive error trade-off relations are proven that can be tested by the same experiments performed to test an inequality due to Ozawa.

\subsection{More state dependence?}
It is clear that the inequality $D_\alpha(M^Q_\rho,E^Q_\rho)D_\beta(M^P_\rho,E^P_\rho)\geq c$, if it were true for all $\rho$, would be a much stronger Theorem than ours, which claims only a relation for the suprema of each of the factors. However, such a relation trivially fails, for example, by choosing for $M$ an ideal position measurement plus the random generation of a momentum output drawn according to $E^P_\rho$. Rather than touting this as a refutation of Heisenberg's paper, one can look for true relations which are intermediate between the state dependent one and the double supremum considered in this paper. A natural candidate is
\begin{equation}\label{jointsup1}
  \inf_M\sup_\rho D_\alpha(M^Q_\rho,E^Q_\rho)D_\beta(M^P_\rho,E^P_\rho)
    \leq \inf_M\sup_{\rho,\sigma} D_\alpha(M^Q_\rho,E^Q_\rho)D_\beta(M^P_\sigma,E^P_\sigma)=c\sab.
\end{equation}
Note that by the argument given above, switching $\inf$ and $\sup$ on the left hand side would again trivially produce zero.

The coupled supremum is difficult to compute. David Reeb \cite{Reeb} has evaluated at least a restricted version of it, namely for $\alpha=\beta=2$ and both $M$ and $\rho$ Gaussian. Indeed this can be done in a straightforward way using Example~\ref{ex:Wass2}.
Let us take $\rho$ as centered and with spreads $r_Q$ and $r_P$. Take $s_Q,s_P$ be the corresponding ones for the likewise Gaussian state $\sigma$ defining $M$. Then we can use the explicit form of the Wasserstein-2 metric to get
$$D_2(M^Q_\rho,E^Q_\rho)=\sqrt{r_Q^2+s_Q^2}-r_Q=r_Q\Bigl(\sqrt{1+(s_Q/r_Q)^2}-1\Bigr)$$
By concavity of the square root, this is bounded above by $s_Q^2/(2r_Q)$, and because $r_Qr_P\geq(1/2)\hbar=s_Qs_P$, the uncertainty product is bounded by $(1/4)(\hbar/2)$. This bound is not tight. Since $f(x)=\sqrt{1+x^2}-x$ is decreasing, the maximum is taken on the minimal uncertainty $r_Q,r_P$, i.e., the maximum over all Gaussian inputs is the maximum of $f(x)f(1/x)$, which is attained at $x=1$. This translates into $r_Q=s_Q$ and $r_P=s_P$, so the maximum over all Gaussian inputs is
\begin{equation}\label{jointsup2}
  \sup_{\rho\ \mbox{\small Gaussian}}\ D_2(M^Q_\rho,E^Q_\rho)D_2(M^P_\rho,E^P_\rho)=(\sqrt2-1)^2\frac\hbar2\approx .17 \frac\hbar2.
\end{equation}
Thus it appears that there might be a proper gap in \eqref{jointsup1}, but the evidence is rather indirect. One should also point out that while the double sup version has a straightforward interpretation coupling two figures of merit, it is not so clear what the coupled sup would be telling us.

\subsection{Finite operating ranges}
The figure of merit obtained by taking the worst case over {\it all} input states is very demanding indeed. In practice, for assessing the performance of a microscope we would not worry about the resolution on objects light years away. Therefore it is reasonable to restrict the supremum to states localized in some finite {\it operating range}. Shrinking this range to zero would bring us essentially back to the state dependent approach. For a good microscope the operating range should at least be large compared to the resolution. What we considered in this paper is the idealization in which this ratio goes to infinity.

We do plan to make this explicit, and set up uncertainty relations also with finite operating ranges, which in the limit converge to the ones given in this paper. In fact, for the squared noise operator approach this has already been considered by Appleby \cite{Appleby1998b}.

\subsection{Entropic versions}
Shortly before this paper was completed, a related paper \cite{Buscemi} on entropic state independent noise-disturbance uncertainty relation appeared, providing a kind of entropic version of the idea of calibration.
 In this paper the noise  $N(\h M,A)$
in an approximate measurement $\h M$ of a discrete sharp nondegenerate finite-level observable $A$
is  quantified (in the form of entropy) by how well it is possible to guess from the measurement outcome distributions the input eigenstate $\rho_k$ from a uniform distributions of such inputs.
Similarly, the unavoidable disturbance $D(\h M,B)$ quantifies (in the form of entropy) the extent to which the action of $\h M$ necessarily reduces the information about which eigenstate $\sigma_l$ of  the observable $B$, another discrete sharp nondegenerate observable,  was initially chosen among a uniform distribution of them. Using the Maassen-Uffink entropic uncertainty relation for preparations \cite{Maassen_1988}, the entropic noise-disturbance trade-off relation then takes the additive form
$$
N(\h M,A)+D(\h M,B)\geq - \log c,
$$
where $c=\max_{k,l}\tr{\rho_k\sigma_l}$ is the same constant as in the preparation relation. It seems to be an open question if there is a measurement which saturates the inequality.  It remains to be seen how this approach extends beyond the finite-dimensional case.

\section{Conclusion and Outlook}\label{sec:conclusion}

We have formulated and proved a family of measurement uncertainty relations for canonical pairs of observables. This gives one possible rigorous interpretation of Heisenberg's 1927 statements.

The particular case of canonical variables is special, due to the phase space symmetry of the problem. This leads to the complete equivalence of the possible values of $(\Delta Q,\Delta P)$ between preparation and measurement uncertainty, even when the exponents $\alpha,\beta$ are varied. In order to establish this we had to generalize standard preparation uncertainty relation to general power means as well, and gave a characterization of the optimal constants in terms of a ground state problem to be solved numerically.

\section{Acknowledgements}
This work is partly supported (P.B., P.L.) by the Academy of Finland, project no 138135, and EU COST Action MP1006.
R.F.W.\ acknowledges support from the European network SIQS. This work has benefited from discussions at a conference organised as part of COST Action MP1006, {\em Fundamental Problems in Quantum Physics}. RFW acknowledges the hospitality of the program ``Mathematical Horizons in Quantum Physics 2013'' at the IMS Singapore, and in particular discussions with David Reeb, who pointed out Ref.~28, Example~\ref{ex:Wass2} and its use to compute the combined supremum.

\bibliographystyle{unsrtnat}

%\bibliography{UR}

\begin{thebibliography}{48}%
\makeatletter
\providecommand \@ifxundefined [1]{%
 \@ifx{#1\undefined}
}%
\providecommand \@ifnum [1]{%
 \ifnum #1\expandafter \@firstoftwo
 \else \expandafter \@secondoftwo
 \fi
}%
\providecommand \@ifx [1]{%
 \ifx #1\expandafter \@firstoftwo
 \else \expandafter \@secondoftwo
 \fi
}%
\providecommand \natexlab [1]{#1}%
\providecommand \enquote  [1]{``#1''}%
\providecommand \bibnamefont  [1]{#1}%
\providecommand \bibfnamefont [1]{#1}%
\providecommand \citenamefont [1]{#1}%
\providecommand \href@noop [0]{\@secondoftwo}%
\providecommand \href [0]{\begingroup \@sanitize@url \@href}%
\providecommand \@href[1]{\@@startlink{#1}\@@href}%
\providecommand \@@href[1]{\endgroup#1\@@endlink}%
\providecommand \@sanitize@url [0]{\catcode `\\12\catcode `\$12\catcode
  `\&12\catcode `\#12\catcode `\^12\catcode `\_12\catcode `\%12\relax}%
\providecommand \@@startlink[1]{}%
\providecommand \@@endlink[0]{}%
\providecommand \url  [0]{\begingroup\@sanitize@url \@url }%
\providecommand \@url [1]{\endgroup\@href {#1}{\urlprefix }}%
\providecommand \urlprefix  [0]{URL }%
\providecommand \Eprint [0]{\href }%
\providecommand \doibase [0]{http://dx.doi.org/}%
\providecommand \selectlanguage [0]{\@gobble}%
\providecommand \bibinfo  [0]{\@secondoftwo}%
\providecommand \bibfield  [0]{\@secondoftwo}%
\providecommand \translation [1]{[#1]}%
\providecommand \BibitemOpen [0]{}%
\providecommand \bibitemStop [0]{}%
\providecommand \bibitemNoStop [0]{.\EOS\space}%
\providecommand \EOS [0]{\spacefactor3000\relax}%
\providecommand \BibitemShut  [1]{\csname bibitem#1\endcsname}%
\let\auto@bib@innerbib\@empty
%</preamble>
\bibitem [{\citenamefont {Heisenberg}(1927)}]{Heisenberg1927}%
  \BibitemOpen
  \bibfield  {author} {\bibinfo {author} {\bibfnamefont {W.}~\bibnamefont
  {Heisenberg}},\ }\bibfield  {title} {\enquote {\bibinfo {title} {{\"U}ber den
  anschaulichen {I}nhalt der quantentheoretischen {K}inematik und
  {M}echanik},}\ }\href@noop {} {\bibfield  {journal} {\bibinfo  {journal}
  {Zeitschr. Phys.}\ }\textbf {\bibinfo {volume} {43}},\ \bibinfo {pages}
  {172--198} (\bibinfo {year} {1927})}\BibitemShut {NoStop}%
\bibitem [{\citenamefont {Kennard}(1927)}]{Kennard}%
  \BibitemOpen
  \bibfield  {author} {\bibinfo {author} {\bibfnamefont {E.}~\bibnamefont
  {Kennard}},\ }\bibfield  {title} {\enquote {\bibinfo {title} {Zur
  {Q}uantenmechanik einfacher {B}ewegungstypen},}\ }\href@noop {} {\bibfield
  {journal} {\bibinfo  {journal} {Zeitschr. Phys.}\ }\textbf {\bibinfo {volume}
  {44}},\ \bibinfo {pages} {326--352} (\bibinfo {year} {1927})}\BibitemShut
  {NoStop}%
\bibitem [{\citenamefont {Robertson}(1929)}]{Robertson}%
  \BibitemOpen
  \bibfield  {author} {\bibinfo {author} {\bibfnamefont {H.}~\bibnamefont
  {Robertson}},\ }\bibfield  {title} {\enquote {\bibinfo {title} {The
  uncertainty principle},}\ }\href@noop {} {\bibfield  {journal} {\bibinfo
  {journal} {Phys. Rev.}\ }\textbf {\bibinfo {volume} {34}},\ \bibinfo {pages}
  {163--164} (\bibinfo {year} {1929})}\BibitemShut {NoStop}%
\bibitem [{\citenamefont {Weyl}(1928)}]{Weyl}%
  \BibitemOpen
  \bibfield  {author} {\bibinfo {author} {\bibfnamefont {H.}~\bibnamefont
  {Weyl}},\ }\href@noop {} {\emph {\bibinfo {title} {Gruppentheorie und
  {Q}uantenmechanik}}}\ (\bibinfo  {publisher} {Hirzel},\ \bibinfo {address}
  {Leipzig},\ \bibinfo {year} {1928})\BibitemShut {NoStop}%
\bibitem [{\citenamefont {Busch}, \citenamefont {Heinonen},\ and\ \citenamefont
  {Lahti}(2007)}]{BuHeLa07}%
  \BibitemOpen
  \bibfield  {author} {\bibinfo {author} {\bibfnamefont {P.}~\bibnamefont
  {Busch}}, \bibinfo {author} {\bibfnamefont {T.}~\bibnamefont {Heinonen}}, \
  and\ \bibinfo {author} {\bibfnamefont {P.}~\bibnamefont {Lahti}},\ }\bibfield
   {title} {\enquote {\bibinfo {title} {Heisenberg's uncertainty principle},}\
  }\href@noop {} {\bibfield  {journal} {\bibinfo  {journal} {Phys. Rep.}\
  }\textbf {\bibinfo {volume} {452}},\ \bibinfo {pages} {155--176} (\bibinfo
  {year} {2007})}\BibitemShut {NoStop}%
\bibitem [{\citenamefont {Hirschman}(1957)}]{Hirschman}%
  \BibitemOpen
  \bibfield  {author} {\bibinfo {author} {\bibfnamefont {I.~I.}\ \bibnamefont
  {Hirschman}, \bibfnamefont {Jr.}},\ }\bibfield  {title} {\enquote {\bibinfo
  {title} {A note on entropy},}\ }\href@noop {} {\bibfield  {journal} {\bibinfo
   {journal} {Amer. J. Math.}\ }\textbf {\bibinfo {volume} {79}},\ \bibinfo
  {pages} {152--156} (\bibinfo {year} {1957})}\BibitemShut {NoStop}%
\bibitem [{\citenamefont {Ozawa}(2003)}]{Ozawa03}%
  \BibitemOpen
  \bibfield  {author} {\bibinfo {author} {\bibfnamefont {M.}~\bibnamefont
  {Ozawa}},\ }\bibfield  {title} {\enquote {\bibinfo {title} {Physical content
  of {H}eisenberg's uncertainty relation: limitation and reformulation},}\
  }\href {\doibase 10.1016/j.physleta.2003.07.025} {\bibfield  {journal}
  {\bibinfo  {journal} {Phys. Lett. A}\ }\textbf {\bibinfo {volume} {318}},\
  \bibinfo {pages} {21--29} (\bibinfo {year} {2003})}\BibitemShut {NoStop}%
\bibitem [{\citenamefont {Ozawa}(2004)}]{Ozawa04}%
  \BibitemOpen
  \bibfield  {author} {\bibinfo {author} {\bibfnamefont {M.}~\bibnamefont
  {Ozawa}},\ }\bibfield  {title} {\enquote {\bibinfo {title} {Uncertainty
  relations for noise and disturbance in generalized quantum measurements},}\
  }\href {\doibase 10.1016/j.aop.2003.12.012} {\bibfield  {journal} {\bibinfo
  {journal} {Ann. Phys.}\ }\textbf {\bibinfo {volume} {311}},\ \bibinfo {pages}
  {350--416} (\bibinfo {year} {2004})}\BibitemShut {NoStop}%
\bibitem [{\citenamefont {Ozawa}(2005)}]{Ozawa05}%
  \BibitemOpen
  \bibfield  {author} {\bibinfo {author} {\bibfnamefont {M.}~\bibnamefont
  {Ozawa}},\ }\bibfield  {title} {\enquote {\bibinfo {title} {Universal
  uncertainty principle in the measurement operator formalism},}\ }\href
  {\doibase 10.1088/1464-4266/7/12/033} {\bibfield  {journal} {\bibinfo
  {journal} {J. Opt. B}\ }\textbf {\bibinfo {volume} {7}},\ \bibinfo {pages}
  {S672--S681} (\bibinfo {year} {2005})}\BibitemShut {NoStop}%
\bibitem [{\citenamefont {Erhart}\ \emph {et~al.}(2012)\citenamefont {Erhart},
  \citenamefont {Sponar}, \citenamefont {Sulyok}, \citenamefont {Badurek},
  \citenamefont {Ozawa},\ and\ \citenamefont {Hasegawa}}]{Erh12}%
  \BibitemOpen
  \bibfield  {author} {\bibinfo {author} {\bibfnamefont {J.}~\bibnamefont
  {Erhart}}, \bibinfo {author} {\bibfnamefont {S.}~\bibnamefont {Sponar}},
  \bibinfo {author} {\bibfnamefont {G.}~\bibnamefont {Sulyok}}, \bibinfo
  {author} {\bibfnamefont {G.}~\bibnamefont {Badurek}}, \bibinfo {author}
  {\bibfnamefont {M.}~\bibnamefont {Ozawa}}, \ and\ \bibinfo {author}
  {\bibfnamefont {Y.}~\bibnamefont {Hasegawa}},\ }\bibfield  {title} {\enquote
  {\bibinfo {title} {Experimental demonstration of a universally valid
  error-disturbance uncertainty relation in spin measurements},}\ }\href
  {\doibase 10.1038/nphys2194} {\bibfield  {journal} {\bibinfo  {journal}
  {Nature Phys.}\ }\textbf {\bibinfo {volume} {8}},\ \bibinfo {pages}
  {185--189} (\bibinfo {year} {2012})}\BibitemShut {NoStop}%
\bibitem [{\citenamefont {Rozema}\ \emph {et~al.}(2012)\citenamefont {Rozema},
  \citenamefont {Darabi}, \citenamefont {Mahler}, \citenamefont {Hayat},
  \citenamefont {Soudagar},\ and\ \citenamefont {Steinberg}}]{Roz12}%
  \BibitemOpen
  \bibfield  {author} {\bibinfo {author} {\bibfnamefont {L.}~\bibnamefont
  {Rozema}}, \bibinfo {author} {\bibfnamefont {A.}~\bibnamefont {Darabi}},
  \bibinfo {author} {\bibfnamefont {D.}~\bibnamefont {Mahler}}, \bibinfo
  {author} {\bibfnamefont {A.}~\bibnamefont {Hayat}}, \bibinfo {author}
  {\bibfnamefont {Y.}~\bibnamefont {Soudagar}}, \ and\ \bibinfo {author}
  {\bibfnamefont {A.}~\bibnamefont {Steinberg}},\ }\bibfield  {title} {\enquote
  {\bibinfo {title} {Violation of {H}eisenberg's measurement-disturbance
  relationship by weak measurements},}\ }\href {\doibase
  10.1103/PhysRevLett.109.100404} {\bibfield  {journal} {\bibinfo  {journal}
  {Phys. Rev. Lett.}\ }\textbf {\bibinfo {volume} {109}},\ \bibinfo {pages}
  {100404} (\bibinfo {year} {2012})}\BibitemShut {NoStop}%
\bibitem [{\citenamefont {Busch}, \citenamefont {Lahti},\ and\ \citenamefont
  {Werner}(2013{\natexlab{a}})}]{BLW2013a}%
  \BibitemOpen
  \bibfield  {author} {\bibinfo {author} {\bibfnamefont {P.}~\bibnamefont
  {Busch}}, \bibinfo {author} {\bibfnamefont {P.}~\bibnamefont {Lahti}}, \ and\
  \bibinfo {author} {\bibfnamefont {R.}~\bibnamefont {Werner}},\ }\bibfield
  {title} {\enquote {\bibinfo {title} {Noise {O}perators and {M}easures of
  {RMS} {E}rror and {D}isturbance in {Q}uantum {M}echanics},}\ }\href
  {http://arxiv.org/abs/1312.4393} {\  (\bibinfo {year}
  {2013}{\natexlab{a}})},\ \Eprint {http://arxiv.org/abs/arXiv:1312.4393}
  {arXiv:1312.4393} \BibitemShut {NoStop}%
\bibitem [{\citenamefont {Busch}, \citenamefont {Lahti},\ and\ \citenamefont
  {Werner}(2013{\natexlab{b}})}]{BLW2013c}%
  \BibitemOpen
  \bibfield  {author} {\bibinfo {author} {\bibfnamefont {P.}~\bibnamefont
  {Busch}}, \bibinfo {author} {\bibfnamefont {P.}~\bibnamefont {Lahti}}, \ and\
  \bibinfo {author} {\bibfnamefont {R.}~\bibnamefont {Werner}},\ }\bibfield
  {title} {\enquote {\bibinfo {title} {Proof of {H}eisenberg's
  {E}rror-{D}isturbance {R}elation},}\ }\href {\doibase
  10.1103/PhysRevLett.111.160405} {\bibfield  {journal} {\bibinfo  {journal}
  {Phys. Rev. Lett.}\ }\textbf {\bibinfo {volume} {111}},\ \bibinfo {pages}
  {160405} (\bibinfo {year} {2013}{\natexlab{b}})} \BibitemShut
  {NoStop}%
\bibitem [{\citenamefont {Werner}(2004)}]{Werner04}%
  \BibitemOpen
  \bibfield  {author} {\bibinfo {author} {\bibfnamefont {R.~F.}\ \bibnamefont
  {Werner}},\ }\bibfield  {title} {\enquote {\bibinfo {title} {The uncertainty
  relation for joint measurement of position and momentum},}\ }\href@noop {}
  {\bibfield  {journal} {\bibinfo  {journal} {Quant. Inform. Comput.}\ }\textbf
  {\bibinfo {volume} {4}},\ \bibinfo {pages} {546--562} (\bibinfo {year}
  {2004})},\ \Eprint {http://arxiv.org/abs/quant-ph/0405184} {quant-ph/0405184}
  \BibitemShut {NoStop}%
\bibitem [{\citenamefont {Davies}(1976)}]{Davies}%
  \BibitemOpen
  \bibfield  {author} {\bibinfo {author} {\bibfnamefont {E.}~\bibnamefont
  {Davies}},\ }\href@noop {} {\emph {\bibinfo {title} {Quantum {T}heory of
  {O}pen {S}ystems}}}\ (\bibinfo  {publisher} {Academic Press},\ \bibinfo
  {year} {1976})\BibitemShut {NoStop}%
\bibitem [{\citenamefont {Holevo}(1982)}]{Holevo}%
  \BibitemOpen
  \bibfield  {author} {\bibinfo {author} {\bibfnamefont {A.}~\bibnamefont
  {Holevo}},\ }\href@noop {} {\emph {\bibinfo {title} {Probabilistic and
  {S}tatistical {A}spects of {Q}uantum {T}heory}}}\ (\bibinfo  {publisher}
  {North Holland},\ \bibinfo {address} {Amsterdam},\ \bibinfo {year}
  {1982})\BibitemShut {NoStop}%
\bibitem [{\citenamefont {Werner}(1984)}]{QHA}%
  \BibitemOpen
  \bibfield  {author} {\bibinfo {author} {\bibfnamefont {R.~F.}\ \bibnamefont
  {Werner}},\ }\bibfield  {title} {\enquote {\bibinfo {title} {Quantum harmonic
  analysis on phase space},}\ }\href@noop {} {\bibfield  {journal} {\bibinfo
  {journal} {J. Math. Phys.}\ }\textbf {\bibinfo {volume} {25}},\ \bibinfo
  {pages} {1404--1411} (\bibinfo {year} {1984})}\BibitemShut {NoStop}%
\bibitem [{\citenamefont {Husimi}(1940)}]{Husimi}%
  \BibitemOpen
  \bibfield  {author} {\bibinfo {author} {\bibfnamefont {K.}~\bibnamefont
  {Husimi}},\ }\bibfield  {title} {\enquote {\bibinfo {title} {Some formal
  properties of the density matrix},}\ }\href@noop {} {\bibfield  {journal}
  {\bibinfo  {journal} {Proc. Physico-Mathem. Soc. Japan}\ }\textbf {\bibinfo
  {volume} {22}},\ \bibinfo {pages} {264} (\bibinfo {year} {1940})}\BibitemShut
  {NoStop}%
\bibitem [{\citenamefont {Cassinelli}, \citenamefont {De~Vito},\ and\
  \citenamefont {Toigo}(2003)}]{Cassinelli2003}%
  \BibitemOpen
  \bibfield  {author} {\bibinfo {author} {\bibfnamefont {G.}~\bibnamefont
  {Cassinelli}}, \bibinfo {author} {\bibfnamefont {E.}~\bibnamefont {De~Vito}},
  \ and\ \bibinfo {author} {\bibfnamefont {A.}~\bibnamefont {Toigo}},\
  }\bibfield  {title} {\enquote {\bibinfo {title} {Positive operator valued
  measures covariant with respect to an irreducible representation},}\ }\href
  {\doibase 10.1063/1.1598277} {\bibfield  {journal} {\bibinfo  {journal} {J.
  Math. Phys.}\ }\textbf {\bibinfo {volume} {44}},\ \bibinfo {pages}
  {4768--4775} (\bibinfo {year} {2003})}\BibitemShut {NoStop}%
\bibitem [{\citenamefont {Kiukas}, \citenamefont {Lahti},\ and\ \citenamefont
  {Ylinen}(2006)}]{KLY2006}%
  \BibitemOpen
  \bibfield  {author} {\bibinfo {author} {\bibfnamefont {J.}~\bibnamefont
  {Kiukas}}, \bibinfo {author} {\bibfnamefont {P.}~\bibnamefont {Lahti}}, \
  and\ \bibinfo {author} {\bibfnamefont {K.}~\bibnamefont {Ylinen}},\
  }\bibfield  {title} {\enquote {\bibinfo {title} {Phase space quantization and
  the operator moment problem},}\ }\href@noop {} {\bibfield  {journal}
  {\bibinfo  {journal} {J. Math. Phys.}\ }\textbf {\bibinfo {volume} {47}},\
  \bibinfo {pages} {072104} (\bibinfo {year} {2006})}\BibitemShut {NoStop}%
\bibitem [{\citenamefont {Carmeli}, \citenamefont {Heinonen},\ and\
  \citenamefont {Toigo}(2004)}]{CHT2004}%
  \BibitemOpen
  \bibfield  {author} {\bibinfo {author} {\bibfnamefont {C.}~\bibnamefont
  {Carmeli}}, \bibinfo {author} {\bibfnamefont {T.}~\bibnamefont {Heinonen}}, \
  and\ \bibinfo {author} {\bibfnamefont {A.}~\bibnamefont {Toigo}},\ }\bibfield
   {title} {\enquote {\bibinfo {title} {Position and momentum observables on
  {$\Bbb R$} and on {${\Bbb R}^3$}},}\ }\href {\doibase 10.1063/1.1739296}
  {\bibfield  {journal} {\bibinfo  {journal} {J. Math. Phys.}\ }\textbf
  {\bibinfo {volume} {45}},\ \bibinfo {pages} {2526--2539} (\bibinfo {year}
  {2004})}\BibitemShut {NoStop}%
\bibitem [{\citenamefont {Carmeli}, \citenamefont {Heinonen},\ and\
  \citenamefont {Toigo}(2005)}]{CHT2005}%
  \BibitemOpen
  \bibfield  {author} {\bibinfo {author} {\bibfnamefont {C.}~\bibnamefont
  {Carmeli}}, \bibinfo {author} {\bibfnamefont {T.}~\bibnamefont {Heinonen}}, \
  and\ \bibinfo {author} {\bibfnamefont {A.}~\bibnamefont {Toigo}},\ }\bibfield
   {title} {\enquote {\bibinfo {title} {On the coexistence of position and
  momentum observables},}\ }\href {\doibase 10.1088/0305-4470/38/23/012}
  {\bibfield  {journal} {\bibinfo  {journal} {J. Phys. A}\ }\textbf {\bibinfo
  {volume} {38}},\ \bibinfo {pages} {5253--5266} (\bibinfo {year}
  {2005})}\BibitemShut {NoStop}%
\bibitem [{\citenamefont {Simon}(1998)}]{Simon1998}%
  \BibitemOpen
  \bibfield  {author} {\bibinfo {author} {\bibfnamefont {B.}~\bibnamefont
  {Simon}},\ }\bibfield  {title} {\enquote {\bibinfo {title} {The classical
  moment problem as a self-adjoint finite difference operator},}\ }\href@noop
  {} {\bibfield  {journal} {\bibinfo  {journal} {Advances in Mathematics}\
  }\textbf {\bibinfo {volume} {137}},\ \bibinfo {pages} {82--203} (\bibinfo
  {year} {1998})}\BibitemShut {NoStop}%
\bibitem [{\citenamefont {Villani}(2009)}]{Villani}%
  \BibitemOpen
  \bibfield  {author} {\bibinfo {author} {\bibfnamefont {C.}~\bibnamefont
  {Villani}},\ }\href@noop {} {\emph {\bibinfo {title} {Optimal {T}ransport:
  {O}ld and {N}ew}}}\ (\bibinfo  {publisher} {Springer},\ \bibinfo {year}
  {2009})\BibitemShut {NoStop}%
\bibitem [{\citenamefont {Champion}, \citenamefont {Pascale},\ and\
  \citenamefont {Juutinen}(2008)}]{Champion2008}%
  \BibitemOpen
  \bibfield  {author} {\bibinfo {author} {\bibfnamefont {T.}~\bibnamefont
  {Champion}}, \bibinfo {author} {\bibfnamefont {L.~D.}\ \bibnamefont
  {Pascale}}, \ and\ \bibinfo {author} {\bibfnamefont {P.}~\bibnamefont
  {Juutinen}},\ }\bibfield  {title} {\enquote {\bibinfo {title} {The
  {$\infty$}-{W}asserstein distance: local solutions and existence of optimal
  transport maps},}\ }\href {\doibase 10.1137/07069938X} {\bibfield  {journal}
  {\bibinfo  {journal} {SIAM J. Math. Anal.}\ }\textbf {\bibinfo {volume}
  {40}},\ \bibinfo {pages} {1--20} (\bibinfo {year} {2008})}\BibitemShut
  {NoStop}%
\bibitem [{\citenamefont {Jylh\"a}(2013)}]{Jylha2013}%
  \BibitemOpen
  \bibfield  {author} {\bibinfo {author} {\bibfnamefont {P.}~\bibnamefont
  {Jylh\"a}},\ }\href@noop {} {\enquote {\bibinfo {title} {The $l^\infty$
  optimal transport: infinite cyclical monotonicity and the existence of
  optimal transport maps},}\ }\href {\doibase 10.1007/s00526-014-0713-1} {\bibfield  {journal} {\bibinfo
  {journal} {Calculus of Variations and Partial Differential Equations}\ }%\textbf {\bibinfo {volume} {28}},\ 
  \bibinfo
  {pages} {1--24} (\bibinfo {year} {2014})}\BibitemShut {NoStop}%
\bibitem [{\citenamefont {Wiseman}(1998)}]{Wiseman1998}%
  \BibitemOpen
  \bibfield  {author} {\bibinfo {author} {\bibfnamefont {H.}~\bibnamefont
  {Wiseman}},\ }\bibfield  {title} {\enquote {\bibinfo {title} {Extending
  Heisenberg's measurement-disturbance relation to the twin-slit case},}\
  }\href {\doibase 10.1023/A:1018889508782} {\bibfield  {journal} {\bibinfo
  {journal} {Found. Phys.}\ }\textbf {\bibinfo {volume} {28}},\ \bibinfo
  {pages} {1619--1631} (\bibinfo {year} {1998})}\BibitemShut {NoStop}%
\bibitem [{\citenamefont {Carlen}\ and\ \citenamefont {Gangbo}(2003)}]{Carlen}%
  \BibitemOpen
  \bibfield  {author} {\bibinfo {author} {\bibfnamefont {E.~A.}\ \bibnamefont
  {Carlen}}\ and\ \bibinfo {author} {\bibfnamefont {W.}~\bibnamefont
  {Gangbo}},\ }\bibfield  {title} {\enquote {\bibinfo {title} {Constrained
  steepest descent in the 2-{W}asserstein metric},}\ }\href {\doibase
  10.4007/annals.2003.157.807} {\bibfield  {journal} {\bibinfo  {journal} {Ann.
  of Math. (2)}\ }\textbf {\bibinfo {volume} {157}},\ \bibinfo {pages}
  {807--846} (\bibinfo {year} {2003})}\BibitemShut {NoStop}%
\bibitem [{\citenamefont {Kiukas}\ and\ \citenamefont {Lahti}(2008)}]{KL2008b}%
  \BibitemOpen
  \bibfield  {author} {\bibinfo {author} {\bibfnamefont {J.}~\bibnamefont
  {Kiukas}}\ and\ \bibinfo {author} {\bibfnamefont {P.}~\bibnamefont {Lahti}},\
  }\bibfield  {title} {\enquote {\bibinfo {title} {A note on the measurement of
  phase space observables with an eight-port homodyne detector},}\ }\href
  {\doibase 10.1080/09500340701864718} {\bibfield  {journal} {\bibinfo
  {journal} {J. Modern Opt.}\ }\textbf {\bibinfo {volume} {55}},\ \bibinfo
  {pages} {1891--1898} (\bibinfo {year} {2008})}\BibitemShut {NoStop}%
\bibitem [{\citenamefont {Busch}\ and\ \citenamefont {Pearson}(2007)}]{BuPe07}%
  \BibitemOpen
  \bibfield  {author} {\bibinfo {author} {\bibfnamefont {P.}~\bibnamefont
  {Busch}}\ and\ \bibinfo {author} {\bibfnamefont {D.~B.}\ \bibnamefont
  {Pearson}},\ }\bibfield  {title} {\enquote {\bibinfo {title} {Universal
  joint-measurement uncertainty relation for error bars},}\ }\href {\doibase
  10.1063/1.2759831} {\bibfield  {journal} {\bibinfo  {journal} {J. Math.
  Phys.}\ }\textbf {\bibinfo {volume} {48}},\ \bibinfo {pages} {082103}
  (\bibinfo {year} {2007})}\BibitemShut {NoStop}%
\bibitem [{\citenamefont {Cowling}\ and\ \citenamefont
  {Price}(1984)}]{Cowling}%
  \BibitemOpen
  \bibfield  {author} {\bibinfo {author} {\bibfnamefont {M.~G.}\ \bibnamefont
  {Cowling}}\ and\ \bibinfo {author} {\bibfnamefont {J.~F.}\ \bibnamefont
  {Price}},\ }\bibfield  {title} {\enquote {\bibinfo {title} {Bandwidth versus
  time concentration: the {H}eisenberg-{P}auli-{W}eyl inequality},}\ }\href
  {\doibase 10.1137/0515012} {\bibfield  {journal} {\bibinfo  {journal} {SIAM
  J. Math. Anal.}\ }\textbf {\bibinfo {volume} {15}},\ \bibinfo {pages}
  {151--165} (\bibinfo {year} {1984})}\BibitemShut {NoStop}%
\bibitem [{\citenamefont {Folland}\ and\ \citenamefont
  {Sitaram}(1997)}]{Folland}%
  \BibitemOpen
  \bibfield  {author} {\bibinfo {author} {\bibfnamefont {G.~B.}\ \bibnamefont
  {Folland}}\ and\ \bibinfo {author} {\bibfnamefont {A.}~\bibnamefont
  {Sitaram}},\ }\bibfield  {title} {\enquote {\bibinfo {title} {The uncertainty
  principle: a mathematical survey},}\ }\href {\doibase 10.1007/BF02649110}
  {\bibfield  {journal} {\bibinfo  {journal} {J. Fourier Anal. Appl.}\ }\textbf
  {\bibinfo {volume} {3}},\ \bibinfo {pages} {207--238} (\bibinfo {year}
  {1997})}\BibitemShut {NoStop}%
\bibitem [{\citenamefont {Simon}(1979)}]{Simon1979}%
  \BibitemOpen
  \bibfield  {author} {\bibinfo {author} {\bibfnamefont {B.}~\bibnamefont
  {Simon}},\ }\href@noop {} {\emph {\bibinfo {title} {Trace ideals and their
  applications}}},\ \bibinfo {series} {London Mathematical Society Lecture Note
  Series}, Vol.~\bibinfo {volume} {35}\ (\bibinfo  {publisher} {Cambridge
  University Press},\ \bibinfo {address} {Cambridge},\ \bibinfo {year} {1979})\
  pp.\ \bibinfo {pages} {viii+134}\BibitemShut {NoStop}%
\bibitem [{\citenamefont {Davies}(1980)}]{Davies1980}%
  \BibitemOpen
  \bibfield  {author} {\bibinfo {author} {\bibfnamefont {E.~B.}\ \bibnamefont
  {Davies}},\ }\href@noop {} {\emph {\bibinfo {title} {One-parameter
  semigroups}}},\ \bibinfo {series} {London Mathematical Society Monographs},
  Vol.~\bibinfo {volume} {15}\ (\bibinfo  {publisher} {Academic Press Inc.
  [Harcourt Brace Jovanovich Publishers]},\ \bibinfo {address} {London},\
  \bibinfo {year} {1980})\ pp.\ \bibinfo {pages} {viii+230}\BibitemShut
  {NoStop}%
\bibitem [{\citenamefont {Keyl}, \citenamefont {Schlingemann},\ and\
  \citenamefont {Werner}(2003)}]{KSW03}%
  \BibitemOpen
  \bibfield  {author} {\bibinfo {author} {\bibfnamefont {M.}~\bibnamefont
  {Keyl}}, \bibinfo {author} {\bibfnamefont {D.}~\bibnamefont {Schlingemann}},
  \ and\ \bibinfo {author} {\bibfnamefont {R.~F.}\ \bibnamefont {Werner}},\
  }\bibfield  {title} {\enquote {\bibinfo {title} {Infinitely entangled
  states},}\ }\href@noop {} {\bibfield  {journal} {\bibinfo  {journal} {Quant.
  Inform. Comput.}\ }\textbf {\bibinfo {volume} {3}},\ \bibinfo {pages}
  {281--306} (\bibinfo {year} {2003})},\ \Eprint
  {http://arxiv.org/abs/quant-ph/0212014} {quant-ph/0212014} \BibitemShut
  {NoStop}%
\bibitem [{\citenamefont {Dunford}\ and\ \citenamefont
  {Schwartz}(1988)}]{Dunford}%
  \BibitemOpen
  \bibfield  {author} {\bibinfo {author} {\bibfnamefont {N.}~\bibnamefont
  {Dunford}}\ and\ \bibinfo {author} {\bibfnamefont {J.~T.}\ \bibnamefont
  {Schwartz}},\ }\href@noop {} {\emph {\bibinfo {title} {Linear {O}perators.
  {P}art {I}}}},\ Wiley Classics Library\ (\bibinfo  {publisher} {John Wiley \&
  Sons Inc.},\ \bibinfo {address} {New York},\ \bibinfo {year} {1988})\ pp.\
  \bibinfo {pages} {xiv+858},\ \bibinfo {note} {{G}eneral {T}heory, With the
  assistance of William G. Bade and Robert G. Bartle, Reprint of the 1958
  original, A Wiley-Interscience Publication}\BibitemShut {NoStop}%
\bibitem [{\citenamefont {Bourbaki}(1966)}]{BourbakiTop}%
  \BibitemOpen
  \bibfield  {author} {\bibinfo {author} {\bibfnamefont {N.}~\bibnamefont
  {Bourbaki}},\ }\href@noop {} {\emph {\bibinfo {title} {General {T}opology,
  {P}art I}}}\ (\bibinfo  {publisher} {Herrmann},\ \bibinfo {address} {Paris},\
  \bibinfo {year} {1966})\BibitemShut {NoStop}%
\bibitem [{\citenamefont {Beckner}(1975)}]{Beckner}%
  \BibitemOpen
  \bibfield  {author} {\bibinfo {author} {\bibfnamefont {W.}~\bibnamefont
  {Beckner}},\ }\bibfield  {title} {\enquote {\bibinfo {title} {Inequalities in
  {F}ourier analysis},}\ }\href@noop {} {\bibfield  {journal} {\bibinfo
  {journal} {Ann. of Math. (2)}\ }\textbf {\bibinfo {volume} {102}},\ \bibinfo
  {pages} {159--182} (\bibinfo {year} {1975})}\BibitemShut {NoStop}%
\bibitem [{\citenamefont {Schr\"odinger}(1930)}]{Schro1930}%
  \BibitemOpen
  \bibfield  {author} {\bibinfo {author} {\bibfnamefont {E.}~\bibnamefont
  {Schr\"odinger}},\ }\bibfield  {title} {\enquote {\bibinfo {title} {Zum
  {H}eisenbergschen {U}nsch\"arfeprinzip},}\ }\href@noop {} {\bibfield
  {journal} {\bibinfo  {journal} {Berliner Berichte}\ \textbf {\bibinfo {volume} {19}},\ \bibinfo {pages}
  {296--303}} (\bibinfo {year} {1930})}\BibitemShut {NoStop}%
\bibitem [{\citenamefont {Schultz}, \citenamefont {Kiukas},\ and\ \citenamefont
  {Werner}(2013)}]{Schultz2013}%
  \BibitemOpen
  \bibfield  {author} {\bibinfo {author} {\bibfnamefont {J.}~\bibnamefont
  {Schultz}}, \bibinfo {author} {\bibfnamefont {J.}~\bibnamefont {Kiukas}}, \
  and\ \bibinfo {author} {\bibfnamefont {R.~F.}\ \bibnamefont {Werner}},\
  }\href@noop {} {\enquote {\bibinfo {title} {Quantum harmonic analysis on
  locally compact abelian groups},}\ }\bibinfo {howpublished} {in preparation}
  (\bibinfo {year} {2013})\BibitemShut {NoStop}%
\bibitem [{\citenamefont {Sachse}(2013)}]{Sachse}%
  \BibitemOpen
  \bibfield  {author} {\bibinfo {author} {\bibfnamefont {O.}~\bibnamefont
  {Sachse}},\ }\href@noop {} {\enquote {\bibinfo {title} {Unsch\"arferelation
  f\"ur diskrete {O}bservable},}\ }\bibinfo {howpublished} {B.Sc. Thesis,
  Hannover} (\bibinfo {year} {2013})\BibitemShut {NoStop}%
\bibitem [{\citenamefont {Carmeli}, \citenamefont {Heinosaari},\ and\
  \citenamefont {Toigo}(2011)}]{Carmeli_etal2011}%
  \BibitemOpen
  \bibfield  {author} {\bibinfo {author} {\bibfnamefont {C.}~\bibnamefont
  {Carmeli}}, \bibinfo {author} {\bibfnamefont {T.}~\bibnamefont {Heinosaari}},
  \ and\ \bibinfo {author} {\bibfnamefont {A.}~\bibnamefont {Toigo}},\
  }\bibfield  {title} {\enquote {\bibinfo {title} {Sequential measurements of
  conjugate observables},}\ }\href {\doibase 10.1088/1751-8113/44/28/285304}
  {\bibfield  {journal} {\bibinfo  {journal} {J. Phys. A}\ }\textbf {\bibinfo
  {volume} {44}},\ \bibinfo {pages} {285304, 24} (\bibinfo {year}
  {2011})}\BibitemShut {NoStop}%
\bibitem [{\citenamefont {Buscemi}\ \emph {et~al.}(2013)\citenamefont
  {Buscemi}, \citenamefont {Hall}, \citenamefont {Ozawa},\ and\ \citenamefont
  {Wilde}}]{Buscemi}%
  \BibitemOpen
  \bibfield  {author} {\bibinfo {author} {\bibfnamefont {F.}~\bibnamefont
  {Buscemi}}, \bibinfo {author} {\bibfnamefont {M.~J.~W.}\ \bibnamefont
  {Hall}}, \bibinfo {author} {\bibfnamefont {M.}~\bibnamefont {Ozawa}}, \ and\
  \bibinfo {author} {\bibfnamefont {M.~M.}\ \bibnamefont {Wilde}},\ }\bibfield
  {title} {\enquote {\bibinfo {title} {Noise and disturbance in quantum
  measurements: an information-theoretic approach},}\ }
  \href {\doibase
  10.1103/PhysRevLett.112.050401} {\bibfield  {journal} {\bibinfo  {journal}
  {Phys. Rev. Lett.}\ }\textbf {\bibinfo {volume} {112}},\ \bibinfo {pages}
  {050401} (\bibinfo {year} {2014})}
  \BibitemShut {NoStop}%
\bibitem [{\citenamefont {Coles}\ and\ \citenamefont
  {Furrer}(2013)}]{Coles2013}%
  \BibitemOpen
  \bibfield  {author} {\bibinfo {author} {\bibfnamefont {P.}~\bibnamefont
  {Coles}}\ and\ \bibinfo {author} {\bibfnamefont {F.}~\bibnamefont {Furrer}},\
  }\bibfield  {title} {\enquote {\bibinfo {title} {Entropic {F}ormulation of
  {H}eisenberg's {e}rror-{d}isturbance {r}elation},}\ }\href
  {http://arxiv.org/abs/1311.7637} {\  (\bibinfo {year} {2013})},\ \Eprint
  {http://arxiv.org/abs/arXiv:1311.7637} {arXiv:1311.7637} \BibitemShut
  {NoStop}%
\bibitem [{\citenamefont {Busch}, \citenamefont {Lahti},\ and\ \citenamefont
  {Werner}(2013{\natexlab{c}})}]{BLW2013qubit}%
  \BibitemOpen
  \bibfield  {author} {\bibinfo {author} {\bibfnamefont {P.}~\bibnamefont
  {Busch}}, \bibinfo {author} {\bibfnamefont {P.}~\bibnamefont {Lahti}}, \ and\
  \bibinfo {author} {\bibfnamefont {R.}~\bibnamefont {Werner}},\ }\bibfield
  {title} {\enquote {\bibinfo {title} {Heisenberg {u}ncertainty for {q}ubit
  {m}easurements},}\ }
  \href {\doibase
  10.1103/PhysRevA.89.012129} {\bibfield  {journal} {\bibinfo  {journal}
  {Phys. Rev. A}\ }\textbf {\bibinfo {volume} {89}},\ \bibinfo {pages}
  {012129} (\bibinfo {year} {2014})}
\BibitemShut
  {NoStop}%
\bibitem [{\citenamefont {Reeb}(2013)}]{Reeb}%
  \BibitemOpen
  \bibfield  {author} {\bibinfo {author} {\bibfnamefont {D.}~\bibnamefont
  {Reeb}},\ }\href@noop {} {}\bibinfo {howpublished} {Private communication,
  IMS Singapore, Summer 2013} (\bibinfo {year} {2013})\BibitemShut {NoStop}%
\bibitem [{\citenamefont {Appleby}(1998)}]{Appleby1998b}%
  \BibitemOpen
  \bibfield  {author} {\bibinfo {author} {\bibfnamefont {D.~M.}\ \bibnamefont
  {Appleby}},\ }\bibfield  {title} {\enquote {\bibinfo {title} {Error
  principle},}\ }\href@noop {} {\bibfield  {journal} {\bibinfo  {journal}
  {International Journal of Theoretical Physics}\ }\textbf {\bibinfo {volume}
  {37}},\ \bibinfo {pages} {2557--2572} (\bibinfo {year} {1998})}\BibitemShut
  {NoStop}%
\bibitem [{\citenamefont {Maassen}\ and\ \citenamefont
  {Uffink}(1988)}]{Maassen_1988}%
  \BibitemOpen
  \bibfield  {author} {\bibinfo {author} {\bibfnamefont {H.}~\bibnamefont
  {Maassen}}\ and\ \bibinfo {author} {\bibfnamefont {J.~B.~M.}\ \bibnamefont
  {Uffink}},\ }\bibfield  {title} {\enquote {\bibinfo {title} {Generalized
  entropic uncertainty relations},}\ }\href {\doibase
  10.1103/PhysRevLett.60.1103} {\bibfield  {journal} {\bibinfo  {journal}
  {Phys. Rev. Lett.}\ }\textbf {\bibinfo {volume} {60}},\ \bibinfo {pages}
  {1103--1106} (\bibinfo {year} {1988})}\BibitemShut {NoStop}%
\end{thebibliography}
%\end{document}

%merlin.mbs aipnum4-1.bst 2010-07-25 4.21a (PWD, AO, DPC) hacked
%Control: key (0)
%Control: author (8) initials jnrlst
%Control: editor formatted (1) identically to author
%Control: production of article title (0) allowed
%Control: page (1) range
%Control: year (1) truncated
%Control: production of eprint (0) enabled
%

\end{document}